
\documentclass[manuscript,screen,acmlarge]{}


\AtBeginDocument{%
  \providecommand\BibTeX{{%
    \normalfont B\kern-0.5em{\scshape i\kern-0.25em b}\kern-0.8em\TeX}}}

\setcopyright{acmcopyright}
\copyrightyear{2022}
\acmYear{2022}
\acmDOI{TODO}



\renewcommand{\vec}[1]{\mathbf{#1}}

\usepackage{physics}

\newcommand{\R}{\mathbb{R}}
\newcommand{\N}{\mathbb{N}}
\newcommand{\Q}{\mathbb{Q}}
\newcommand{\Z}{\mathbb{Z}}

\newcommand{\dom}{{\rm dom} \;}

\newcommand{\calC}{\mathcal{C}}

\newcommand{\pair}[2]{\left\langle #1 , #2 \right\rangle}

\newcommand{\setr}[2]{\left\{\ #1 \ \left|\ #2 \right. \ \right\}}
\newcommand{\setl}[2]{\left\{\ \left. #1 \ \right|\ #2 \ \right\}}

\newcommand{\set}[1]{\left\{#1\right\}}
\newcommand{\seq}[1]{\left(#1\right)}

\newcommand{\calS}{\mathcal{S}}

\newcommand{\ignore}[1]{}

\usepackage[normalem]{ulem}
\usepackage{cancel}
\usepackage{xcolor}

\DeclareMathOperator{\sgn}{sgn}

\newcommand{\calM}{\mathcal{M}}

\newcommand{\hf}{\hat{f}}

\renewcommand{\va}{\vec{a}}
\newcommand{\vc}{\vec{c}}
\newcommand{\vd}{\vec{d}}
\newcommand{\ve}{\vec{e}}
\newcommand{\vr}{\vec{r}}

\newcommand{\vq}{\vec{q}}

\newcommand{\vo}{\vec{o}}
\newcommand{\vM}{\vec{M}}
\newcommand{\vA}{\vec{A}}
\newcommand{\vF}{\vec{F}}
\newcommand{\vp}{\vec{p}}
\newcommand{\vi}{\vec{i}}
\newcommand{\vx}{\vec{x}}
\newcommand{\vy}{\vec{y}}

\newcommand{\vz}{\vec{z}}
\newcommand{\vf}{\vec{f}}
\newcommand{\vg}{\vec{g}}
\newcommand{\vh}{\vec{h}}
\newcommand{\vH}{\vec{H}}
\newcommand{\vecv}{\vec{v}}
\renewcommand{\vu}{\vec{u}}
\newcommand{\vrho}{{\boldsymbol\rho}}
\newcommand{\vgamma}{{\boldsymbol\gamma}}
\newcommand{\vsigma}{{\boldsymbol\sigma}}

\renewcommand{\vb}{\vec{b}}
\newcommand{\bfr}{\mathbf{r}}
\newcommand{\bfp}{\mathbf{p}}

\newcommand{\Rp}{\mathbb{R}_{\geq 0}}

\newcommand{\reduce}{\ensuremath{\mathrm{red}}}

\newcommand{\producible}{\ensuremath{\mathsf{P}}}

\def\longrightharpoonup{\relbar\joinrel\rightharpoonup}
\def\longleftharpoondown{\leftharpoondown\joinrel\relbar}
\def\longrightleftharpoons{\mathop{\vcenter{\hbox{\ooalign{\raise1pt\hbox{$\longrightharpoonup\joinrel$}\crcr\lower1pt\hbox{$\longleftharpoondown\joinrel$}}}}}}
\def\rxn{\mathop{\rightarrow}\limits}  

\def\revrxn{\mathop{\rightleftharpoons}\limits}

\newcommand{\slto}{\to^1}      
\newcommand{\segto}{\rightsquigarrow}      

\usepackage[textsize=tiny]{todonotes}
\setlength{\marginparwidth}{2.45cm}
\setuptodonotes{fancyline}

\usepackage{amsmath}
\usepackage{wasysym}   
\usepackage{ifpdf}
\usepackage{comment}
\usepackage{import}
\usepackage{caption}
\usepackage{subcaption}
\usepackage{wrapfig}
\usepackage{color}
\usepackage{url}
\usepackage{amsfonts}

\usepackage[capitalize]{cleveref}
\crefname{claim}{claim}{claims}

\crefname{lem}{Lemma}{Lemmas}
\crefname{defn}{Definition}{Definitions}
\crefname{thm}{Theorem}{Theorems}


\newtheorem{theorem}{Theorem}
\numberwithin{equation}{section}
\numberwithin{theorem}{section}
\newtheorem{definition}[theorem]{Definition}{\bfseries}{\itshape}
\newtheorem{corollary}[theorem]{Corollary}{\bfseries}{\itshape}
\newtheorem{lemma}[theorem]{Lemma}{\bfseries}{\itshape}
\newtheorem{proposition}[theorem]{Proposition}{\bfseries}{\itshape}
\newtheorem{observation}[theorem]{Observation}{\bfseries}{\itshape}
\newtheorem{thm}[theorem]{Theorem}{\bfseries}{\itshape}
\newtheorem{defn}[theorem]{Definition}{\bfseries}{\itshape}
\newtheorem{cor}[theorem]{Corollary}{\bfseries}{\itshape}
\newtheorem{lem}[theorem]{Lemma}{\bfseries}{\itshape}
{\bfseries}{\itshape}
\newtheorem{obs}[theorem]{Observation}{\bfseries}{\itshape}

\ccsdesc[500]{Theory of computation~Models of computation}
\ccsdesc[300]{Computer systems organization~Analog computers}

\keywords{Chemical Reaction Networks, Mass-Action, Analog Computation, Piecewise-Linear}

\vfuzz2pt 
\hfuzz2pt 


\begin{document}

\title{Rate-Independent Computation in Continuous Chemical Reaction Networks}

\author{Ho-Lin Chen}
\email{holinchen@ntu.edu.tw}
\affiliation{%
  \institution{National Taiwan University}
  \streetaddress{MD-718}
  \city{Taipei}
  \country{Taiwan}
}

\author{David Doty}
\email{doty@ucdavis.edu}
\affiliation{%
  \institution{University of California, Davis}
  \streetaddress{One Shields Ave}
  \city{Davis}
  \country{USA}
}

\author{Wyatt Reeves}
\email{wreeves@math.harvard.edu}
\affiliation{%
  \institution{Harvard University}
  \streetaddress{1 Oxford St}
  \city{Cambridge}
  \country{USA}
}

\author{David Soloveichik}
\email{david.soloveichik@utexas.edu}
\affiliation{%
  \institution{University of Texas at Austin}
  \streetaddress{2501 Speedway}
  \city{Austin}
  \country{USA}
}


\begin{abstract}
  Understanding the algorithmic behaviors that are \emph{in principle} realizable in a chemical system is necessary for a rigorous understanding of the design principles of biological regulatory networks.
  Further, advances in synthetic biology herald the time when we will be able to rationally engineer complex chemical systems, and when idealized formal models will become blueprints for engineering.

  Coupled chemical interactions in a well-mixed solution are commonly formalized as chemical reaction networks (CRNs).
  However, despite the widespread use of CRNs in the natural sciences,
the range of computational behaviors exhibited by CRNs is not well understood.
  Here we study the following problem: what functions $f:\R^k \to \R$ can be computed by a 
  CRN, in which the CRN eventually produces the correct amount of the ``output'' molecule, no matter the rate at which reactions proceed?
    This captures a previously unexplored, but very natural class of computations: for example, the reaction $X_1 + X_2 \to Y$ can be thought to compute the function $y = \min(x_1, x_2)$.
    Such a CRN is robust in the sense that it is correct whether its evolution is governed by the standard model of mass-action kinetics, alternatives such as Hill-function or Michaelis-Menten kinetics, or other arbitrary models of chemistry that respect the (fundamentally digital) stoichiometric constraints (what are the reactants and products?).

We develop a reachability relation based on a broad notion of ``what could happen'' if reaction rates can vary arbitrarily over time.
Using reachability, we define \emph{stable computation} analogously to probability 1 computation in distributed computing,
and connect it with a seemingly stronger notion of rate-independent computation based on convergence in the limit $t \to \infty$ under a wide class of generalized rate laws.
Besides the direct mapping of a concentration to a nonnegative analog value,
we also consider the ``dual-rail representation'' that can represent negative values as the difference of two concentrations and allows the composition of CRN modules. 
We prove that a function is rate-independently computable if and only if it is piecewise linear (with rational coefficients) and continuous (dual-rail representation), or non-negative with discontinuities occurring only when some inputs switch from zero to positive (direct representation). 
The many contexts where continuous piecewise linear functions are powerful targets for implementation, combined with the systematic construction we develop for computing these functions,
demonstrate the potential of rate-independent chemical computation.
\end{abstract}

\maketitle


\section{Introduction}
\label{sec-intro}

Understanding the dynamic behaviors that are, in principle, achievable with chemical species interacting over time is crucial for engineering of complex molecular systems capable of diverse and robust behaviors.
The exploration of this space also helps to elucidate the constraints imposed upon biology by the laws of chemistry.
The natural language for describing the interactions of molecular species in a well-mixed solution is that of chemical reaction networks (CRNs), i.e., finite sets of chemical reactions such as $A + B \to A + C$.
The intuitive meaning of this expression is that a unit of chemical species $A$ reacts with a unit of chemical species $B$, producing a unit of a new chemical species $C$ and regenerating a unit of $A$ back.
Typically (in mass-action kinetics) the rate with which this occurs is proportional to the product of the amounts of the reactants $A$ and $B$.

Informally speaking we can identify two sources of computational power in CRNs.
First, the reaction \emph{stoichiometry} transforms some specific ratios of reactants to products.
For example, $X \rxn 2Y$ makes two units of $Y$ for every unit of $X$.
Second, in mass-action kinetics the reaction \emph{rate laws} effectively perform multiplication of the reactant concentrations.
In this work, we seek to disentangle the contributions of these two computational ingredients by focusing on the computational power of stoichiometry alone.
Besides fundamental scientific interest,
such rate-independent computation may be significantly easier to engineer than computation relying on rates (see \cref{sec:chemicalmotivation}).
Importantly, stoichiometry is robust---not requiring the tuning of reaction conditions, nor even the assumption that the solution is well-mixed.

In the \emph{discrete} model of chemical kinetics (see \Cref{sec:chemicalmotivation} for the distinction between the discrete and continuous models),
rate-independence is formally related to probability $1$ computation with passively mobile (i.e., interacting randomly) agents in distributed computing (the ``population protocols'' model~\cite{angluin2006passivelymobile,aspnes2007introduction}, see \Cref{sec:prev}). 
However, the continuous model of chemistry is most widely used, 
and is more applicable for engineering chemical computation where working with bulk concentrations remains the state of the art (see \Cref{sec:chemicalmotivation}).
This paper
formally articulates rate-independence in continuous CRNs and characterizes the computational power of stoichiometry in this model.

In the continuous setting
the amount of a species is a nonnegative real number representing its concentration (average count per unit volume).\footnote{Although the finite density of matter physically restricts what the largest concentration of any species could realistically be,
standard models of chemical kinetics focus on systems that are far from this bound,
mathematically allowing concentrations to be arbitrarily large.}
We characterize the class of real-valued functions computable by CRNs when reaction rates are permitted to vary arbitrarily 
(possibly adversarially)
over time.
Any computation in this setting must rely on stoichiometry alone.
How can rate laws ``preserve stoichiometry'' while varying ``arbitrarily over time''?
Formally, preserving stoichiometry means that if we reach state $\vd$ from state $\vc$, then $\vd = \vc + \vM \vu$ for 
some non-negative vector $\vu$ of reaction fluxes,
where the CRN's stoichiometry matrix $\vM$ maps those fluxes to the changes in species concentrations they cause.
(For example, flux $0.5$ of reaction $C+X \to C+3Y$ changes the concentrations of $C,X,Y$ respectively by $0,-0.5,+1.5.$)
Subject to this constraint, the widest class of trajectories that still satisfies the intuitive meaning of the reaction semantics can be described informally as follows:
(1) concentrations cannot become negative;
(2) all reactants must be present when a reaction occurs (e.g.,~if a reaction uses a catalyst\footnote{A species acts catalytically in a reaction if it is both a reactant and product: e.g.~$C$ in reaction $A + C \to B + C$. Note that executing this reaction without $C$ does not by itself violate condition (1).}, then the catalyst must be present);
(3) the causal relationships between the production of species is respected (e.g., if producing $A$ requires $B$ and producing $B$ requires $A$, then neither can ever be produced if both are absent)\footnote{See~\Cref{sec:mass-action-reachability} for examples showing that in the continuous setting conditions (2) and (3) are not mutually redundant.}.
This notion of ``allowed trajectories'' is formalized as Definition~\ref{defn:valid-rate-schedule}.

\begin{figure}
\centering 
\includegraphics[width=0.7\textwidth]{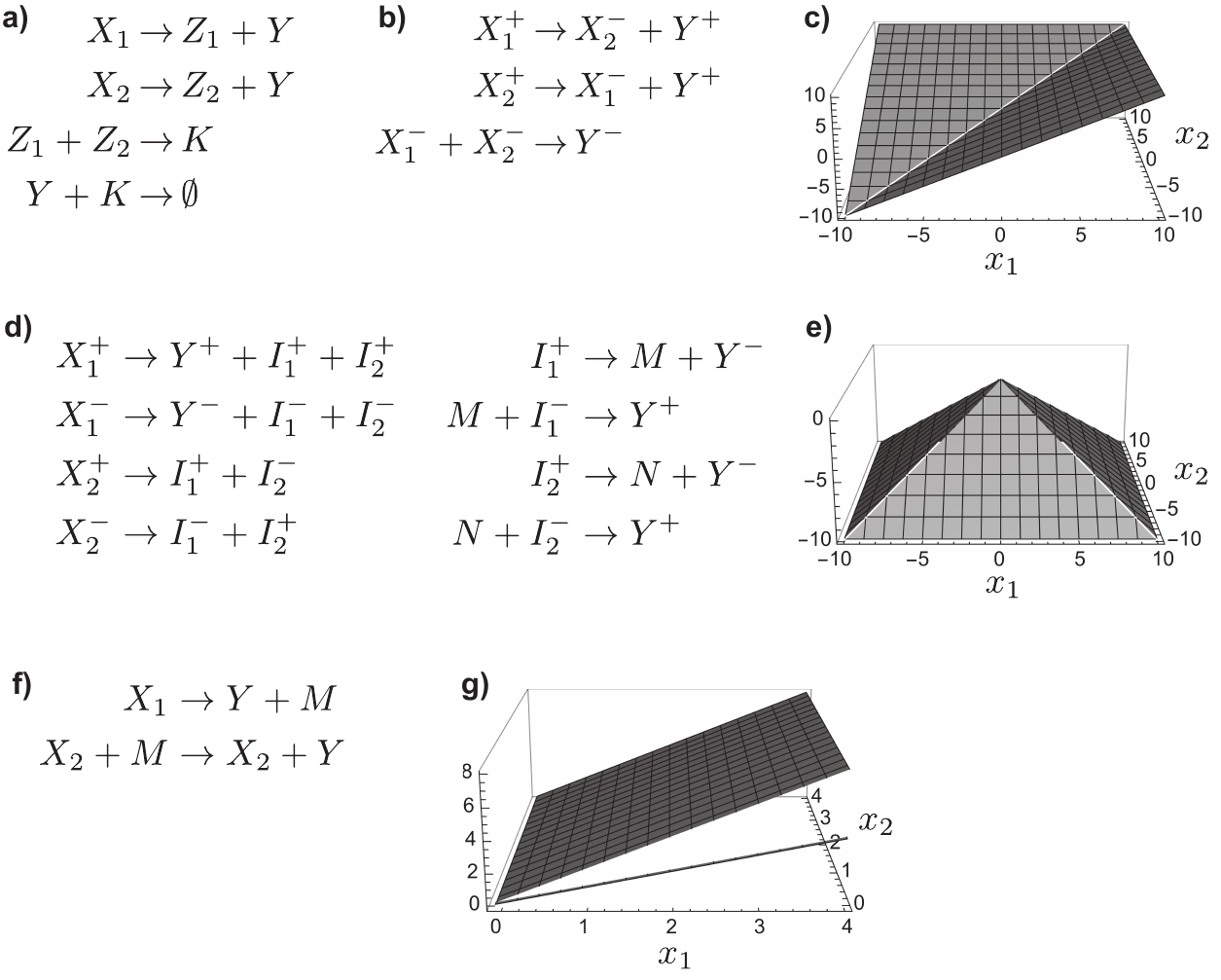}
\caption{Examples of rate-independent computation with chemical reaction networks. 
(a) Direct and (b) dual-rail CRNs computing the function $f(x_1, x_2) = \max(x_1,x_2)$ plotted in (c).
(d) Dual-rail CRN computing the function
$f(x_1, x_2) = x_1+\min (-x_1,x_2)-\max (x_1,x_2)$ plotted in (e).
(f) Direct CRN computing the (discontinuous but still positive-continuous, see \Cref{def:positive-continuous}) function $f(x_1, x_2) = x_1$ \emph{if $x_2 = 0$ and $2 x_1$ if $x_2 > 0$} plotted in (g).
}
\label{fig:ex} 
\end{figure}

The example shown in \cref{fig:ex}(a) illustrates the style of computation studied here.
Let $f: \Rp^2 \to \Rp$ be the max function $f(x_1, x_2) = \max(x_1,x_2)$ restricted to non-negative $x_1$ and $x_2$.
The CRN of \cref{fig:ex}(a) computes this function in the following sense.
Inputs $x_1$  and $x_2$ are given as initial concentrations of input species $X_1$ and $X_2$.
Then the CRN converges to $f$'s output value $\max(x_1,x_2)$ of species $Y$, under a very wide interpretation of rate laws.
Intuitively, the first two reactions must eventually produce $x_1 + x_2$ of $Y$, and $x_1$, $x_2$ of $Z_1$ and $Z_2$, respectively.
This is enforced by the stoichiometric constraint that the amount of $Z_1$ and $Y$ produced is equal to the amount of $X_1$ consumed (and analogously for the second reaction).
Stoichiometric constraints require the third reaction to produce the amount of $K$ that is the minimum of the amount of $Z_1$ and $Z_2$ eventually produced in the first two reactions.
Thus $\min(x_1,x_2)$ of $K$ is eventually produced.
Therefore, the fourth reaction eventually consumes $\min(x_1,x_2)$ molecules of $Y$ leaving $x_1 + x_2 - \min(x_1 , x_2) = \max(x_1 , x_2)$ of $Y$ behind.
We can imagine an adversary pushing flux through these four reactions in any devious stratagem (i.e., arbitrary rates),
yet unable to prevent the CRN from converging to the correct output, so long as applicable reactions must eventually occur.

We further consider the natural extension of such computation to handle negative real values.
The example shown in \cref{fig:ex}(b) computes $f(x_1, x_2) = \max(x_1,x_2)$ ($f: \R^2 \to \R$), graphed in (c).
In order to handle negative input and output values, we represent the value of each input and output by a pair of species using the so-called ``dual-rail'' representation.
For example, in state $\vc$, $x_1 = \vc(X_1^+)-\vc(X_1^-)$---i.e.~the difference between the concentrations of species $X_1^+$ and $X_1^-$.
Note that when $X_1^-$ and $X_2^-$ are initially absent, the CRN becomes equivalent to the first three reactions of Fig.~\ref{fig:ex}(a) under relabeling of species.
We do not need the last reaction of (a) because the output is represented as the difference of $Y^+$ and $Y^-$ by our convention.
For the argument that the computation is correct even if $X_1^-$ and $X_2^-$ are initially present, we refer the reader to 
the proof of \cref{cor:dual-rail-max-computable} in 
\cref{negative-piecewise-linear-implies-computable}. 

In addition to handling negative values, the dual-rail representation has the benefit of allowing composition.
Specifically, the dual-rail representation allows a CRN to never consume its output species (e.g.~rather than consuming $Y^+$, it can produce $Y^-$). 
This monotonicity in the production of output allows directly composing CRN computations simply by concatenating CRNs and relabeling species (e.g.~to make the output of one be input to the other).
Since the upstream CRN never consumes its output species, the downstream CRN is free to consume them without interfering with the upstream computation. 
Since the class of functions computable by dual-rail CRNs ends up being invariant to whether or not they are allowed to consume their output, our results imply that dual-rail computation is composable without sacrificing computational power (see \cref{subsec:dual-rail-defn}).

\subsection{Summary of Main Results}

Our first contribution is to define a reachability relation that captures the broadest reasonable notion of ``what could happen''  and is of independent interest.
Although concentration trajectories of mass-action kinetics (and other standard rate laws) follow smooth curves, we base the reachability relation on taking simple-to-analyze straight-line paths.
Theorem~\ref{thm:valid-reachable-implies-segment-reachable} shows that this notion of reachability is exactly equivalent to satisfying the three intuitive properties described above,
(1) nonnegative concentrations,
(2) reactions require their reactants present, and
(3) respecting causal relationships between the production of species.
Thus our reachability relation has all reasonable rate laws as special cases; i.e., if any of them can reach a state, so can our reachability relation.

The reachability relation allows us to formally define \emph{stable computation} in \Cref{def:direct-computation},
analogously to similar definitions of probability 1 computation in discrete systems~\cite{angluin2006passivelymobile,CheDotSolNaCo14}.
Stable computation allows us to delineate when a function \emph{cannot} be computed rate-independently. 
Roughly, unless the system is stably computing, then an adversary can always push it ``far'' away from the correct output (\cref{lem:not-stably-compute-implies-can-reach-far-from-correct}),
precluding the system from being reasonably rate-independent.

For the positive direction, the CRN should converge to the correct output no matter the reaction rates.
While a CRN that does not stably compute is not rate-independent (which is sufficient for negative results),
the positive direction does not directly follow from stable computation for continuous systems.
Indeed we show examples of CRNs that stably compute a function by our definition, 
yet under standard mass-action kinetics fail to converge to the correct output; 
see \Cref{sec:fair-computation}.
Instead, we capture a very strong notion of ``convergence despite perturbations'' in \emph{fair computation}
(\Cref{defn:fair-computation}),
based on generalized rate laws (so-called fair rate schedules, \Cref{defn:fair-rate-schedule}).
A CRN that fairly computes converges to the correct output as time $t \to \infty$ under any trajectory satisfying the three intuitive conditions above,
plus an additional requirement that reactions \emph{do} occur when applicable.
(In particular, mass-action (\Cref{cor:mass-action-convergence-stable-computation}) satisfies these conditions, but the range of rate laws satisfying the conditions is much broader.)
Luckily,  stable computation and fair computation can be tightly connected, 
and we show that for a special class of CRNs we call \emph{feedforward} 
(\Cref{defn:feedforward})
the two definitions coincide.
In other words, a feedforward CRN stably computes a function if and only if it fairly computes the function 
(\cref{lem:fair-computation-implies-stable-computation,lem:stable-computation-implies-fair-computation-feedforward}).
We show that all functions stably computable by CRNs are computable by feedforward CRNs 
(\Cref{lem-nonnegative-piecewise-linear-implies-computable,lem-piecewise-linear-implies-computable}),
implying that the class of functions computable by CRNs under either definition---stable computation or fair computation---is identical.
In other words, we can freely work with the simpler definition of stable computation,
knowing that we are actually reasoning about a very general notion of rate-independence.

The above line of reasoning leads us to conclude that exactly the functions that are positive-continuous, piecewise linear (direct) or continuous, piecewise linear (dual-rail) can be rate-independently computed 
(\Cref{thm:computable-characterization,thm:computable-characterization-dual-rail}).
Positive-continuous means that the only discontinuities occur on a ``face'' of $\Rp^k$---i.e., the function may discontinuously jump only at a point where some input goes from $0$ to positive.
We already saw a simple example of a continuous, piecewise linear function (max function, \Cref{fig:ex}(a,b,c)).
\Cref{fig:ex}(d,e) shows a more complex example and a CRN that computes it.
\Cref{fig:ex}(f,g) shows a discontinuous but positive-continuous function and a CRN that computes it.
Although our work shows that the computational power of rate-independent CRNs is limited, the power of the computable class of functions should not be underestimated.
For example, allowing a fixed non-zero initial concentration of non-input species (see \Cref{sec:initial-context}), 
such CRNs are equivalent to ReLU neural networks---arguably the most widely used type of neural networks in machine learning~\cite{vasic2022programming}.

\subsection{Chemical Motivation}
\label{sec:chemicalmotivation}
Traditionally CRNs have been used as a descriptive language to analyze naturally occurring chemical reactions,
as well as various other systems with a large number of interacting components such as gene regulatory networks and animal populations.
However, CRNs also constitute a natural choice of programming language for engineering artificial systems.
For example, nucleic-acid networks can be rationally designed to implement arbitrary chemical reaction networks~\cite{SolSeeWin10,cardelli2011strand,chen2013programmable,srinivas2017enzyme}.
Thus, since in principle any CRN can be physically built, hypothetical CRNs with interesting behaviors are becoming of more than theoretical interest.
One day artificial CRNs may underlie embedded controllers for biochemical, nanotechnological, or medical applications, where environments are inherently incompatible with traditional electronic controllers.
However, to effectively program chemistry, we must understand the computational power at our disposal.
In turn, the computer science approach to CRNs is also beginning to generate novel insights regarding natural cellular regulatory networks~\cite{cardelli2012cell}.

At the fine-grained level of detail, chemistry is discrete and stochastic.
This level is typically modeled by discrete CRNs, where the state is a vector of nonnegative \emph{integers} representing the counts of each species in the given reaction vessel,
and reactions are modeled by a Markov jump process~\cite{Gillespie77}.
The continuous model is governed by a system of mass-action ordinary differential equations, which can be derived as a limiting case of the discrete model when volume and counts are large~\cite{kurtz1972relationship}.%
\footnote{
    The exact statement of Kurtz's convergence result~\cite{kurtz1972relationship} is beyond the scope of this paper.
    It considers taking a discrete CRN with initial integer molecular counts given by vector $\vc \in \N^k$ in volume $V > 0$,
    then ``scaling up'' by factor $n\in\N$, i.e., considering the discrete CRN with initial state $n \cdot \vc$ in volume $n \cdot V$.
    The result, stated very roughly, is that with high probability the $n$-scaled CRN has a trajectory (dividing discrete counts by $n \cdot V$ to convert to units of concentration) that stays close to the real-valued mass-action concentration trajectory, but only for time $O(\log n)$.
    An example CRN where this time bound is tight is $A+B \to A+B+Y,\quad A \to 2A,\quad B \to \emptyset$, starting with $1A, 1B$.
    In mass-action, 
    the concentrations of $A$ and $B$ at time $t$ are respectively $e^t$ and $e^{-t}$,
    whose product is the constant 1, so the first reaction produces $Y$ at a unit rate forever.
    However, scaling up to $n A, n B$, the discrete CRN consumes all $B$ in $O(\log n)$ time, at which point production of $Y$ halts.
    See~\cite{lathrop2020population, ppsim} for other example CRNs 
    for which the two models diverge after sufficient time.
}
While of physical primacy, the discrete model can be less suitable for reasoning about feasible chemical algorithms.
Many algorithms in the discrete model rely on a single molecule (called a ``leader'') to coordinate computation~\cite{angluin2006fast}.
Whether the initial state is assumed to have a leader, or the CRN is designed to eliminate all but one copy of the leader species (``leader election''), such algorithms relying on single-molecule behavior are currently infeasible since any single molecule can get damaged or become effectively lost.

An important reason for our focus on \emph{stoichiometric computation} is that algorithms relying only on stoichiometry make easier design targets.
The rates of reactions are real-valued quantities that can fluctuate with reaction conditions 
such as 
temperature, while the 
stoichiometric coefficients
are immutable whole numbers set by the nature of the reaction.
Methods for physically implementing CRNs naturally yield systems with digital stoichiometry that can be set exactly~\cite{SolSeeWin10,cardelli2011strand},
whereas these methods often suffer from imprecise control over reaction rates~\cite{chen2013programmable,srinivas2017enzyme}.
Further, relying on specific rate laws can be problematic:
many systems do not apparently follow mass-action rate laws and chemists have developed an array of alternative rate laws such as Michaelis-Menten 
(modeling enzymes) and Hill-function kinetics (widely used for gene regulation).\footnote{It is generally supposed that chemical reactions would follow mass-action if properly decomposed into truly elementary reactions and the solution is well-mixed.
For example, Michaelis-Menten and Hill-function kinetics can be derived as a limiting case of mass-action when the reaction is initiated and completed at vastly different time scales.}
It is well-known that cells are not well-mixed, and many models have been developed to take space into account (e.g., reaction-diffusion~\cite{kondo2010reaction}).
Moreover, robustness of rate laws is a recurring motif in systems biology due to much evidence that biological regulatory networks tend to be robust to the form of the rate laws and the rate parameters~\cite{barkal1997robustness}.
Thus we are interested in what computations can be understood or engineered without regard for the reaction rate laws.

\subsection{Related Works} \label{sec:prev}

An earlier conference version of this paper appeared as~\cite{CheDotSol14}.
Besides replacing a number of informal arguments with rigorous proofs, this journal version also expands and generalizes the results of the conference version.
For example, we introduce new machinery for representing and manipulating trajectories as linear objects (piecewise linear paths).
We also define a broad class of rate laws, formalized by Definition~\ref{defn:valid-rate-schedule}, which captures mass-action kinetics and all other known rate laws such as Michaelis-Menten and Hill-function kinetics, and prove that our definition of reachability is as general as any in this class.
For the constructive part, this version also generalizes Lemma 3.4 of \cite{CheDotSol14} (in addition to correcting its proof) by introducing feedforward CRNs and proving that correct computation in our setting implies 
convergence 
under any ``reasonable'' rate law (one that produces a \emph{fair} schedule of rates; \cref{defn:fair-rate-schedule})
for any 
feedforward CRN
(\cref{lem:fair-rate-schedule-converges}).

The relationship between the discrete and continuous CRN models is a complex and much studied one in the natural sciences~\cite{samoilov2006deviant}.
The computational abilities of discrete CRNs have been investigated more thoroughly than of continuous CRNs, and have been shown to have a surprisingly rich computational structure.
Of most relevance here is the work in the discrete setting showing that the class of functions that can be computed depends strongly on whether the computation must be correct, or just likely to be correct (under the usual stochastic kinetics)---which is the discrete version of the distinction between rate-independent and rate-dependent computation.
While Turing universal computation is possible with an arbitrarily small, non-zero probability of error over all time~\cite{SolCooWinBru08},
forbidding error altogether limits the computational power:
Error-free computation by stochastic CRNs is limited to semilinear predicates and functions~\cite{angluin2006passivelymobile,CheDotSolNaCo14}.
(Intuitively, semilinear functions are expressible as a finite union of affine functions, with ``simple, periodic'' domains of each affine function~\cite{CheDotSolNaCo14}.)
The study of error-free computation in discrete CRNs is heavily based on the results first developed for a model of distributed computing called population protocols~\cite{angluin2006passivelymobile,aspnes2007introduction}.
We formally refer to our notion of rate-independent computation as \emph{stable computation} in direct reference to the analogous notion in population protocols.

While our notion of rate-independent computation is the natural extension of deterministic computation in the discrete model, there are many differences between the two settings.
As mentioned above, many discrete algorithms such as those that rely on a single ``leader'' molecule fail to work in the continuous setting,
and some functions like distinguishing between even and odd molecular counts do not make sense.
Broadly speaking, the proof techniques appear to require very different machinery,
and the importance of stable computation itself needs substantial justification in the continuous model (as the examples shown at the beginning of  \cref{sec:fair-computation} demonstrate).

Continuous CRNs have been proven to be Turing universal under mass-action rate laws~\cite{fages2017strong},
a consequence of the surprising computational power of polynomial ODEs~\cite{bournez2017odes}.
In ODEs without the CRN semantics, there is no natural notion of stoichiometry and thus no notion of rate-independence analogous to ours.
In chemistry, the same physical process (a reaction) is responsible for multiple monomials across multiple ODEs,
which justifies these monomials being exactly the same or in fixed ratios (corresponding to obeying reaction stoichiometry).
Such forced relationships do not seem natural for more general polynomial ODEs that do not correspond to chemical reactions.
\footnote{For example, consider the reaction $A \to 2B$, with ODEs $\dot{a} = -a$ and $\dot{b} = 2a$.
One can imagine a ``chemical'' adversary adjusting the rate of the reaction to speed it up or slow it down, but what the adversary \emph{cannot} control is that to consume $x$ amount of $A$ requires producing exactly $2x$ amount of $B$, and vice versa.
This connection between the rates of consumption of $A$ and production of $B$ does not have an obvious counterpart in more general polynomial ODEs and analog computational models.}

Our notion of reachability 
(\cref{defn-reachable-segment})
is intended to capture a wide diversity of possible rate laws.
Generalized rate laws (extending mass-action, Michaelis-Menten, etc) have been previously studied, although not in a computational setting. 
For example, certain conditions were identified on global convergence to equilibrium based on properties intuitively similar to ours~\cite{angeli2006structural}.
A related idea in the literature, generalizing mass-action, is differential inclusion~\cite{gopalkrishnan2013projection}. 
In that model, the mass-action rate constants are not fixed to be particular real numbers constant over time, but instead can vary over time within some bounded interval $[l,u]$ fixed in advance, with $0 < l \leq u < \infty$.
Another related idea is the notion of a reaction system~\cite{fages2015inferring},
which generalizes even beyond mass-action, allowing reaction rates to be an (almost) arbitrary function of species concentrations.
\footnote{
    Our notion of valid rate schedules in \cref{defn:valid-rate-schedule} is even more general than a reaction system  in that a valid rate schedule does not require a reaction's rate to be a function of species concentrations, for instance allowing an adversary to visit the same state twice but apply different reaction rates each time.
}
Other generalized rate laws have been defined as well~\cite{angeli2007petri,degrand2020graphical}.

Since the original publication of the conference version of this paper~\cite{CheDotSol14}, a number of works have used our framework.
A key concept in capturing rate-independent computation is the reachability relation (segment-reachability, \cref{defn-reachable-segment}).
Reference~\cite{case2018reachability} showed that, given two states, deciding whether one is reachable from the other is solvable in polynomial time.
This contrasts sharply with the hardness of the reachability problem for discrete CRNs
which, although computable~\cite{mayr1984algorithm}, 
is not even primitive recursive~\cite{leroux2021reachability,czerwinski2021reachability}.
(These results were proven using the terminology of the equivalent models of Petri nets/vector addition systems.)

The question of deciding whether a given CRN is rate-independent was studied in~\cite{degrand2020graphical}.
The work provides sufficient graphical conditions on the structure of the CRN that ensure rate-independence for the whole CRN or only for certain output species. 
Interestingly, the authors of~\cite{degrand2020graphical} applied this method to the Biomodels repository of curated CRNs of biological origin and found a number of CRNs that satisfy the rate-independence conditions.

An important motivation for the dual-rail representation in this work is to allow composition of rate-independent CRN modules (\cref{subsec:dual-rail-defn}).
Such rate-independent modules can be composed into overall rate-independent computation simply by 
concatenating their chemical reactions and relabeling species (such that the output species of the first is the input species of the second, and all other species are distinct).
In contrast, rate-independent composition with the direct (non-dual rail) representation, 
introduces an additional ``superadditivity'' constraint that for all input vectors $\vx$ and $\vx'$, $f(\vx) + f(\vx') \leq f(\vx + \vx')$~\cite{chalk2021composable}.
Thus, for example, the non-superadditive max function (Figure~\ref{fig:ex}) provably cannot be composably computed with a rate-independent CRN in the direct representation.
Composable computation has also been characterized in the discrete model~\cite{severson2021composable,hashemi2020composable}.

Other input encodings have been considered besides direct and dual-rail.
For example, the so-called ``fractional encoding'' encodes a real number between 0 and 1 as a ratio $\frac{x_1}{x_0+x_1}$ where $x_0,x_1$ are concentrations of two input species~\cite{salehi2017chemical}.
Other notions of chemical ``rate-independence'' include CRNs that work independently of the rate law as long as there is a separation into fast and slow reactions~\cite{senum2011rate}.
For a detailed survey on computation with CRNs (both continuous and discrete),
see~\cite{brijder2019computing}.


\section{Defining Reachability in Chemical Reaction Networks}
\label{sec-prelim}

\subsection{Chemical Reaction Networks}
\label{subsec-prelim-defs}

We first explain our notation for vectors of concentrations of chemical species, and then formally define chemical reaction networks.

Given a finite set $F$,
let $\R^F$ denote the set of functions $\vc: F \to \R$.
We view $\vc$ equivalently as a vector of real numbers indexed by elements of $F$.
Given $x \in F$, we write $\vc(x)$, or sometimes $\vc_x$, to denote the real number indexed by $x$.
The notation $\Rp^F$ is defined similarly for nonnegative real vectors.
Throughout this paper, let $\Lambda$ be a finite set of chemical \emph{species}.
Given $S\in \Lambda$ and $\vc \in \Rp^\Lambda$, we refer to $\vc(S)$ as the \emph{concentration of $S$ in $\vc$}.
For any $\vc\in \Rp^\Lambda$, let $[\vc] = \{S \in \Lambda \ |\ \vc(S) > 0 \}$, the set of species \emph{present} in $\vc$
(a.k.a., the \emph{support} of $\vc$).
We write $\vc \leq \vc'$ to denote that $\vc(S) \leq \vc'(S)$ for all $S \in \Lambda$.
Given $\vc,\vc' \in \Rp^\Lambda$, we define the vector component-wise operations of addition $\vc+\vc'$, subtraction $\vc-\vc'$, and scalar multiplication $x \vc$ for $x \in \R$.
If $\Delta \subset \Lambda$, we view a vector $\vc \in \Rp^\Delta$ equivalently as a vector $\vc \in \Rp^\Lambda$ by assuming $\vc(S)=0$ for all $S \in \Lambda \setminus \Delta.$
For $\Delta \subset \Lambda$, we write $\vc  \upharpoonright \Delta$ to denote $\vc$ \emph{restricted to} $\Delta$; in particular, $\vc  \upharpoonright \Delta = \vec{0} \iff (\forall S\in\Delta)\ \vc(S)=0.$
(We use the convention that $\vc \upharpoonright \emptyset = \vec{0}$ for all states $\vc$.)

A \emph{reaction} over $\Lambda$ is a pair $\alpha = \langle \bfr,\bfp \rangle \in \N^\Lambda \times \N^\Lambda$, 
such that $\vr \neq \vp$,
specifying the stoichiometry of the reactants and products, respectively.\footnote{It is customary to define, for each reaction, a \emph{rate constant} $k \in \R_{>0}$ specifying a constant multiplier on the mass-action rate (i.e., the product of the reactant concentrations), but as we are studying CRNs whose output is independent of the reaction rates, we leave the rate constants out of the definition.} 
For instance, given $\Lambda=\{A,B,C\}$, the reaction $A+2B \to A+3C$ is the pair $\pair{(1,2,0)}{(1,0,3)}.$
We represent reversible reactions such as $A \revrxn B$ as two irreversible reactions $A \to B$ and $B \to A$.
In this paper, we assume that $\bfr \neq \vec{0}$, i.e., we have no reactions of the form $\emptyset \to \ldots$.\footnote{We allow high order reactions; i.e., those that have more than two reactants.
Such higher order reactions could be eliminated from our constructions using the transformation that replaces $S_1 + S_2 + \ldots + S_n \to P_1 + \ldots + P_m$ with bimolecular reactions $S_1 + S_2 \revrxn S_{12}, S_{12} + S_3 \revrxn S_{123}, S_{123} + S_4 \revrxn S_{1234}, \ldots, S_n + S_{12 \ldots n-1} \to P_1 + \ldots + P_m$.
}
A \emph{(finite) chemical reaction network (CRN)} is a pair $\calC=(\Lambda,R)$, where $\Lambda$ is a finite set of chemical \emph{species},
and $R$ is a finite set of reactions over $\Lambda$.
A \emph{state} of a CRN $\calC=(\Lambda,R)$ is a vector $\vc \in \Rp^\Lambda$.
Given a state $\vc$ and reaction $\alpha=\pair{\bfr}{\bfp}$, we say that $\alpha$ is \emph{applicable} in $\vc$ if $[\bfr] \subseteq [\vc]$ (i.e., $\vc$ contains positive concentration of all of the reactants).
If no reaction is applicable in state $\vc$, we say $\vc$ is \emph{static}.
We say a species $S$ is \emph{produced} in reaction $\langle  \vr,\vp\rangle$ if $\vr(S) < \vp(S)$,
and
\emph{consumed} if $\vr(S) > \vp(S)$.
(Note that a catalyst, such as $C$ in the reaction $C+X \to C+Y$, is neither produced nor consumed.)

\subsection{Segment Reachability}
\label{subsec-reachability}

In the previous section we defined the syntax of CRNs.
Toward studying rate-independent computation,
we now want to define the semantics of what ``could happen'' if reaction rates can vary arbitrarily over time.
This is captured by a notion of \emph{reachability}, which is the focus of this section.
Intuitively, $\vd$ is reachable from $\vc$ if applying some amount of reactions to $\vc$ results in $\vd$, such that no reaction is ever applied when any of its reactants are concentration 0.
Formalizing this concept is a bit tricky and constitutes one of the contributions of this paper.
Intuitively, we'll think of reachability via straight line segments.
This may appear overly limiting; after all mass-action and other rate laws trace out smooth curves.
However in this and subsequent sections we show a number of properties of our definition that support its reasonableness.

Throughout this section, fix a CRN $\calC = (\Lambda,R)$.
All states $\vc$, etc., are assumed to be states of $\calC$.
We define the 
$|\Lambda| \times |R|$ \emph{reaction stoichiometry matrix} $\vM$ such that,
for species $S \in \Lambda$ and reaction $\alpha = \langle \vr,\vp \rangle \in R$,
$\vM(S,\alpha) = \vp(S) - \vr(S)$
is the net amount of 
$S$ produced by $\alpha$ (negative if $S$ is consumed).\footnote{Note that $\vM$ does not fully specify $\calC$, since catalysts are not modeled: reactions $Z + X \to Z + Y$ and $X \to Y$ both correspond to the column vector $(-1,1,0)^\top$.}
For example, if we have the reactions $X \to Y$ and $X + A \to 2X + 3Y$, and if the three rows correspond to $X$, $A$, and $Y$, in that order, then
  $$
    \vM =
    \left(
      \begin{array}{cc}
        -1 & 1 \\
        0 & -1 \\
        1 & 3 \\
      \end{array}
    \right)
  $$

\begin{definition}\label{defn-reachable-line}
State $\vd$ is \emph{straight-line reachable (aka $1$-segment reachable)} from state $\vc$, written $\vc \slto \vd$, if $(\exists \vu \in \Rp^R)\ \vc + \vM \vu = \vd$
and $\vu(\alpha) > 0$ only if reaction $\alpha$ is applicable at $\vc$.
In this case write $\vc \slto_\vu \vd$.
\end{definition}

Intuitively, by a single segment we mean running the reactions applicable at $\vc$ at a constant (possibly 0) rate to get from $\vc$ to $\vd$.
In the definition, $\vu(\alpha)$ represents the flux of reaction $\alpha \in R$.

The next definition is used in our main notion of reachability,
which uses either a finite number of straight lines,
or infinitely many so long as they converge to a single state.

\begin{definition}
\label{defn-reachable-lines}
Let $k \in \N \cup \{\infty\}$.
State $\vd$ is \emph{$k$-segment reachable} from state $\vc$, written $\vc \segto^k \vd$, 
if $(\exists \vb_0, \dots, \vb_{k})\ \vc  = \vb_0 \slto \vb_1 \slto \vb_2 \slto \dots  \slto \vb_{k}$,
with $\vb_k = \vd$ 
if $k \in \N$,
or $\lim\limits_{i \to \infty} \vb_i = \vd$ if $k = \infty$.
\end{definition}

\begin{definition}
\label{defn-reachable-segment}
State $\vd$ is \emph{segment-reachable} 
(or simply \emph{reachable}) 
from state $\vc$, written $\vc \segto \vd$, if $(\exists k\in\N \cup \{\infty\})\ \vc \segto^k \vd$.
\end{definition}

For example, suppose the reactions are $X \to C$ and $C + Y \to C + Z$, and we are in state $\{0 C, 1 X, 1 Y, 0 Z\}$.
With straight-line segments, any state with a positive amount of $Z$ must be reached in at least two segments: first to produce $C$, which allows the second reaction to occur, and then any combination of the first and second reactions.
For example,
$\{0 C$, $1 X$, $1 Y$, $0 Z\}$  $\slto$  $\{0.1 C$, $0.9 X$, $1 Y$, $0 Z\}$ $\slto$  $\{1 C$, $0 X$, $0 Y$, $1 Z\}$.
This is a simple example showing that more states are reachable with $\segto$ than $\slto$.
Often Definition~\ref{defn-reachable-segment} is used implicitly, when we make statements such as, ``Run reaction 1 until $X$ is gone, then run reaction 2 until $Y$ is gone'', which implicitly defines two straight lines in concentration space.

Although more effort will be needed to justify its reasonableness (see \Cref{sec:mass-action-reachability}), segment-reachability will serve as the main notion of reachability in this paper.

\subsection{Bound on Number of Required Line Segments in Segment Reachability}

It may seem that we can never achieve the ``full diversity'' of states reachable with 
an infinite number of line segments
if we use only a bounded number of line segments.
However, Theorem~\ref{thm-reachable-segment-bound} shows that increasing the number of straight-line segments beyond a certain point does not make any additional states reachable.
Thus using a few line segments captures all the states reachable with arbitrarily many line segments, 
and in fact even in the limit of infinitely many line segments.

In order to prove Theorem~\ref{thm-reachable-segment-bound}, 
we first develop important machinery for representing and manipulating paths under $\segto$.
Note that reachability is closed under addition and scaling in the sense that if $\vc \segto \vd$ and $\vc' \segto \vd'$ then $\alpha \vc + \beta \vc' \segto \alpha \vd + \beta \vd'$ for all $\alpha, \beta \in \R_{\ge 0}$. 
The following definition captures this property by defining a linear space of all paths. 
This machinery will also be key to proving the piecewise linearity of the computed function in Section~\ref{subsec-negative-computable-implies-piecewise-linear}.

\newcommand{\prepathsset}{\ensuremath{\Psi}}

\begin{defn}\label{def:prepaths}

Let $\calC = (\Lambda, R)$ be a CRN with species $\Lambda$ and reactions $R$.
For $n \in \N$,
we define a linear map $\vx_n: (\R^\Lambda \times \bigtimes_{i = 1}^\infty \R^R) \to \R^\Lambda$, which takes $\vgamma = (\vx_0, \vu_1, \vu_2, \ldots)$
representing an initial state $\vx_0$ and reaction flux vectors $\vu_1, \vu_2, \ldots$,
and produces
\[\vx_n(\vgamma) = \vx_0 + \sum_{i = 1}^n \vM \vu_i,\]
which intuitively is the state reached after traversing the first $n$ line segments.
Let 
$\prepathsset$
be the set of $\vgamma$ for which $\lim\limits_{n \to \infty} \vx_n(\vgamma)$ converges.
We call elements of 
$\prepathsset$
\emph{prepaths}. 
\end{defn}

Definition~\ref{def:prepaths} allows a prepath to be essentially any sequence of vectors in the linear subspace spanned by reaction vectors. The next definition restricts the vectors with three physical constraints: species concentrations are nonnegative, reaction fluxes are nonnegative (i.e., reactions can only go one way, turning reactants into products), and reactions cannot occur if any reactant is 0.

\begin{defn}\label{def:paths}
Let $\Gamma_\infty$ be the subset of 
$\prepathsset$
consisting of vectors $\vgamma = (\vec{x}_0, \vec{u}_1, \vec{u}_2, \ldots)$ satisfying the following conditions for all $n\in\N$:

\begin{enumerate}
    \item 
    $\vec{x}_n(\vgamma) \in \Rp^\Lambda$.

    \item
    $\vu_n \in \Rp^R$.
    
    \item 
    every reaction with positive flux in $\vec{u}_{n+1}$ is applicable at $\vec{x}_n$.
\end{enumerate}
We call an element of $\Gamma_\infty$ a \emph{piecewise linear path} or sometimes just a \emph{path}.
\end{defn}

Definitions~\ref{def:prepaths} and~\ref{def:paths} allow an infinite sequence of reaction flux vectors (each corresponding to a straight line in the definition of $1$-segment reachability).
A finite number of straight lines can be specified by letting all but finitely many $\vu_i = \vec{0}$.
The next definition bounds how many can be nonzero.

\begin{defn}
For $k\in \N$, define $\Gamma_k$ to be the subset of $\Gamma_\infty$ consisting of all paths $\vgamma = (\vx_0, \vu_1, \vu_2, \ldots)$ such that $\vu_i = \vec{0}$ for all $i > k$. 
Say that a path is \emph{finite} if it is contained in $\Gamma_k$ for some $k \in \N$.
\end{defn}

Intuitively, $\Gamma_\infty$ is the space of all of the valid piecewise linear paths that the system can take starting from any given initial state and $\Gamma_k$ ($k \in \N$) is the set of all such paths that have length at most $k$; 
thus $\Gamma_0 \subseteq \Gamma_1 \subseteq \ldots \subseteq \Gamma_\infty$.

\begin{lem}\label{lem:Pathspace-convex}
{For $k \in \N \cup \{\infty\}$, $\Gamma_k$ is convex}.
\end{lem}

\begin{proof}
Let $\vgamma_0, \vgamma_1 \in \Gamma_\infty$ be two paths and consider $\lambda \in (0,1)$. 
We need to show that $\vgamma_\lambda = (1 - \lambda)\vgamma_0 + \lambda \vgamma_1$ is in $\Gamma_\infty$.
Recall that $\vx_n(\vgamma)$ is the state reached after the first $n$ segments of path $\vgamma$.
Note that for any $n \in \N$,
\[\vec{x}_n(\vgamma_\lambda) = (1- \lambda)\vec{x}_n(\vgamma_0) + \lambda \vec{x}_n(\vgamma_1).\]
Since $\Rp^\Lambda$ is convex and both $\vec{x}_n(\vgamma_1)$ and $\vec{x}_n(\vgamma_2)$ are in $\Rp^\Lambda$, we conclude that $\vec{x}_n(\vgamma_\lambda)$ is in $\Rp^\Lambda$, too. 
Moreover, because both $\lim_{n \to \infty} \vx_n(\vgamma_0)$ and $\lim_{n \to \infty} \vx_n(\vgamma_1)$ converge, we see that
\[\lim_{n \to \infty}\vx_n(\vgamma_\lambda) = (1-\lambda)\lim_{n \to \infty}\vx_n(\vgamma_0) + \lambda \lim_{n \to \infty}\vx_n(\vgamma_1)\]
also converges.

Below, for a path $\vgamma = (\vx_0, \vu_1, \vu_2, \ldots)$, we use the notation $\vu_n(\vgamma)$ to represent the $n$'th flux vector $\vu_n$ in $\vgamma.$
Since 
\[\vec{u}_n(\vgamma_\lambda) = (1 - \lambda)\vec{u}_n(\vgamma_0) + \lambda \vec{u}_n(\vgamma_1),\]
any reaction $\alpha$ occurs with positive flux in $\vec{u}_n(\vgamma_\lambda)$ only if 
$\alpha$ occurs with positive flux in $\vec{u}_n(\vgamma_i)$
for $i = 0$ or 1. 
Without loss of generality, suppose that 
{$\alpha$ occurs with positive flux in $\vec{u}_n(\vgamma_0)$}. 
Then reaction 
$\alpha$ is applicable at $\vec{x}_{n-1}(\vgamma_0)$, so the reactants are all present in positive concentrations in $\vec{x}_{n - 1}(\vgamma_0)$. This implies that they are present with positive concentrations in $\vec{x}_{n - 1}(\vgamma_\lambda)$ (note that we have excluded the case $\lambda = 1$ from the outset). Therefore reaction 
$\alpha$ is applicable at $\vec{x}_{n-1}(\vgamma_\lambda)$.
We conclude that every reaction occurring with positive flux in $\vec{u}_n(\vgamma_\lambda)$ is applicable at $\vec{x}_{n-1}(\vgamma_\lambda)$. This shows that $\Gamma_\infty$ is convex. 

To see that $\Gamma_k$ for $k \in \N$ is also convex, note that if $\vu_n(\vgamma_0) = 0$ and $\vu_n(\vgamma_1) = 0$ then $\vu_n(\vgamma_\lambda)$ will also be zero.
\end{proof}

The next lemma shows that if it is possible to reach from a state $\vc$ to several other states, each containing some species possibly distinct from each other, then it is possible to reach from $\vc$ to a state with \emph{all} of those species present at once.

\begin{lem}  \label{lem:produce-all}
    Let $l \in \N$, $k \in \N \cup \{\infty\}$, and let $\vc,\vd_1,\ldots,\vd_l$ be states such that 
    $\vc \segto^k \vd_1$, 
    $\vc \segto^k \vd_2$, 
    $\ldots$, 
    and
    $\vc \segto^k \vd_l$. 
    Then there exists $\vd$ such that $\vc \segto^k \vd$ and $[\vd] = \bigcup_{i=1}^l [\vd_i]$.
\end{lem}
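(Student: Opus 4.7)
The plan is to exploit the convexity result just established (Lemma~\ref{lem:Pathspace-convex}): a uniform convex combination of paths witnessing the $l$ reachability assertions will itself be a valid $k$-segment path, and its endpoint will have exactly the right support.

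In detail, I would first lift each hypothesis $\vc \segto^k \vd_i$ to an honest path $\gamma_i \in \Gamma_k$ with $\vx_0(\gamma_i) = \vc$ and $\vx_k(\gamma_i) = \vd_i$ (padding shorter paths with zero flux vectors $\vu_{j} = \vec{0}$ so that all $\gamma_i$ lie in $\Gamma_k$; this is allowed because Definition~\ref{defn-reachable-lines} permits the intermediate $\vb_j$ to coincide). Next I would form the uniform convex combination
\[
\gamma \;=\; \frac{1}{l}\sum_{i=1}^{l} \gamma_i.
\]
By Lemma~\ref{lem:Pathspace-convex}, $\Gamma_k$ is convex, so $\gamma \in \Gamma_k$. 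Since each $\vx_j$ is a linear map (Definition~\ref{def:prepaths}), we get $\vx_0(\gamma) = \vc$ and $\vx_k(\gamma) = \vd$, where
\[
\vd \;=\; \frac{1}{l}\sum_{i=1}^{l} \vd_i.
\]
In particular $\vc \segto^k \vd$.

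Finally, I would check $[\vd] = \bigcup_i [\vd_i]$. For each species $S \in \Lambda$, $\vd(S) = \frac{1}{l}\sum_{i=1}^l \vd_i(S)$, and every summand is nonnegative (since $\vd_i \in \Rp^\Lambda$). Hence $\vd(S) > 0$ iff at least one $\vd_i(S) > 0$, which is precisely the statement that $S \in [\vd]$ iff $S \in \bigcup_i [\vd_i]$.

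I do not expect a real obstacle here: the entire content of the lemma is packaged inside Lemma~\ref{lem:Pathspace-convex}, and the only thing to be careful about is (a) making sure all the $\gamma_i$ live in the same $\Gamma_k$ by padding with null segments, and (b) observing that nonnegativity of concentrations makes the support of a convex combination equal to the union of supports. Both are routine.
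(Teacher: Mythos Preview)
Your proposal is correct and is essentially identical to the paper's own proof: both take the uniform convex combination $\gamma = \frac{1}{l}\sum_i \gamma_i$, invoke the convexity of $\Gamma_k$ from Lemma~\ref{lem:Pathspace-convex}, and observe that the endpoint $\vd = \frac{1}{l}\sum_i \vd_i$ has support exactly $\bigcup_i [\vd_i]$ by nonnegativity. Your remarks about padding with null flux segments and checking supports are slightly more explicit than the paper, but add no new idea.
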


\begin{proof}
    Write $\vgamma_i$ for the path from $\vc$ to $\vd_i$; the convexity of $\Gamma_k$ shows that the convex combination 
    \[\vgamma = \frac{1}{l}\sum_{i=1}^l \vgamma_i\] 
    is a valid path in $\Gamma_k$. 
    Letting 
    \[\vd = \frac{1}{l}\sum_{i=1}^l \vd_i,\]
    $\vgamma$ exhibits a $k$-segment path from $\vc$ to $\vd$.  If $S$ is a species that is present at $\vd_i$ for any $i$ then $S$ is also present at $\vd$. On the other hand, if $S$ is present in none of the $\vd_i$ then $S$ is not present in $\vd$. As a result, $[\vd] = \bigcup_{i=1}^l [\vd_i]$.
\end{proof}

\begin{definition}
Given a state $\vc$, let $\producible(\vc)$ be the set of all species that are \emph{producible} from $\vc$---i.e., present in some state that is segment-reachable from $\vc$. 
\end{definition}

The next lemma shows that with at most a constant number of straight line segments,
it is possible to reach from any state $\vc$ to a state containing \emph{all} species possible to produce from $\vc$.

\begin{lemma}\label{lem-all-species-present}
Let $m$ be the minimum of $|R|$ and $|\Lambda|$ and let $\vc$ be any state. Then there is a state $\vd$ such that $\vc \segto^{m} \vd$ and $[\vd] = \producible(\vc)$. 
\end{lemma}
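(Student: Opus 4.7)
The plan is to construct the desired path explicitly by iteratively firing every currently-applicable reaction with an infinitesimal but positive flux. Define $\calP_0 = [\vc]$, let $\rho_i \subseteq R$ be the set of reactions whose reactants all lie in $\calP_i$, and let $\calP_{i+1}$ consist of $\calP_i$ together with every species appearing as a product of some reaction in $\rho_i$. The sequence $\calP_0 \subseteq \calP_1 \subseteq \cdots$ is non-decreasing and bounded above by $\Lambda$, so it stabilizes at some minimal index $N$; I will show $\calP_N = \calP(\vc)$ and $N \le m$.

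Given this, I construct states $\vc = \vb_0, \vb_1, \ldots, \vb_N = \vd$ inductively so that $[\vb_i] = \calP_i$. At step $i+1$, every reaction in $\rho_i$ is applicable at $\vb_i$; take $\vu_{i+1}$ placing a common weight $\epsilon > 0$ on each reaction in $\rho_i$ and zero on all others. For $\epsilon$ sufficiently small, no reactant is exhausted while every product of $\rho_i$ acquires positive concentration, forcing $[\vb_{i+1}] = \calP_{i+1}$. This yields a valid $N$-segment path $\vc \slto \vb_1 \slto \cdots \slto \vb_N$ with $[\vb_N] = \calP_N$, and if $N < m$ I pad with $m - N$ zero-flux segments, which are always admissible under Definition~\ref{defn-reachable-line} since the flux $\vec{0}$ vacuously satisfies the applicability condition. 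The inclusion $\calP_N \subseteq \calP(\vc)$ follows immediately, and for the reverse inclusion a straightforward induction on $j$ along any witness path $\vc = \vb'_0 \slto \cdots \slto \vb'_K$ shows $[\vb'_j] \subseteq \calP_N$: each reaction fired in segment $j$ has its reactants in $[\vb'_{j-1}] \subseteq \calP_N$, hence belongs to $\rho_N$, so its products lie in $\calP_{N+1} = \calP_N$.

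It remains to bound $N$ by $m = \min(|R|,|\Lambda|)$. The bound $N \le |\Lambda|$ is immediate from the strict chain $\calP_0 \subsetneq \calP_1 \subsetneq \cdots \subsetneq \calP_N$ in $2^\Lambda$. The main obstacle is the $|R|$ half: I will show that whenever $N \ge 1$ the chain $\rho_0 \subsetneq \rho_1 \subsetneq \cdots \subsetneq \rho_{N-1}$ is also strictly increasing. Indeed if $\rho_i = \rho_{i+1}$ for some $i \le N - 2$, then the set of products of $\rho_{i+1}$ coincides with the set of products of $\rho_i$, which already lies in $\calP_{i+1}$ by the defining recursion; hence $\calP_{i+2} = \calP_{i+1}$, contradicting minimality of $N$. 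Moreover $\rho_0 \ne \emptyset$ whenever $N \ge 1$, since otherwise no reaction fires at $\vc$ and $\calP_1 = \calP_0$ already. Combining these two observations yields $|\rho_{N-1}| \ge |\rho_0| + (N - 1) \ge N$, so $N \le |\rho_{N-1}| \le |R|$. This coupling between the growth of the support chain $(\calP_i)$ and the chain of applicable-reaction sets $(\rho_i)$ is the essential subtlety: a naive induction on segments only yields the $|\Lambda|$ bound, and one must exploit this coupling to squeeze the count down to $m$.
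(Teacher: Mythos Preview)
Your proof is correct and follows the same overall strategy as the paper: build increasing chains of species sets and applicable-reaction sets, show they stabilize at $\calP(\vc)$, and use strict growth of both chains to get the bound $N\le\min(|\Lambda|,|R|)$. The difference lies in how the witness states are obtained. The paper defines $\Lambda_i$ \emph{semantically} as the set of species appearing in any $i$-segment-reachable state and then invokes Lemma~\ref{lem:produce-all} (which rests on the convexity of $\Gamma_k$, Lemma~\ref{lem:Pathspace-convex}) to merge several $i$-segment paths into a single one whose endpoint realizes all of $\Lambda_i$. You instead define $\calP_i$ \emph{syntactically} by iterated closure and build $\vb_i$ by hand, firing every currently-applicable reaction with a small common flux $\epsilon$; the key observation that any species in $\calP_{i+1}\setminus\calP_i$ cannot be a reactant of $\rho_i$ guarantees $[\vb_{i+1}]=\calP_{i+1}$. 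Your route is somewhat more elementary since it sidesteps the convexity machinery entirely, while the paper's route reuses a lemma that is needed elsewhere anyway; the two chains in fact coincide ($\calP_i=\Lambda_i$), so the stabilization argument and the coupling between the species chain and the reaction chain are identical in both proofs.
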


\begin{proof}
    Given a state $\vc$, let $\producible_{< \infty}(\vc)$
    be the set of all species that are present in some state that is $k$-segment-reachable from $\vc$ for some $k < \infty$. 
    
    We first argue that $\producible(\vc) = \producible_{< \infty}(\vc)$.
    Since clearly $\producible_{< \infty}(\vc) \subseteq \producible(\vc)$, it remains to show that $\producible(\vc) \subseteq \producible_{< \infty}(\vc)$. 
    Let $S$ be a species that is present in the state $\vd$ such that $\vc \segto^\infty \vd$. 
    Then there exists a sequence $(\vb_i)_i$ such that $\vc \to^1 \vb_1 \to^1 \vb_2 \to^1 \ldots$ with $\vd = \lim_{i \to \infty} \vb_i$. Because $S \in [\vd]$ there must be some $i_0 \in \N$ where $S \in [\vb_{i_0}]$, and since $\vc \segto^{i_0} \vb_{i_0}$ we see that $S \in \producible_{< \infty}(\vc)$. 
    Thus $\producible(\vc) = \producible_{< \infty}(\vc)$.
    
    We show that the lemma holds for $\producible_{< \infty}(\vc)$;
    since $\producible(\vc) = \producible_{< \infty}(\vc)$ this establishes the full lemma.

    For all $i \in \N$, let $\Lambda_i$ be the set of species $S$ such that there exists a $\vd$ with $\vc \segto^i \vd$ and $S \in [\vd]$. 
    Similarly, let $R_i$ be the set of reactions $\alpha$ such that there exists a $\vd$ with $\vc \segto^i \vd$ and $\alpha$ is applicable in $\vd$.
    Note that $\Lambda_0 = [\vc]$ and $R_0$ is the set of reactions applicable in $\vc$. Also, since $\vc \segto^i \vd$ implies $\vc \segto^{i + 1} \vd$ we see that $\Lambda_i \subseteq \Lambda_{i + 1}$ and $R_i \subseteq R_{i + 1}$ for all $i$.
    
    Now we show that for all $i$ there exists some $\vx_i$ such that $[\vx_i] = \Lambda_i$ and $\vc \segto^i \vx_i$ (and therefore $R_i$ consists of the reactions applicable at $\vx_i$). To see this, for each $S \in \Lambda_i$ let $\vd_S$ be a state such that $\vc \segto^i \vd_S$ and $S \in [\vd_S]$. By applying Lemma~\ref{lem:produce-all} to the set of all $\vd_S$, there is some $\vd$ such that $\vc \segto^i \vd$ and $[\vd] = \Lambda_i$; this $\vd$ is our desired $\vx_i$. 
    
    Now we will show that if $\Lambda_i = \Lambda_{i + 1}$ then $R_i = R_{i + 1}$ and, independently, if $R_i = R_{i + 1}$ then $\Lambda_{i + 1} = \Lambda_{i + 2}$ for all $i$. First suppose that $\Lambda_i = \Lambda_{i + 1}$ and let $\alpha$ be a reaction in $R_{i + 1}$. Then there is some state $\vd$ such that $\vc \segto^{i+1} \vd$ and $\alpha$ is applicable at $\vd$. Since all of the reactants of $\alpha$ are present at $\vd$, they are a subset of $\Lambda_{i + 1} = \Lambda_i$. They are therefore present at $\vx_i$, so $\alpha$ is applicable at $\vx_i$. We conclude that $\alpha \in R_i$ so $R_{i + 1} = R_i$.
    
    Now suppose that $R_i = R_{i + 1}$ and let $S$ be a species in $\Lambda_{i + 2}$. Then there is some $\vd$ such that $\vc \segto^{i + 2} \vd$ and $S \in [\vd]$. 
    {If $S \in [\vc]$, then $S \in \Lambda_{i + 1}$.
    Otherwise,} $S$ must be produced by some reaction $\alpha$ in $R_{i + 1} = R_i$, and we can apply $\alpha$ to $\vx_i$ to obtain a state $\vd'$ such that $\vc \segto^i \vx_i \to^1 \vd'$ and $S \in [\vd']$. 
    Again, we conclude that $S \in \Lambda_{i + 1}$ so $\Lambda_{i + 2} = \Lambda_{i + 1}$. 
    
    Combining the two statements we just proved, we see that if $\Lambda_i = \Lambda_{i + 1}$, then $\Lambda_i = \Lambda_j$ for all $j \ge i$, so $\Lambda_i = \producible_{< \infty}(\vc)$. Similarly, if $R_i = R_{i + 1}$, then $\Lambda_{i + 1} = \producible_{< \infty}(\vc)$. 
    
    If $|\Lambda| \le |R|$, then since $\Lambda_0 \subseteq \Lambda_1 \subseteq \ldots$ is an increasing sequence of subsets of the finite set $\Lambda$, it must be the case that $\Lambda_j = \Lambda_{j + 1}$ for some $j \le |\Lambda|$, and in this case $\vx_j$ gives our desired $\vd$. If, on the other hand, $|R| \le |\Lambda|$, the proof is similar: first note that if $R_0 = \emptyset$ we're done. Otherwise $|R_0| \ge 1$ so since $R_i$ is an increasing sequence of subsets of $R$ there is some $j \le |R| - 1$ such that $R_j = R_{j + 1}$. Then $\Lambda_{j + 1} = \producible_{< \infty}(\vc)$ so $\vx_{j + 1}$ gives our desired $\vd$.
\end{proof}

Recall that a set is \emph{closed} if it contains all of its limit points.

\begin{lemma} \label{lem-one-seg-closed}
Let $\vc \in \Rp^\Lambda$ be any state and let $S_\vc \subseteq \Rp^\Lambda$ be the set of states that are straight-line reachable from $\vc$. Then $S_\vc$ is closed.
\end{lemma}

\begin{proof}
Let $R_\vc$ be the set of reactions that are applicable at $\vc$. 
Then $C = \setl{ \vu \in \R_{\ge 0}^R }{ \vu(\alpha) = 0 \text{ for } \alpha \notin R_\vc }$ is a polyhedron.
{Then} $\vc + \vM C$ is also a polyhedron 
(see~\cite{ziegler1995polytopes}), and is in particular closed. $S_c$ is just $\R_{\ge 0}^\Lambda \cap (\vc + \vM C)$, and is therefore also closed. 
\end{proof}

Note that Lemma~\ref{lem-one-seg-closed} is false if we replace ``straight-line reachable'' with ``segment-reachable''.
For example, consider $X \to C$ and $X + C \to Y + C$, where we take the initial state 
$\vc = \{1X, 0C, 0Y\}$.
Note that for any $\varepsilon > 0$ we can reach the state $\vd_\varepsilon = \{0 X, \varepsilon C, (1 - \varepsilon) Y \}$.
However, 
because producing $Y$ first requires consuming a positive amount of $X$ to create the catalyst $C$,
the state 
$\vd_0 = \{ 0X, 0C, 1 Y\}$
is not segment reachable from $\vc$, even though $\vd_0 = \lim_{\varepsilon \to 0} \vd_{\varepsilon}$.

If we have an infinite sequence of states such that $\vc = \vb_0 \segto \vb_1 \segto \vb_2 \dots$ and $\lim_{i \to \infty} \vb_i = \vd$, this does not immediately imply that $\vc \segto^\infty \vd$ by \cref{defn-reachable-lines}.
This is because although the endpoints of the paths $\vb_i \segto \vb_{i+1}$ converge to $\vd$, the intermediate states on the paths related by $\to^1$
(i.e., $\vb_i \to^1 \vb_i' \to^1 \vb_i'' \to^1 \dots \to^1 \vb_{i+1}$)
may not converge.
To capture this weaker notion of convergence we introduce the following definition,
which generalizes $\segto^\infty$ by requiring only that there be a converging \emph{subsequence} of states. 
(The weaker notion will be eventually needed to prove \cref{lem:not-stably-compute-implies-can-reach-far-from-correct}.)

\newcommand{\segtoio}{\segto^\infty_{\mathrm{ss}}}

\begin{defn}
\label{defn:io-segment-reachable}
State $\vd$ is \emph{s.s.\ segment reachable} from state $\vc$, written $\vc \segtoio \vd$, if 
$(\exists \vb'_0,\vb'_1,\dots)\ \vc = \vb'_0 \to^1 \vb'_1 \to^1 \dots$,
where for some subsequence 
$\vb_0, \vb_1, \dots$ of 
$\vb'_0,\vb'_1,\dots$,
$\lim\limits_{i\to\infty} \vb_i = \vd$.
\end{defn}

The main results of this section have this weaker notion of convergence as a precondition,
which will imply that, despite appearances, $\segto^\infty$ and $\segtoio$ are actually equivalent.

The next lemma shows that if no more species are producible from state $\vc$ than are already present in $\vc$,
then any state $\vd$ that is $\segtoio$ reachable from $\vc$ is reachable via a single straight line segment.

\begin{lem}
\label{lem-all-producible-present}
If $[\vc] = \producible(\vc)$, then every state $\vd$ such that $\vc \segtoio \vd$ is straight-line reachable from $\vc$ (i.e., $\vc \segtoio \vd$ implies $\vc \to^1 \vd$).
\end{lem}

\begin{proof}
First consider the finite case where $\vc \segto^k \vd$ for $k < \infty$. Let $\vu_1, \ldots, \vu_k$ be the flux vectors corresponding to the segments in the path from $\vc$ to $\vd$. Then $\vu = \vu_1 + \ldots + \vu_k$ is a vector in $\Rp^R$.
Every reaction that occurs with positive flux in $\vu$ has positive flux in one of the $\vu_i$,
and thus its reactants are in $\producible(\vc)$, 
so are present in $\vc$ by the assumption $[\vc] = \producible(\vc)$.
Thus 
every reaction that occurs with positive flux in $\vu$ is applicable at $\vc$. 
The straight-line from $\vc$ corresponding to $\vu$ takes $\vc$ to 
\[\vc + \vM\vu = \vc + \sum_{i = 1}^k \vM\vu_i = \vd,\]
so $\vc \to_\vu^1 \vd$.

Now suppose that $\vc \segtoio \vd$. Then there is a sequence $\vb'_1, \vb'_2, \dots$ of states such that $\vc \to^1 \vb'_1 \to^1 \vb'_2 \ldots$ and,
for some subsequence $\vb_1, \vb_2, \dots$ of $\vb'_1, \vb'_2, \dots$,
$\vd = \lim_{i \to \infty} \vb_i$. For each finite $i \geq 1$, for some finite $j \geq i$, $\vc \segto^j \vb_i$.
So by the finite case shown above, 
$\vc \to^1 \vb_i$. 
Thus $\{\vb_1, \vb_2, \ldots\} \subseteq S_\vc$,
where $S_\vc$ is the set of states straight-line reachable from $\vc$.
Since $S_\vc$ is closed by Lemma~\ref{lem-one-seg-closed},
it contains all its limit points,
so $\vd \in S_\vc$ as well,
i.e., $\vc \to^1 \vd$. 
\end{proof}

Finally, the previous lemmas can be combined to show that at most a constant number of straight line segments (depending on the CRN) are required to reach from any state to any other reachable state.
In fact, this holds even for states that are only $\segtoio$ reachable.

\begin{thm}
\label{thm:io-reachable-segment-flux-bound}
  If $\vc \segtoio \vd$, then $\vc \segto^{m+1} \vd$, where $m=\min\{|\Lambda|,|R|\}$. Additionally, there is a constant $K$, 
  depending only on the CRN, 
  so that the path from $\vc$ to $\vd$ can be chosen so that the total flux of all reactions along the path is less than $K\| \vd - \vc \|$.
\end{thm}

\begin{proof}
    First, without loss of generality we can consider the reduced CRN where we remove all of the reactions that are not used with positive flux in the given path from $\vc$ to $\vd$. By Lemma~\ref{lem-all-species-present}, we can find a state $\vc'$ such that $\vc \segto^m \vc'$ and $[\vc'] = \producible(\vc)$. 
    We now show that we can ``scale-down'' the path $\vc \segto^m \vc'$ such that no reaction occurs more than in the original path $\vc \segto \vd$, allowing us to complete the path to $\vd$ using Lemma~\ref{lem-all-producible-present}.
    
    First consider the finite case, where $\vc \segto^k \vd$ for $k < \infty$.
    {We make the following general observation about finite paths: Let}
    $\vrho \in \Gamma_l$ be any finite path 
    with segments given by the flux vectors $\vu_1 \ldots \vu_l$, and let $F_{\vrho, \alpha}$ be the total flux through the reaction $\alpha$ along $\vrho$, i.e.
    \[F_{\vrho, \alpha} = \sum_{i = 1}^{l} \vu_i(\alpha).\]
    Then if $\vrho = \lambda \vrho_1 + (1 - \lambda) \vrho_0$, then
    \[F_{\vrho, \alpha} = \lambda F_{\vrho_1, \alpha} + (1 - \lambda) F_{\vrho_0, \alpha}.\]

    Let $\vsigma \in \Gamma_m$ be the path from $\vc$ to $\vc'$, let $\vsigma_0 \in \Gamma_0 \subseteq \Gamma_m$ be the 
    {trivial path staying at $\vc$}, and let $\vgamma \in \Gamma_k$ be the given path from $\vc$ to $\vd$. We can find some small $\varepsilon > 0$ such that
    \[\varepsilon F_{\vsigma, \alpha} < F_{\vgamma, \alpha}\]
    for all reactions $\alpha \in R$. As a result, letting $\vgamma' = \varepsilon \vsigma + (1 - \varepsilon) \vsigma_0$, 
    we see that
    \[F_{\vgamma', \alpha} = \varepsilon F_{\vsigma, \alpha} + ( 1- \varepsilon) F_{\vsigma_0, \alpha}  = \varepsilon F_{\vsigma, \alpha} < F_{\vgamma, \alpha}.\]
    
    Let $\va$ be the state reached via $\vgamma'$ (in particular $\vc \segto^m \va$).
    Since $\va = \varepsilon \vc' + (1-\varepsilon) \vc$ and $[\vc'] = \producible(\vc)$ for the reduced CRN, 
    we also have that $[\va] = \producible(\vc)$.
    Thus all reactions $\alpha$ of the reduced CRN are applicable at $\va$,
    and by Lemma~\ref{lem-all-producible-present},
    the final straight line from $\va$ can be defined by the flux vector $\vu_\alpha = F_{\vgamma, \alpha} - F_{\vgamma', \alpha}$, so that $\va + \vM \vu = \vd$.
    This shows that 
    $\va \to^1 \vd$, so
    $\vc \segto^{m+1} \vd$,
    proving the theorem for the case of finitely many segments.
    
    Now suppose that $\vc \segtoio \vd$, and let $\vgamma \in \Gamma_\infty$ be the path $\vc \to^1_{\vu_1} \vc_1 \to^1_{\vu_2} \vc_2 \ldots$ that starts at $\vc$ and has an infinite subsequence of states $\vb_1,\vb_2,\dots$ converging to to $\vd$. Because we have assumed without loss of generality that every reaction $\alpha$ occurs with positive flux along $\vgamma$, we know that 
    \[\sum_{i = 1}^\infty \vu_i(\alpha)\]
    is positive (although it might be infinite). As a result, there is some finite $N_\alpha$ such that 
    \[\sum_{i = 1}^{N_\alpha} \vu_i(\alpha) > 0.\]
    Let $N = \max_{\alpha \in R} N_\alpha$ be the number of line segments required for each reaction to have had positive flux. 
    The truncation of $\vgamma$ to a path with $N$ segments from $\vc$ to $\vc_N$ is then a path $\vgamma_N$ where every reaction $\alpha$ occurs with positive flux. By applying the first part of the argument, we can find a state $\va$ with $[\va] = \producible(\vc)$ so that $\vc \segto^m \va$ and $\va \to^1 \vc_N$. But then because $\vc_N \segtoio \vd$ we see that $\va \segtoio \vd$, so by \Cref{lem-all-producible-present} we see that $\va \to^1 \vd$. As a result, we conclude that $\vc \segto^{m + 1} \vd$.
    
    To see that we can bound the total flux along the path from $\vc$ to $\vd$, first note that by taking $\varepsilon$ small enough, we can guarantee both that the total flux along $\vgamma'$ is bounded by $\| \vd - \vc \|$ and that 
    \[\| \va - \vc\| = \| \vM \vF_{\vgamma'} \| < \| \vd - \vc\|. \]
    By the triangle inequality, this implies that $\| \va - \vd \| < 2\| \vd - \vc \|$. Now by applying \Cref{lem:bound-reaction-fluxes-straight-line-reachability} we see that there's some constant $C$ depending only on the CRN so that the flux vector $\vu$ of the straight line from $\va$ to $\vd$ can chosen with $\| \vu \| < 2C \| \vd - \vc \|$. Taking $K = 2C + 1$ we see that the flux along the whole path $\vc \segto^m \va \to^1 \vd$ is bounded above by $K\| \vd - \vc \|$.
\end{proof}

Note that \cref{thm:io-reachable-segment-flux-bound} immediately implies that $\segto^\infty$ and $\segtoio$ are the same relation, since $\vc \segtoio \vd$ implies $\vc \segto^{m+1} \vd$ implies $\vc \segto^{\infty} \vd$.

Although the full power of \Cref{thm:io-reachable-segment-flux-bound} is useful later in \Cref{thm:partial-state-reachable}, the most important consequence of \Cref{thm:io-reachable-segment-flux-bound} is the following result, which we will use repeatedly.

\begin{corollary}
\label{thm-reachable-segment-bound}
  If $\vc \segto \vd$, then $\vc \segto^{m+1} \vd$, where $m=\min\{|\Lambda|,|R|\}$.
\end{corollary}

\begin{proof}
This follows from \Cref{thm:io-reachable-segment-flux-bound} and the fact that $\vc \segto \vd$ implies $\vc \segtoio \vd$.
\end{proof}

\Cref{thm-reachable-segment-bound} will allow us to assume without loss of generality that there are a constant number of line segments between any two states, simplifying many arguments.
For example, it is not obvious that the relation $\segto^\infty$ is transitive, since one cannot concatenate two infinite sequences.
However, since two finite sequences of segments can be concatenated,
the following corollary is immediate.

\begin{corollary}
\label{cor:segto-is-transitive}
The relation $\segto$ is transitive.
\end{corollary}

    The goal of the reachability relation is to capture ``what could happen'' in chemical reaction networks independently of rates. Thus it is natural to satisfy several properties: 
    the relation should be reflexive (true for $\segto$ since $\vx \segto^0 \vx$) and transitive (\Cref{cor:segto-is-transitive}).
    Further, the relation should be \emph{additive} in the intuitive sense that 
    the presence of additional molecules cannot entirely prevent reactions from happening (although in a kinetic model it could effectively slow down reactions due to competition);
    formally, if $\vx \segto \vy$, then 
    $\vx + \vc \segto \vy + \vc$
    for any state $\vc$.
    Additivity is a crucial property of the more standard notion of discrete CRN reachability, used for example in many cases to prove impossibility results for those systems~\cite{AngluinAE2006semilinear, LeaderElectionDIST, alistarh2017time, belleville2017hardness, alistarh2018space}.
    We also employ additivity of $\segto$ for impossibility results, 
    for example in the proofs of 
    Lemmas~\ref{lem-shifted-in-unshifted} and~\ref{lem-shifted-sets-cover}.%
    \footnote{
    Another property of $\segto$ that we use extensively is \emph{scale-invariance}:
    if $\vx \segto \vy$,
    then 
    $\lambda\vx \segto \lambda\vy$
    for any $\lambda \geq 0$,
    which is essentially responsible for the convexity of \Cref{lem:Pathspace-convex}.
    This does \emph{not} hold for discrete CRN reachability when $\lambda < 1$,
    even when the scaled discrete states are well-defined, 
    e.g., 
    the reaction $X+X \to Y$ is applicable in state $\vx = \{2X\}$ but not in state $0.5 \vx = \{1 X\}$ in the discrete model.
    }

    While satisfying these properties is a good start for justifying the reasonableness of $\segto$,
    it is natural to wonder whether there are some ``reasonable'' rate laws that segment-reachability fails to capture, i.e., perhaps some CRN rate law would take $\vx$ to $\vy$ even though $\vx \not\segto \vy$.
    In \Cref{sec:mass-action-reachability}, we define an apparently much more general notion of reachability (Definition~\ref{defn:valid-rate-schedule}) that captures all commonly-studied rate laws, 
    while still respecting the fundamental semantics of reactions.
    We prove that our reachability relation is in fact identical to it,
    i.e., $\vx$ can reach to $\vy$ under this notion if and only if $\vx \segto \vy$ (Theorem~\ref{thm:valid-reachable-implies-segment-reachable} and Lemma~\ref{lem:segment-reachable-implies-valid-rate-schedule}).

\newcommand{\bdd}{\mathrm{b}}
\newcommand{\unbdd}{\mathrm{u}}

\subsection{Generality of Segment Reachability}
\label{sec:mass-action-reachability}

In this section we justify that our notion of reachability via straight lines actually corresponds to the most general notion of ``being able to get from one state to another,'' restricted only by the non-negativity of concentrations, reaction stoichiometry and the need for catalysts---as long as we maintain the causal relationships between the production of species.
This notion of reachability admits ``time-varying'' rate laws where reactions occur according to some arbitrary schedule, which captures situations such as solutions that are not well-mixed, or where physical parameters, such as temperature, change in some arbitrary way.
The main result of this section,
~\Cref{thm:valid-reachable-implies-segment-reachable},
formalizes this idea and
justifies calling segment-reachability simply ``reachability'' in the rest of this paper.

We begin with a review of mass-action kinetics, 
the most commonly used rate law in chemistry, 
and show the (physically and intuitively obvious but mathematically subtle) features that make it consistent with segment reachability. 
We then generalize rate-law trajectories to arbitrary ``valid rate schedules'', and prove that these are exactly captured by segment-reachability.

A CRN with positive rate constants assigned to each reaction defines a mass-action ODE (ordinary differential equation) system with a variable for each species, 
which  represents the time-varying concentration of that species. 
We follow the convention of upper-case species names and lower-case concentration variables.
Each reaction contributes one term to the ODEs for each species 
produced or consumed in it.
The term from reaction $\alpha$ appearing in the ODE for $x$ is the product of:
the rate constant,
the reactant concentrations,
and the net stoichiometry of $X$ in $\alpha$ (i.e., the net amount of $X$ produced by $\alpha$, negative if consumed).
For example, the CRN
\begin{align*}
X + X &\rxn^{k_1} C \\
C + X &\rxn^{k_2} C + Y
\end{align*}
corresponds to ODEs:
\begin{align}
\label{eqns-mass-action-1}
dx/dt &= -2 k_1 x^2 - k_2 c x \\
dc/dt &= k_1 x^2 \\
dy/dt &=  k_2 c x
\label{eqns-mass-action-3}
\end{align}
where $k_1, k_2$ are the rate constants of the two reactions.

Given a CRN $C = (\Lambda, R)$,
let $\vec{A}(\vd): \Rp^\Lambda \to \Rp^R$ be the rates of all the reactions in state $\vd$ as given by the mass-action ODEs.
Given an assignment of (strictly) positive rate constants,
and an initial state $\vc$,
the \emph{mass-action trajectory} is a function $\vrho: [0,t_\mathrm{max}) \to \Rp^\Lambda$,
where $t_\mathrm{max} \in \Rp \cup \{\infty\}$,
such that $\vrho$ is the solution to $d \vrho/dt = \vM\cdot\vec{A}(\vrho(t))$ with $\vrho(0) = \vc$, where $t_\mathrm{max}$ is the maximum time, typically $\infty$, for which the solution is defined on all of $[0,t_\mathrm{max})$.
Although beyond the scope of this paper, 
mass-action ODEs are locally Lipschitz, so a CRN 
admits exactly one mass-action trajectory for a fixed collection of rate constants and initial state $\vc$.
Note that for some CRNs (e.g.~$2X \to 3X$), the solution of the ODEs goes to infinite concentration in finite time,\footnote{Indeed, the mass-action ODE corresponding to the CRN~$2X \to 3X$ is $\frac{dx}{dt} = x^2$, which is solved by $x(t) = \frac{1}{C - t}$, where $C = 1/x(0)$.
This goes to infinity as $t$ approaches $C$.}
and for such CRNs, $t_\mathrm{max}$ is finite.

\begin{defn}
\label{defn:mass-action-reachable}
Fix an assignment of positive mass-action rate constants. 
Let $\vc,\vd$ be two states.
We say $\vd$ is \emph{mass-action reachable (with respect to the rate constants)} from $\vc$ if the associated mass action trajectory $\vrho$ obeys $\vrho(0) = \vc$ and either $\vrho(t) = \vd$ for some finite $t \geq 0$ or $\lim_{t\to\infty} \vrho(t) = \vd$.\footnote{
    Note that a more general definition would say $\vd$ is mass-action reachable from $\vc$ if there exist positive rate constants such that the trajectory starting at $\vc$ passes through or approaches $\vd$. Note, however, that this relation is not transitive: for some CRNs, $\vc$ reaches to $\vd$ under one set of rate constants and $\vd$ reaches to $\vx$ under another set of rate constants, yet no \emph{single} assignment of rate constants takes the CRN from $\vc$ to $\vx$.
}
\end{defn}

In order to prove Theorem~\ref{thm:valid-reachable-implies-segment-reachable} we need to introduce the notion of a \emph{siphon} from the Petri net literature.
This notion will be used, as well, to prove negative results in \Cref{subsec-negative-computable-implies-piecewise-linear}.

\begin{defn}
\label{defn:siphon}
Let $\calC=(\Lambda,R)$ be a CRN.
A \emph{siphon} is a set of species $\Omega \subseteq \Lambda$ such that, for all reactions $\langle \vr,\vp \rangle \in R$, $[\vp] \cap \Omega \neq \emptyset \implies [\vr] \cap \Omega \neq \emptyset$, i.e., every reaction that has a product in $\Omega$ also has a reactant in $\Omega$.
\end{defn}

The following lemma, due to Angeli, De Leenheer, and Sontag~\cite{angeli2007petri}, shows that this is equivalent to the notion that ``the absence of $\Omega$ is forward-invariant'' under mass-action: if all species in $\Omega$ are absent, then they can never again be produced (under mass-action).
\footnote{It is obvious in the discrete CRN model, and in an intuitive physical sense, that 
if producing a species initially absent causally requires another species also initially absent and vice versa, then neither species can ever be produced.
However, 
it requires care to prove this for mass-action ODEs.
Consider the CRN $2X \to 3X$. The corresponding mass-action ODE is $dx/dt = x^{2}$, and has the property that starting with $x(0) = 0$, it cannot become positive, i.e., the only solution with $x(0)=0$ is $x(t) = 0$ for all $t \geq 0$.
However, the very similar non-mass-action ODE
$dx/dt = x^{1/2}$
has a perfectly valid solution $x(t) = t^2/4$, which starts at $0$ but becomes positive,
despite the fact that at $t=0$, $dx/dt = 0$.
(Though $x(t) = 0$ for all $t \geq 0$ is another valid solution.)
The difference is that mass-action polynomial rates are locally Lipschitz (have bounded rates of change, unlike $x^{1/2}$, whose derivative goes to $\infty$ as $x \to 0$) and so are guaranteed to have a unique solution by the Picard-Lindel\"of theorem.
}
For the sake of completeness, we give a self-contained proof in \Cref{app:siphons-absence-forward-invariant}.

\newcommand{\siphonsLemText}{
  Fix any assignment of positive mass-action rate constants.
  Let $\Omega \subseteq \Lambda$ be a set of species.
  Then $\Omega$ is a siphon if and only if, for any state $\vc$ such that $\Omega \cap [\vc] = \emptyset$ and any state $\vd$ that is mass-action reachable from $\vc$, $\Omega \cap [\vd] = \emptyset$.
}

\newtheorem*{siphonsLem}{\Cref{lem-mass-action-forward-invariant-siphon}}

\begin{lem}[\cite{angeli2007petri}, Proposition 5.5]
\label{lem-mass-action-forward-invariant-siphon}
\siphonsLemText
\end{lem}

We show that the same holds true for segment-reachability.
Due to the discrete nature of segment-reachability, the proof is more straightforward than that of \Cref{lem-mass-action-forward-invariant-siphon}. It follows the same essential structure one would use to prove this in the discrete CRN model: if the siphon $\Omega$ is absent, no reaction with a reactant in $\Omega$ can be the next reaction to fire, so by the siphon property, no species in $\Omega$ is produced in the next step.

\begin{lem}
\label{lem-seg-reachable-forward-invariant-siphon}
  Let $\Omega \subseteq \Lambda$ be a set of species.
  Then $\Omega$ is a siphon if and only if, for any state $\vc$ such that $\Omega \cap [\vc] = \emptyset$ and any state $\vd$ such that $\vc \segto \vd$, $\Omega \cap [\vd] = \emptyset$.
\end{lem}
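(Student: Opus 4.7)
The plan is to prove the two directions separately, mirroring the structure of Lemma~\ref{lem-mass-action-forward-invariant-siphon} but using the combinatorial flux vectors from Definition~\ref{defn-reachable-line} instead of mass-action ODEs.

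For the forward direction, suppose $\Omega$ is a siphon and $\Omega \cap [\vc] = \emptyset$. By unrolling Definition~\ref{defn-reachable-segment} into the sequence $\vc = \vb_1 \slto \vb_2 \slto \cdots \slto \vb_{l+1} = \vd$, it suffices by induction on $l$ to handle a single segment $\vc \slto_\vu \vd$. I would argue by contradiction: suppose some $S \in \Omega \cap [\vd]$. Since $\vc(S) = 0$ and $\vd(S) = \vc(S) + \sum_\alpha \vM(S,\alpha)\vu(\alpha) > 0$, some reaction $\alpha$ with $\vu(\alpha) > 0$ has $\vM(S,\alpha) > 0$, which forces $\vp_\alpha(S) > \vr_\alpha(S) \geq 0$ so $S \in [\vp_\alpha] \cap \Omega$. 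The siphon condition then yields a reactant $T \in [\vr_\alpha] \cap \Omega$. But applicability of $\alpha$ at $\vc$ (required because $\vu(\alpha) > 0$) gives $T \in [\vc]$, contradicting $\Omega \cap [\vc] = \emptyset$.

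For the reverse direction, I argue the contrapositive. Suppose $\Omega$ is not a siphon, so there is some reaction $\alpha = \pair{\vr}{\vp}$ with $[\vp] \cap \Omega \neq \emptyset$ yet $[\vr] \cap \Omega = \emptyset$. Let $\vc$ be the indicator vector of $[\vr]$, i.e.\ $\vc(S) = 1$ if $S \in [\vr]$ and $\vc(S) = 0$ otherwise. Since $[\vr] \cap \Omega = \emptyset$, we have $\Omega \cap [\vc] = \emptyset$, and $\alpha$ is applicable at $\vc$. Choose $\varepsilon > 0$ small enough that $\vc + \varepsilon \vM(\cdot,\alpha) \in \Rp^\Lambda$ (any $\varepsilon < \min_{S \in [\vr]} 1/\vr(S)$ works), and let $\vu$ be the flux vector with $\vu(\alpha) = \varepsilon$ and zero elsewhere. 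Then $\vc \slto_\vu \vd$ with $\vd(S) \geq \varepsilon \vp(S) > 0$ for every $S \in [\vp] \cap \Omega$, so $\Omega \cap [\vd] \neq \emptyset$, exhibiting the desired counterexample.

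Neither direction looks like it will present a serious obstacle, since the combinatorics of a single straight-line segment already expose the siphon mechanism. The only care needed is in the forward direction, where one must rule out the possibility that $\alpha$ both produces and consumes $S$ with net positive stoichiometry; this is handled automatically by the observation that applicability at $\vc$ together with $\vc(S)=0$ forces $\vr_\alpha(S)=0$, so $\vM(S,\alpha) = \vp_\alpha(S) > 0$ and the siphon condition applies cleanly.
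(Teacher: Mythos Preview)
Your proposal is correct and follows essentially the same approach as the paper: induction along the segments for the forward direction, and in the reverse direction constructing a witnessing state on which the siphon-violating reaction is applicable. The only cosmetic difference is that the paper picks $[\vc] = \Lambda \setminus \Omega$ rather than your $[\vc] = [\vr]$, and your write-up spells out the single-segment step in more detail than the paper does.
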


\begin{proof}
  To see the forward direction,
  suppose $\Omega$ is a siphon, let $\vc$ be a state such that $[\vc] \cap \Omega = \emptyset$, and let $\vd$ be such that $\vc \segto \vd$.
  {By Theorem~\ref{thm-reachable-segment-bound}, there is a finite path $\vgamma$ such that 
  the $n$'th line segment is between states $\vx_{n-1}(\vgamma)$ and $\vx_{n}(\vgamma)$,
  with $\vc = \vx_{0}(\vgamma)$ and $\vd = \vx_{m+1}(\vgamma)$. 
  Assume inductively that $[\vx_{n-1}(\vgamma)] \cap \Omega = \emptyset$;
  then no reaction applicable at $\vx_{n-1}(\vgamma)$ has reactants in $\Omega$.
  So by definition of siphon, 
  no reaction applicable at $\vx_{n-1}(\vgamma)$ has products in $\Omega$,
  and  
  $[\vx_{n}(\vgamma)] \cap \Omega = \emptyset$ as well.
  }
  Therefore $\vd \cap \Omega = \emptyset$.
  This shows the forward direction.
  
  To show the reverse direction, suppose that $\Omega$ is not a siphon.
  Then there is a reaction $\alpha = \langle \vr,\vp \rangle$ such that $[\vp] \cap \Omega \neq \emptyset$, but $[\vr] \cap \Omega = \emptyset$.
  Then from any state $\vc$ such that $[\vc] = \Lambda \setminus \Omega$ (i.e., all species not in $\Omega$ are present), all reactants of $\alpha$ are present, so $\alpha$ is applicable.
  Running $\alpha$ produces $S$, hence results in a state $\vd$ such that $\vc \segto \vd$ with $\Omega \cap [\vd] \neq \emptyset$, since $S \in \Omega$.
\end{proof}

Recall that $\producible(\vc)$ represents the set of species producible from state $\vc$.
The next lemma shows that the set of species that cannot ever be produced from a given state is a siphon.

\begin{lem}
\label{lem:not-producible-is-siphon}
If $\vc$ is any state then $\Omega = \Lambda \setminus \producible(\vc)$ is a siphon.
\end{lem}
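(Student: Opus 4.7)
The plan is to argue by contrapositive: if $\Omega = \Lambda \setminus \calP(\vc)$ fails to be a siphon, then some species in $\Omega$ is actually producible from $\vc$, which is the desired contradiction. The key leverage comes from Lemma~\ref{lem-all-species-present}, which lets us collect every producible species into the support of a single reachable state.

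Concretely, I would first invoke Lemma~\ref{lem-all-species-present} to fix a state $\vd$ with $\vc \segto \vd$ and $[\vd] = \calP(\vc)$. This is the main workhorse: rather than having to juggle many different reachable states (one per producible species), we have a single state at which every producible species is simultaneously present in positive concentration.

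Next, suppose for contradiction that $\Omega$ is not a siphon. Then by definition there is a reaction $\alpha = (\vr, \vp)$ with $[\vp] \cap \Omega \neq \emptyset$ but $[\vr] \cap \Omega = \emptyset$. Pick $S \in [\vp] \cap \Omega$, so $S \notin \calP(\vc)$. Since $[\vr] \cap \Omega = \emptyset$, every reactant of $\alpha$ lies in $\Lambda \setminus \Omega = \calP(\vc) = [\vd]$, so $\alpha$ is applicable at $\vd$.

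Finally, firing $\alpha$ once at $\vd$ yields a state $\vd'$ with $\vd \slto \vd'$ and $S \in [\vd']$, since $\alpha$ produces $S$ and reactions cannot remove a species that was not present (here $S \notin [\vd]$, so its concentration in $\vd'$ equals the positive stoichiometric coefficient of $S$ in $\vp$). Concatenating gives $\vc \segto \vd \segto \vd'$, so $\vc \segto \vd'$, which means $S \in \calP(\vc)$, contradicting $S \in \Omega$. I do not expect any real obstacle here; the only thing to be slightly careful about is ensuring $S$ actually has strictly positive concentration in $\vd'$, which is immediate because $S \notin [\vd]$ so no cancellation between production and consumption can occur.
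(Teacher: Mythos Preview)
Your proposal is correct and follows essentially the same approach as the paper's proof: invoke Lemma~\ref{lem-all-species-present} to get a reachable state containing all of $\calP(\vc)$, then argue by contradiction that a siphon-violating reaction would be applicable there and would produce a species in $\Omega$. The paper's version is slightly terser (it does not spell out the detail about $S \notin [\vd]$ ensuring positivity in $\vd'$), but the argument is the same.
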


\begin{proof}
By Lemma~\ref{lem-all-species-present} there is a state $\vc'$ that is segment reachable from $\vc$ with all of the species in $\producible(\vc)$ present. If $\Omega$ were not a siphon, there would be a reaction $\alpha$ that produced a species $S$ of $\Omega$ such that all of the reactants of $\alpha$ would be contained in $\Lambda \setminus \Omega = \producible(\vc)$. This implies that $\alpha$ would be applicable at $\vc'$, so $S$ would be in $\producible(\vc)$, giving a contradiction.
\end{proof}

The main result of this section is Theorem~\ref{thm:valid-reachable-implies-segment-reachable},
which justifies that our (seemingly limited) notion of reachability via straight lines is actually quite general.
To state the theorem, we define a very general notion of ``reasonable rate laws'',
which are essentially schedules of rates to assign to reactions over time.
All known rate laws such as mass-action, Michaelis-Menten, Hill function kinetics, 
as well as our own nondeterministic notion of adversarial rates following straight lines (segment reachability, Definition~\ref{defn-reachable-segment}),
obey this definition.
(We justify this below explicitly for mass-action and Definition~\ref{defn-reachable-segment}, but it is straightforward to verify in the other cases.)

Recall that $R$ is the set of all reactions in some CRN, and $\Lambda$ is the set of its species.

\begin{defn}
\label{defn:valid-rate-schedule}
A \emph{rate schedule} is a function $\vf: \R_{\geq 0} \to \R_{\geq 0}^R$.
We interpret $\vf_\alpha(t)$ to be the rate, or instantaneous flux, at which reaction $\alpha$ occurs at time $t$.
Given a state $\vc \in \Rp^\Lambda$,
we say $\vf$ is a \emph{valid rate schedule starting at $\vc$} if:
\begin{enumerate}
    \item
    \label{defn:valid-rate-schedule:integrable}
    \textbf{(Total reaction fluxes are well-defined)}
    For each $\alpha \in R$,
    $\vf_\alpha$ is (locally Lebesgue) integrable:
    for each time $t \geq 0$,
    $\vF_\alpha(t) = \int_0^t \vf_\alpha(t) dt$ is well-defined and finite
    (although $\int_0^\infty \vf_\alpha(t) dt$ may be infinite).
    Let $\vF(t)$ be the vector in $\R_{\ge 0}^R$ whose $\alpha$ coordinate is $\vF_\alpha(t)$, which represents the total amount of each reaction flux that has happened by time $t$.\footnote{
        An alternative to Definition~\ref{defn:valid-rate-schedule} would start with a differentiable trajectory $\vrho$ and total flux $\vF$ 
        (related via $\vrho(t) = \vM \cdot \vF(t) + \vc$) and define $\vf = d\vF/dt.$
        However, requiring differentiable $\vrho$ and $\vF$ rules out many natural cases, such as the rate schedules implicit in segment-reachability (\Cref{defn-reachable-lines}),
        whose trajectories are not differentiable at cusp points $\vb_i$ in between straight lines and whose rate schedules are not even continuous.
    }
    
    Define the \emph{trajectory} $\vrho: \Rp \to \R_{\geq 0}^\Lambda$ of $\vf$ starting at $\vc$ for all $t \geq 0$ by 
    $\vrho(t) = \vM \cdot \vF(t) + \vc$,
    which represents the state of the CRN at time $t$.

    

    \item
    \label{defn:valid-rate-schedule:reactants-present}
    \textbf{(Positive-rate reactions require their reactants present)}
    For all times $t \geq 0$ and reactions $\alpha \in R$,
    if $\vf_\alpha(t) > 0$,
    then $\alpha$ is applicable in $\vrho(t)$.
    %
    
    \item 
    \label{defn:valid-rate-schedule:siphons}
    \textbf{(Absence of siphons is forward-invariant)}
    For every siphon $\Omega \subseteq \Lambda$, if $\Omega \cap [\vrho(t)] = \emptyset$ for some time $t$, then $\Omega \cap [\vrho(t')] = \emptyset$ for all times $t' \geq t$.
    
\end{enumerate}
\end{defn}

\begin{definition}
We say that a state $\vd$ is \emph{reachable from a state $\vc$ by a valid rate schedule} if there is a valid rate schedule $\vf$ starting at $\vc$, with trajectory $\vrho$, such that either $\vd = \vrho(t_f)$ for some $t_f < \infty$, 
or $\vd = \lim_{t \to \infty} \vrho(t)$. 
In the first case we say that $\vd$ is reached in finite time.
\end{definition}

We note that because $\vf$ is Lebesgue integrable, by \cite[Theorem~6.11]{royden1988real}, $\vF_\alpha(t)$ is locally absolutely continuous.

Although Definition~\ref{defn:valid-rate-schedule} explicitly constrains the states $\vrho(t)$ to be non-negative, the non-negativity of $\vrho(t)$ actually follows from condition~\eqref{defn:valid-rate-schedule:reactants-present}.\footnote{
Consider a rate schedule that takes concentrations negative: for instance starting with $1 X$ and applying reaction $\alpha: X \to Y$ with $\vF_\alpha(t) = 2$ for some $t > 0$.
To see that this contradicts condition~\eqref{defn:valid-rate-schedule:reactants-present}
when we naturally generalize notation $[\vc]$ to possibly negative $\vc$
(for any $\vc \in \R^\Lambda$, $[\vc] = \{S \in \Lambda \ |\ \vc(S) > 0 \}$),
suppose that $\vrho_X(t') < 0$ for some species $X$ at some time $t'$. 
Let $t_0$ be the supremum of all the times less than $t'$ where $\vrho_X(t) \ge 0$. 
Recall
$\vF_\alpha(t)$ is a locally absolutely continuous (and therefore continuous) function. 
Thus $\vrho(t)$ is also a continuous function so $\vrho_X(t_0) \ge 0$. Moreover, for all $t_0 < t < t'$ we know that $\vrho_X(t) < 0$ by our choice of $t_0$. So by condition~\eqref{defn:valid-rate-schedule:reactants-present} $\vf_\alpha(t) = 0$ for all $\alpha$ where $X$ is a reactant, and therefore (recall $\vM(X,\alpha)$ means the net consumption of $X$ in reaction $\alpha$)
\[\vrho_X(t') = \vrho_X(t_0) + \sum_{\alpha \in R} \vM(X,\alpha) \int_{t_0}^{t'} 
\vf_\alpha(t) dt \ge \vrho_X(t_0) \ge 0\]
a contradiction since $\vrho(t')$ was assumed to be negative. 
See also~\cite[Proposition 2.8]{fages2015inferring}, where the term \emph{strict} is equivalent to condition~\eqref{defn:valid-rate-schedule:reactants-present}.
}

Conditions~\eqref{defn:valid-rate-schedule:reactants-present} and~\eqref{defn:valid-rate-schedule:siphons} may appear redundant,
but in fact each can be obeyed while the other is violated.

For example, consider the reaction $\alpha: X \to 2X$,
starting in the state $\{0 X\}$,
with invalid rate schedule $\vf_\alpha(t) = t$,
with trajectory $\vrho_X(t) = t^2 / 2$.
Since the rate $\vf_\alpha(0)$ is 0, this vacuously satisfies~\eqref{defn:valid-rate-schedule:reactants-present} at time 0,
and since $\vrho_X(t) > 0$ for $t > 0$ ($X$ is present at all positive times), \eqref{defn:valid-rate-schedule:reactants-present} is also satisfied for positive times.
However, \eqref{defn:valid-rate-schedule:siphons} is violated, since $\{X\}$ is a siphon absent at time 0 but present at future times.
This example also demonstrates why condition~\eqref{defn:valid-rate-schedule:siphons} is required to satisfy our intuitive understanding of reasonable rate laws respecting ``causality of production'' among species:
with only the reaction $X \to 2X$,
the only way to produce more $X$ is already to have some $X$.

To see the other case,
take reactions $\alpha: X \to C$ and $\beta: C+X \to C+Y$,
starting in state $\{1 X, 1 Y, 0 C\}$.
Consider the invalid rate schedule $\vf_\alpha(t) = 0$ for all $t$, 
$\vf_\beta(t) = 1$ for $0 \leq t \leq 1/2$, 
and $\vf_\beta(t)=0$ for $t > 1/2$,
i.e., run only $\beta$, until $X$ is half gone.
This violates~\eqref{defn:valid-rate-schedule:reactants-present},
since $\beta$ occurs without its reactant $C$ present.
However, the only set of species absent along this trajectory is $\{C\}$,
which is not a siphon since reaction $\alpha$ has $C$ as a product but not a reactant,
so~\eqref{defn:valid-rate-schedule:siphons} is vacuously satisfied.

The next lemma shows that the most commonly-used rate law, mass-action,
gives a valid rate schedule and trajectory according to Definition~\ref{defn:valid-rate-schedule}.
Recall that $\vec{A}(\vrho(t)): \Rp^\Lambda \to \Rp^R$ represents the rates of all the reactions in state $\vrho(t)$ as given by the mass-action ODEs.
For instance, for our mass-action example at the beginning of this section, the function $\vf(t) = \vA(\vrho(t))$ corresponds to the ODEs of equations~\ref{eqns-mass-action-1}--\ref{eqns-mass-action-3} when written as $d\vrho/dt = \vM\cdot\vf(t)$.

\begin{lemma}
\label{lem:mass-action-gives-valid-rate-schedule}
Suppose we fix an assignment of positive mass-action rate constants for a given CRN as well as an initial state $\vc$. Suppose that the associated mass action trajectory $\vrho$ is defined for all time. Then $\vf: \Rp \to \Rp^R$ such that
$\vf(t) = \vec{A}(\vrho(t))$
is a valid rate schedule whose trajectory is $\vrho$.
\end{lemma}

\begin{proof}
First, note that because $\vrho(t)$ is a real analytic function, $\vec{A}(\vrho(t))$ is necessarily also real analytic, and therefore locally integrable, so condition~\eqref{defn:valid-rate-schedule:integrable} of Definition~\ref{defn:valid-rate-schedule} is satisfied. Let $\Tilde{\vrho}(t)$ be the trajectory associated with the rate schedule $\vf$. Then because $\vrho(t)$ is a solution to the mass-action ODEs with initial state $\vc$, 
\begin{align*}
\Tilde{\vrho}(t) &= \vc + \vM \int_0^t \vf(t) dt \\
&= \vc + \int_0^t \vM \cdot \vec{A}(\vrho(t)) dt \\
&= \vc + \int_0^t \frac{d}{dt} \vrho(t) dt \\
&= \vc + (\vrho(t) - \vc) \\
&= \vrho(t).
\end{align*}
Because $\vec{A}(\vrho(t))$ can only be positive when $\vrho_X(t) > 0$ for all reactants $X$ of $\alpha$, we see that $\vf_\alpha(t) > 0$ implies that $[\Tilde{\vrho}(t)] = [\vrho(t)]$ contains all of the reactants of $\alpha$. Therefore condition~\eqref{defn:valid-rate-schedule:reactants-present} of Definition~\ref{defn:valid-rate-schedule} is satisfied. Finally condition~\eqref{defn:valid-rate-schedule:siphons} of Definition~\ref{defn:valid-rate-schedule} is satisfied by Lemma~\ref{lem-mass-action-forward-invariant-siphon}.
\end{proof}

We say that a rate schedule $\vf: \Rp \to \Rp^R$ is \emph{finite} if there is $t_0 \geq 0$ such that $\vf(t) = \vec{0}$ for all $t \geq t_0$, i.e., reactions eventually stop occurring.
The following observation is straightforward to verify, showing that the concatenation of two valid rate schedules, with the first finite, is also a valid rate schedule.

\begin{observation}
\label{obs:concat-valid-rate-schedules}
If $\vf, \vg$ are valid rate schedules,
with $\vf$ finite such that $\vf(t) = \vec{0}$ for all $t \geq t_0$,
then $\vh$ defined by $\vh(t) = \vf(t)$ for $0 \leq t \leq t_0$ and $\vh(t) = \vg(t-t_0)$ for $t > t_0$, is a valid rate schedule.
\end{observation}

The next lemma shows essentially that our definition of segment-reachability creates a valid rate schedule.

\begin{lemma}
\label{lem:segment-reachable-implies-valid-rate-schedule}
If $\vc \segto \vd$ then $\vd$ is reachable from $\vc$ by a valid rate schedule.
\end{lemma}

\begin{proof}
Since $\vc \segto \vd$, by Theorem~\ref{thm-reachable-segment-bound} we know that $\vc \segto^{m + 1} \vd$.
Using induction and Observation~\ref{obs:concat-valid-rate-schedules},
it suffices to verify that the rates defined by the straight-line reachability relation $\vc \to^1_{\vu} \vd$ describe a valid rate schedule,
since the rate schedules given by $\segto^{m+1}$ are simply concatenations of these.
Define $\vf: \Rp \to \Rp^R$ by $\vf(t) = \vu(\alpha)$ for $0 \leq t < 1$ and $\vf(t) = \vec{0}$ for all $t \geq 1$.
(In other words, for one unit of time, run the reactions at constant rates described by $\vu$.)
Then $\vf$ is piecewise constant, and therefore integrable, so condition~\eqref{defn:valid-rate-schedule:integrable} of Definition~\ref{defn:valid-rate-schedule} is satisfied. 
Next note that for all $t \ge 1$, since $\vf_\alpha(t) = 0$ for all $\alpha$, 
condition~\eqref{defn:valid-rate-schedule:reactants-present} of Definition~\ref{defn:valid-rate-schedule} holds vacuously, and condition~\eqref{defn:valid-rate-schedule:siphons} holds because $\vrho(t) = \vrho(t')$ for any $t' > t$. 
Now let $t < 1$. 
Observe that
\begin{align*}
\vrho(t) &= \vc + \vM \int_0^{t} \vf(T) dT
\\&= \vc + \vM \int_0^{t} \vu(T) dT 
\\&= \vc + t \vM\vu
\\&= (1 - t)\vc + t(\vc + \vM\vu) 
\\&= (1 - t)\vc + t\vd.
\end{align*}
Since $\vd$ is a state (and thus non-negative on all species) and $t<1$, every species present with positive concentration in $\vc$ is present with positive concentration in $\vrho(t)$.
Thus all reactions applicable at $\vc$ are also applicable at $\vrho(t)$,
so condition~\eqref{defn:valid-rate-schedule:reactants-present} of Definition~\ref{defn:valid-rate-schedule} is also satisfied. 
Finally by Lemma~\ref{lem-seg-reachable-forward-invariant-siphon} we see that condition~\eqref{defn:valid-rate-schedule:siphons} of Definition~\ref{defn:valid-rate-schedule} is satisfied and therefore $\vf$ is a valid rate schedule starting at $\vc$. Since $\vrho(1) = \vd$, we see that $\vd$ is reachable from $\vc$ by a valid rate schedule.
\end{proof}


Finally, we have the main result of this section, which shows that segment reachability is as general as any valid rate schedule.

\begin{thm}  \label{thm:valid-reachable-implies-segment-reachable}
Given two states $\vc$ and $\vd$,  $\vd$ is reachable from $\vc$ by a valid rate schedule if and only if $\vc \segto \vd$.
\end{thm}

\begin{proof}
\Cref{lem:segment-reachable-implies-valid-rate-schedule} establishes the reverse direction.
To see the forward direction,
let $\vf$ be the valid rate schedule from $\vc$ to $\vd$, and define $\vF$ and $\vrho$ for $\vf$ as in Definition~\ref{defn:valid-rate-schedule}.

First, suppose $\vd$ is reached in finite time $t_f \in \Rp$,
i.e., $\vd = \vrho(t_f)$.

We say that a reaction $\alpha$ \emph{occurs with positive flux} if $\vF_\alpha(t_f) > 0$. 
Let $R_\vrho = \setr{\alpha\in R}{\vF_\alpha(t_f) > 0}$ be the reactions that occur with positive flux along the trajectory $\vrho$,
and let $\Lambda_\vrho = \{ S \in \Lambda \mid \vrho_S(t) > 0 \text{ for some } 0 \leq t \leq t_f \}$ be the species that are present with positive concentration at some point along the trajectory $\vrho$.

Consider removing species not in $\Lambda_\vrho$ and reactions not in $R_\vrho$.
We claim that the pair $(\Lambda_\vrho, R_\vrho)$ is a well-defined CRN, as defined in \Cref{subsec-prelim-defs}, because every reactant and product in $R_\rho$ is in $\Lambda_\rho$.
To see why, let $S \in \Lambda \setminus \Lambda_\vrho$ 
(i.e., $\vrho_S(t) = 0$ for all $0 \leq t \leq t_f$),
let $R_S$ 
be the reactions with $S$ as a reactant,
and let $P_S$ 
be the reactions with $S$ as a product;
we must show 
$R_S \cap R_\vrho = \emptyset$ and
$P_S \cap R_\vrho = \emptyset$.
By \Cref{defn:valid-rate-schedule} part \eqref{defn:valid-rate-schedule:reactants-present},
no reaction in $R_S$ has positive flux, 
so $R_S \cap R_\vrho = \emptyset$.
Since no reaction in $R_S$ has positive flux,
no reaction in $P_S$ can have positive flux or else $S$ would be produced with no reaction to consume it, contradicting $\vrho_S(t_f) = 0$,
so $P_S \cap R_\vrho = \emptyset$.

Now we claim that every species in this reduced CRN $(\Lambda_\vrho, R_\vrho)$ is segment-producible from $\vc$, i.e., $\producible(\vc) = \Lambda_\vrho$.
If not, then $\Omega = \Lambda_\vrho \setminus \producible(\vc)$ is non-empty. 
Letting $S^*$ be some element of $\Omega$, we know that $S^*$ has positive concentration along $\vrho$ by our construction of the reduced CRN. 
However, by Lemma~\ref{lem:not-producible-is-siphon}, $\Omega$ is a siphon. 
Since $\vc$ is zero on $\Omega$ and $S^* \in \Omega$, this violates 
\Cref{defn:valid-rate-schedule} part~\eqref{defn:valid-rate-schedule:siphons}.

Since every species in our reduced CRN is segment-producible from $\vc$, by Lemma~\ref{lem-all-species-present} we can construct a state $\vc'$ segment-reachable from $\vc$ where all of the species in the reduced CRN are present simultaneously.
Since every reaction in $R_\vrho$ is applicable at $\vc'$, the remainder of the proof is similar to the proof for the finite case of Theorem~\ref{thm-reachable-segment-bound}:
by ``scaling down'' the path from $\vc$ to $\vc'$,
there is a state $\va$ such that 
$\vc \segto^m \va$
(where $m = \min\{ |\Lambda_\vrho|, |R_\vrho| \}$),
$[\va] = \producible(\vc) = \Lambda_\vrho$,
and $\va \to^1 \vd$. 
Thus
$\vc \segto^{m+1} \vd$.
This handles the case that $\vd$ is reached in finite time.

On the other hand, suppose that $\vd$ is not reached from $\vc$ in finite time, but instead $\vd = \lim_{t \to \infty} \vrho(t)$.
This case is similar to the proof of the infinite case of Theorem~\ref{thm-reachable-segment-bound}.
By definition of $R_\vrho$, for each reaction $\alpha \in R_\vrho$, $\vF_\alpha(t) > 0$ for some $t > 0$. 
As a result, there is some finite $t_\alpha$ such that $\vF_\alpha(t_\alpha) > 0.$
Let $t_\mathrm{pos} = \max \{t_\alpha \mid \alpha \in \R_\vrho \}$, 
noting that $\vF(t_\mathrm{pos}) > 0$ for all $\alpha \in R_\vrho$,
i.e., each reaction has occurred by time $t_\mathrm{pos}$.
Let $\vb = \vrho(t_\mathrm{pos})$.

By applying the first part of the argument, we can find a state $\va$ with $[\va] = \producible(\vc)$ such that $\vc \segto^m \va$ and $\va \to^1 \vb$. 
Now let $\vd_t = \vrho(t)$ for every time $t > t_\mathrm{pos}$. Because $\vrho$ restricted to $[t_\mathrm{pos}, t]$ gives a finite trajectory from $\vb$ to $\vd_t$, we know by the first part of the argument that $\vb \segto \vd_t$, so $\va \segto \vd_t$. By Lemma~\ref{lem-all-producible-present} we see that $\va \to^1 \vd_t$. Since $\vd = \lim_{t \to \infty} \vd_t$ we see by Lemma~\ref{lem-one-seg-closed} that $\va \to^1 \vd$. 
As a result, we conclude that $\vc \segto \vd$.
\end{proof}

Recall mass-action trajectories correspond to valid rate schedules by Lemma~\ref{lem:mass-action-gives-valid-rate-schedule}.
Thus Theorem~\ref{thm:valid-reachable-implies-segment-reachable} implies the following corollary, which intuitively says that if a state is reachable via a mass-action trajectory 
(even in the limit of infinite time), 
then it is segment-reachable.\footnote{
Although Lemma~\ref{lem:mass-action-gives-valid-rate-schedule} has the precondition that the mass-action trajectory be defined for all time,
states reached in finite time by diverging mass-action CRNs can also be segment-reached.
For example, for the CRN $2X \to 3X$ (with rate constant 1) starting in $\{1 X\}$, 
which diverges as $t \to 1$,
all states on the trajectory prior to time $t = 1$ are segment reachable:
For each such state, we can construct a valid rate schedule that obeys mass-action until reaching that state, and then is constant for all later time.
}

\begin{cor}  
\label{cor:mass-action-reachable}
    Fix an assignment of positive mass-action rate constants for a given CRN.
    Let $\vc,\vd$ be two states
    such that $\vd$ is mass-action reachable from $\vc$.
    Then $\vc \segto \vd$.
\end{cor}

\section{Stable Computation}
\label{subsec-stable-crns}

We now use segment-reachability (\Cref{defn-reachable-segment}) to formalize
what it means for a CRN to \emph{stably compute} a function (\Cref{def:direct-computation}).
The notion of stable computation is motivated by, 
and is essentially identical to, the definition of stable computation for population protocols and discrete CRNs~\cite{AngluinAE2006semilinear, CheDotSolNaCo14}.

In this section, we justify stable computation by arguing for \emph{necessity}: CRCs that we can reasonably call ``rate-independent'' must obey stable computation.
Thus stable computation is immediately useful for negative (impossibility) results: 
showing a function cannot be stably computed implies it is not rate-independent in the desired intuitive sense.
In the next section (\cref{sec:fair-computation}),
we address the other (sufficiency) direction,
and connect stable computation to another notion of computation based on convergence in the limit as $t \to \infty$
that provides very strong guarantees for the desired rate-independent behavior of our constructions.
Based on this connection,
stable computation is taken as the primary definition of rate-independent computation in this work.

First, to formally define what it means for such a CRN to compute a function in any sense, we single out some aspects of the CRN as semantically meaningful.
Formally, a \emph{chemical reaction computer (CRC)} is a tuple $\calC = (\Lambda,R,\Sigma,\Gamma)$, where $(\Lambda,R)$ is a CRN, $\Sigma \subsetneq \Lambda$, written as $\Sigma = \{X_1,\ldots,X_k\}$,\footnote{We assume a canonical ordering of $\Sigma=\{X_1,\ldots,X_k\}$ so that a vector $\vx\in\Rp^k$ (i.e., an input to $f$) can be viewed equivalently as a state $\vx\in\Rp^\Sigma$ of $\calC$ (i.e., an input to $\calC$).
Note that we have defined valid initial states to contain only the input species $\Sigma$; other species must have initial concentration 0.
Our results would change slightly if we relaxed this assumption---see \Cref{sec:initial-context}.
} 
is the set of \emph{input species}, and $\Gamma \subsetneq \Lambda \setminus \Sigma$ is the set of \emph{output species}.
Input and output values can also be encoded indirectly via combinations of species.
An important encoding for the purposes of this paper will be the dual-rail representation (discussed in \cref{sec:results-statement}), which can handle both positive and negative quantities, and allows for easier composition of CRN ``modules''.
Since we focus on single-output functions, we will have either a single output species $\Gamma = \{Y\}$, or in the case of dual-rail computation two output species $\Gamma = \{Y^+, Y^-\}$.

We now define output stable states and stable computation.
Intuitively, output stable states are ``ideal'' output states for rate-independent computation: the output is correct and no rate law can change it. 
Stable computation is then defined with respect to output stable states,
by requiring that the correct output stable state remains reachable no matter what ``devious rate laws'' may do.
Although it is not obvious that the notion of output stable states remains pertinent when transferred from the discrete setting to the continuous one (see the discussion at the beginning of \cref{sec:fair-computation}),
\cref{lem:not-stably-compute-implies-can-reach-far-from-correct} below and the subsequent results of \cref{sec:fair-computation} show that output stability remains crucial.

\begin{definition}
\label{defn:output-stable}
A state $\vo \in \Rp^\Lambda$ is \emph{output stable} if, for all $\vo'$ such that $\vo \segto \vo'$, 
$\vo \upharpoonright \Gamma = \vo' \upharpoonright \Gamma$,  
i.e., once $\vo$ is reached, no reactions can change the concentration of any output species.
\end{definition}

Note that for a single output species $Y$, \Cref{defn:output-stable} says that $\vo(Y) = \vo'(Y)$ for all $\vo'$ such that $\vo \segto \vo'$.
For the sake of brevity and readability, 
subsequently we will state many definitions and formal theorem/lemma statements assuming there is only a single output species $Y$.
In each case, there is a straightforward modification of the definition or result so that it applies to CRCs with multiple output species as well.

\begin{definition} 
\label{def:direct-computation}
Let $f:\Rp^k \to \Rp$ be a function
and let $\calC$ be a CRC.
We say that $\calC$ \emph{stably computes} $f$ if, for all $\vx \in \Rp^k$, for all $\vc$ such that $\vx \segto \vc$, there exists an output stable state $\vo$ such that $\vc \segto \vo$ and $\vo(Y) = f(\vx)$.
\end{definition}

To extend our results to functions with $l$ outputs we can compute $l$ separate functions $f_j:\Rp^k\to\Rp$ for $1 \leq j \leq l$ with $l$ independent CRCs, and then combine them into a single CRC with $l$ output species.
In particular, we can use reactions like $X_i \to X_i^1 + \ldots + X_i^l$ to copy input $X_i$ to each of the $l$ CRCs.


We now capture in a theorem the intuition that for a CRC to compute a function rate-independently in any reasonable sense, 
it must stably compute the function.
The theorem says that if a CRC does not stably compute, 
then, no matter what you do, an adversary can ``fight back'' and make the output substantially ($\epsilon$) wrong.
The proof uses the definitions of partial states and partial reachability, as well as \Cref{thm:partial-state-reachable}, which are in \Cref{sec:partial-reachability}.

\begin{thm}
\label{lem:not-stably-compute-implies-can-reach-far-from-correct}
Suppose a CRC does not stably compute $f:\Rp^k \to \Rp$.
Then there is $\epsilon > 0$, input state $\vx$ and state $\vz$ reachable from $\vx$ such that for all $\vo$ reachable from $\vz$ there is $\vo'$ reachable from $\vo$ such that $\abs{\vo'(Y) - f(\vx)} > \epsilon.$ 
\end{thm}

\begin{proof}
We prove the contrapositive. Suppose that for all $\epsilon > 0$, for any given input state $\vx$ and state $\vz$ such that $\vx \segto \vz$, there exists a state $\vo$ such that $\vz \segto \vo$ and for all $\vo'$ such that $\vo \segto \vo'$, $\abs{\vo'(Y) - f(\vx)} \leq \epsilon$.

For any input state $\vx$ and any $\vz$ reachable from $\vx$, first we argue that there is an infinite sequence of states $\vx_{1/2}, \vx_{1/3}, \vx_{1/4}, \ldots$ such that
$\vx \segto \vz \segto \vx_{1/2} \segto \vx_{1/3} \segto \vx_{1/4} \segto \dots$,
and for all $n \geq 2$,
for all $\vo'$ such that $\vx_{1/n} \segto \vo'$,
$\abs{\vo'(Y) - f(\vx)} \leq 1/n$.
In other words, there is a sequence of states we can visit, where the adversary has less and less freedom to push the output away from the target value $f(\vx)$.
This is true by induction on $n$,
choosing $\epsilon = 1/(n+1)$, $\vz = \vx_{1/n}$ and $\vo = \vx_{1/(n+1)}$ in the above assumption.

By the definition of $\vx_{1/n}$, we see that $|\vx_{1/n}(Y) - f(\vx)| \le 1/n$, so $\vx_{1/n}(Y)$ converges to $f(\vx)$ as $n \to \infty$. Let $\Delta = \set{Y}$ and let $\vp \in \Rp^\Delta$ be the partial state with $\vp(Y) = f(\vx)$. Then we see that $\vz \segtoio \vp$ via $(\vx_{1/i})_{i=2}^\infty$, so by \cref{thm:partial-state-reachable}, we can find a partition of $\Lambda$ into $\Lambda_\bdd$ and $\Lambda_\unbdd$ with $Y \in \Lambda_\bdd$, a state $\vy$ ($\vd$ in \Cref{thm:partial-state-reachable}), 
and a subsequence $\seq{\vx'_i}_i$ of $\seq{\vx_{1/n}}_n$ so that $\vz \segto \vy$ and the subsequence $\seq{\vx'_i}$ has the property that $\vx'_i(S) \to \infty$ for all $S \in \Lambda_\unbdd$ and $\vx'_i(S) \to \vy(S)$ for all $S \in \Lambda_\bdd$. Note that because $Y \in \Lambda_\bdd$ we have $\vy(Y) = \lim_{i \to \infty} \vx'_i(Y) = f(\vx)$. 

We now claim that $\vy$ is also output stable, which is sufficient to prove the lemma as follows.
Since $\vy$ is correct ($\vy(Y)=f(\vx)$) and reachable from $\vz$, an arbitrary state reachable from input state $\vx$,
this establishes that the CRC in fact stably computes $f$.

Suppose for the sake of contradiction that $\vy$ is not output stable.
Then there is some $\vy'$ with $\vy \segto \vy'$ and $|\vy'(Y) - \vy(Y)| > \epsilon$. Let $\vz_i = \vx'_i - \vy/2 + \vy'/2$. Because $\vx'_i(S) \to \vy(S)$ for all $S \in \Lambda_\bdd$ and $\vx'_i(S) \to \infty$ for all $S \in \Lambda_\unbdd$, there is some $N$ so that for all $i > N$ we have $\vx'_i \ge \vy/2$. 
Then by additivity of $\segto$,
for all $i > N$,
\[\vx'_i = (\vx'_i - \vy/2) + \vy/2 \segto (\vx'_i - \vy/2) + \vy'/2  = \vz_i\]
where $|\vz_i(Y) - \vx'_i(Y)| = |\vy'(Y) - \vy(Y)|/2 > \epsilon/2$. Since $\seq{\vx'_i}$ is a subsequence of $\seq{\vx_{1/n}}$, if we take $i$ large enough so that $\vx'_i$ is $\vx_{1/n_i}$ with $1/n_i < \epsilon/4$, we see that
\[|\vz_i(Y) - f(\vx)| \ge |\vz_i(Y) - \vx'_i(Y)| - |\vx'_i(Y) - f(\vx)| \ge \epsilon/2 - 1/{n_i} > 1/n_i,\]
but $\vz_i$ is reachable from $\vx_{1/n_i}$, which by definition can only reach states $\vz$ with $|\vz(Y) - f(\vx)| \leq 1/n_i$,
giving a contradiction.
\end{proof}

In some places we will talk about CRCs with multiple output species representing multi-valued functions:

\begin{cor}\label{cor:multiple-output-not-stably-compute-implies-can-reach-far-from-correct}
Suppose a CRC does not stably compute $f: \Rp^k \to \Rp^l$. Then there is an $\epsilon > 0$, an input state $\vx$ and a state $\vz$ reachable from $\vx$ so that for all $\vo$ reachable from $\vz$ there is $\vo'$ reachable from $\vo$ such that $|\vo'(Y_i) - f(\vx)_i| > \epsilon$ for some $1 \le i \le l$.  
\end{cor}

\begin{proof}
This is similar to the proof of \Cref{lem:not-stably-compute-implies-can-reach-far-from-correct}, 
but make $\Delta = \set{Y_1, \ldots, Y_l}$ the set of all output species. 
\end{proof}

\section{Fair Computation}
\label{sec:fair-computation}

In the discrete model of CRN kinetics,
if the set of states reachable from any input state is finite (i.e., the molecular counts are bounded as a function of the input state),
then stable computation as in \Cref{def:direct-computation}
(the correct output stable state is always \emph{reachable})
is equivalent to the condition that the CRC is correct under standard stochastic kinetics with probability 1
(the correct output state state is actually \emph{reached})~\cite{AngluinAE2006semilinear}.
In the continuous CRN model, however, it might seem that the idea of stable computation is not strong enough to achieve intuitively ``rate-independent'' computation.
There are at least two reasons for the concern.

First, it is possible that the output stable state is always reachable but the mass-action trajectory does not converge to it.
For example, 
consider the following CRC stably computing the identity function $f(x) = x$:
\begin{align*}
X + X &\to Y + Y \\
Y + X &\to X + X
\end{align*}
with $X$ as the input species and $Y$ as the output species.
From any reachable state we can reach the output stable state with all $X$ converted to $Y$.
However, the mass-action trajectory converges to a dynamic equilibrium with $k_2/(2 k_1 + k_2)$ fraction of $X$, where $k_1$, $k_2$ are rate constants of the two reactions.
\footnote{
The discrepancy between stable computation and correctness under mass-action kinetics shows a major difference between the discrete and continuous CRN models.
In the example above, 
with $n$ total molecules,
the discrete CRN model does a random walk on the number of $X$ that is biased toward the dynamic equilibrium point $n k_2 / (2k_1 + k_2)$.
Despite the bias upward when $X$ is below this value, there is always a positive probability to decrease $X$, so with probability 1, $X$ will eventually reach 0.
}
This shows that in general, showing that a CRC stably computes is not sufficient to claim that it computes rate-independently in any intuitive sense.

The second difficulty lies with the notion of output stable states. 
While the notion of output stable states is natural for discrete CRNs where we want the system to actually reach that state to end the computation, convergence to the output stable state in continuous CRNs will typically be only in the limit $t \to \infty$.
For example, consider the following CRC:
\begin{align*}
X &\to M + Z \\
M &\to Y \\
Z + Y &\to Z + M \\
Z &\to \emptyset
\end{align*}
The CRC stably computes $f(x) = x$ since from any reachable state we can reach the output stable state $\vo$ with $f(x)$ amount of $Y$ by
converting any remaining $X$ to $M$, converting any remaining $M$ to $Y$, and completely draining $Z$.
Note that $\vo$ is output stable since without $Z$, $Y$ cannot be converted back to $M$.
Further, under mass-action kinetics (for any choice of rate constants), the CRC converges to $\vo$ since as $Z$ drains, the rate of the third reaction converges to $0$.
However, at \emph{every} finite time $0 < t< \infty$ in the mass-action trajectory, since $Z$ is present, the state with zero amount of $Y$ is reachable, so an adversary could substantially perturb the output. 
Thus one would not call this CRC rate-independent to adversarial perturbations.

While in the previous section we argued that stable computation is \emph{necessary} for an intuitive notion of rate-independent computation,
the examples above seem to suggest that it is not \emph{sufficient} and that basing rate-independent computation entirely on stable computation could be ill-founded. 

In this section we 
develop an alternative approach to defining a very strong notion of ``rate-independent'' computation not based on stable computation, an approach we term ``fair computation''. 
The approach is based on delineating a very broad class of rate laws, possibly adversarial, that still lead to convergence to the correct output. 
Based on the previous section (\cref{lem:not-stably-compute-implies-can-reach-far-from-correct}), 
it is not surprising that CRCs that fail to stably compute, also fail to fairly compute.
What is more surprising, however, is that there is a strong connection in the other direction for a class of CRCs (feedforward)---for these CRCs stable computation \emph{implies} fair computation.
All our constructions will be in this class; thus we obtain very strong rate-independence guarantees in the positive results part of this work. 
Combined with the results of the previous section, stable computation can thus be used as an easy-to-analyze proxy for proving both positive and negative results on rate-independent computation.

Intuitively, a CRC is said to fairly compute if it converges to the correct output for a broad class of rate laws, with the class being broad enough to capture adversarial behavior.
To define the broad class of rate laws for fair computation, 
we start with the previously defined notion of valid rate schedules, 
that captures a very general class of chemical kinetics.
Nonetheless, we must add an additional condition.
In our original definition (\Cref{defn:valid-rate-schedule}), the reaction rate $\vf_\alpha$ can vary arbitrarily over time as long as $\alpha$ is applicable whenever $\vf_\alpha$ is positive.
There is no requirement the other way---that a reaction must occur with positive rate if it is applicable---allowing for a greater variety of paths (e.g., segment paths with zero flux through some reactions).
But since there is nothing to prevent an adversary from ``starving'' reactions when they are applicable, preventing convergence, we now need to impose an additional requirement that we call fairness.
We formalize this as a strictly positive lower bound $\vH$ on the reaction rate at states where the reaction is applicable.
In particular, while the reaction rate vector $\vf(t)$ is a function of the time $t$, $\vH(\vc)$ is a function of the \emph{state} $\vc$.
We allow this lower bound to be violated occasionally, so long as it holds for an infinite measure of time.
(For example, a fair rate schedule could starve applicable reactions on the unit time intervals 
$[0,1], [2,3], [4,5], \ldots$
)

\begin{definition}
\label{defn:fair-rate-schedule}
Suppose $\vf: \Rp \to \Rp^R$ is a valid rate schedule starting at $\vc$.
We say that $\vf$ is \emph{fair} if there is a continuous function $\vH: \Rp^\Lambda \to \Rp^R$ such that, 
for all reactions $\alpha \in R$, 
$\vH_\alpha(\vc) > 0$ if and only if $\alpha$ is applicable at $\vc$, 
and for some subset of times $T_\alpha \subseteq \Rp$ of infinite measure, 
$\vf_\alpha(t) \ge \vH_\alpha(\vrho(t))$ for all $t \in T_\alpha$.
\end{definition}

Requiring that the lower bound $\vH$ be continuous as a function of the state helps to ensure that if $\vH$ converges to zero then the point of convergence is a static state (no reaction is applicable):
by continuity the point of convergence must have $\vH = \vec{0}$, which implies that no reaction is applicable by \cref{defn:fair-rate-schedule}.
Note also that $T_\alpha$ can depend on $\alpha$; for example \Cref{defn:fair-rate-schedule} allows for $T_\alpha$ and $T_\beta$ to be disjoint for different reactions $\alpha,\beta$ (i.e., whenever we run one applicable reaction we starve another applicable reaction).

All typically considered rate laws such as mass-action, Michaelis-Menten, Hill function kinetics, etc, are fair. We explicitly note this for mass-action CRNs with non-divergent trajectories:

\begin{lemma}
\label{lem:mass-action-gives-fair-rate-schedule}
The valid rate schedule for mass-action CRNs as defined in Lemma~\ref{lem:mass-action-gives-valid-rate-schedule}, is fair if well-defined for all times.
\end{lemma}

\begin{proof}
In the notation of \cref{lem:mass-action-gives-valid-rate-schedule}, take $\vH = \vA$. Because we always assume that the rate constants of a mass action system are positive, $\vA_\alpha(\vc)$ is positive if and only if $\alpha$ is applicable at $\vc$. Also, Lemma~\ref{lem:mass-action-gives-valid-rate-schedule} shows that $\vf_\alpha(t) = \vA_\alpha(\vrho(t))$ for all $t \ge 0$, so certainly $\vf_\alpha(t) \ge \vA_\alpha(\vrho(t))$.
\end{proof}

CRCs satisfying the definition below converge to the correct output despite actions of a very powerful adversary.
Intuitively, the adversary is allowed to control the rates of all the reactions throughout the computation as long as applicable reactions are not entirely prevented from occurring.

\begin{defn}
\label{defn:fair-computation}
We say a CRC \emph{fairly computes} a function $f:\Rp^k \to \Rp$ if,
for every input state $\vx$,
every fair rate schedule starting at $\vx$ with trajectory $\vrho$ obeys $\lim\limits_{t\to\infty} (\vrho(t) \upharpoonright \{Y\}) = f(\vx).$
\end{defn}

There is a natural generalization of the above definition when $f: \Rp^k \to \Rp^l$ has multiple outputs: require the trajectory $\vrho$ to obey $\lim_{t \to \infty} (\vrho(t) \upharpoonright \set{Y_1, \ldots, Y_l}) = f(\vx)$, where $\set{Y_1, \ldots, Y_l}$ is the set of all output species.

We now want to establish the connection between fair computation and stable computation.
The following lemma shows that in one direction, the connection is immediate: fair computation is at least as strong as stable computation.

\begin{lem}
\label{lem:fair-computation-implies-stable-computation}
Any CRC that fairly computes a function $f$ also stably computes $f$.
\end{lem}

\begin{proof}
We will prove the contrapositive: Suppose that a CRC $\calC$ does not stably compute $f$. We want to show that $\calC$ does not fairly compute $f$, and to do this we will find an initial state $\vx$ and a fair rate schedule $\vf$ starting at $\vx$ so that the trajectory $\vrho$ does not obey $\lim_{t \to \infty}(\vrho(t) \upharpoonright \set{Y}) = f(\vx)$. 

Since we assumed that $\calC$ does not stably compute $f$, by \Cref{lem:not-stably-compute-implies-can-reach-far-from-correct} we know that there is some $\epsilon > 0$ and some input state $\vx$ and state $\vz$ so that $\vx \segto \vz$ and every $\vo$ reachable from $\vz$ there is an $\vo'$ reachable from $\vo$ with $|\vo'(Y) - f(\vx)| > \epsilon$. 
In other words, it is possible to reach a state $\vz$, after which the output can be always made incorrect by some amount $\epsilon$.
By \Cref{lem:segment-reachable-implies-valid-rate-schedule} we know that there is a valid rate schedule $\vf'$ that reaches from $\vx$ to $\vz$ in finite time $t_0$. Set $\vf = \vf'$ on the interval $[0, t_0)$. 

Now choose some positive rate constant $k_\alpha > 0$ for every reaction $\alpha$, 
and on the interval $[t_0, t_0 + 1)$,
set $\vf$ equal to the mass-action rate schedule corresponding to the rate constants $k_\alpha$ and initial state $\vz$ (this is a valid rate schedule by \Cref{lem:mass-action-gives-valid-rate-schedule}). By \Cref{thm:valid-reachable-implies-segment-reachable} we know that the state $\vo$ that is reached at time $t_0 + 1$ is segment reachable from $\vz$, so by the definition of $\vz$ there must be some $\vo'$ with $\vo \segto \vo'$, where $|\vo'(Y) - f(\vx)| > \epsilon$. Applying \Cref{lem:segment-reachable-implies-valid-rate-schedule} again we see that there is a valid rate schedule that reaches from $\vo$ to $\vo'$ by time $t_1$, so set $\vf$ equal to this rate schedule on $[t_0 + 1, t_1)$. 

Alternating a mass-action rate schedule with an adversarial rate schedule in this way, we can find a sequence of times $t_0, t_1, \ldots \in \Rp$ so that $\lim_{n \to \infty} t_n = \infty$ and a valid rate schedule $\vf$ so that the associated trajectory $\vrho$ satisfies $|\vrho(t_n)(Y) - f(\vx)| > \epsilon$ for every $n \ge 1$. This proves that $\lim_{t \to \infty} (\vrho(t) \upharpoonright \set{Y}) \neq f(\vx)$. Furthermore, by construction we know that on every interval $[t_n, t_n + 1)$ the rate schedule $\vf$ is equal to the mass-action rate schedule with fixed rate constants $k_\alpha$. Since $\bigcup_{n \in \N} [t_n, t_n + 1)$ has infinite measure, \Cref{lem:mass-action-gives-fair-rate-schedule} then shows that $\vf$ is fair. 
Since $\vf$ is a fair rate schedule failing to converge to the correct output,
$\calC$ does not fairly compute $f$. 
\end{proof}

It is natural to wonder if the converse of \cref{lem:fair-computation-implies-stable-computation} holds,
i.e., whether every CRC that stably computes $f$ also fairly computes $f$. 
This is not true in general, but the following section shows that it \emph{is} true for the CRNs we will construct.

\subsection{Feedforward CRNs}
\label{sec:feedforward}

As we saw before, for general CRCs it is possible that the output stable state is always reachable but the mass-action trajectory does not converge to it, for instance the reactions $X + X \to Y + Y$ and $Y + X \to X + X$ discussed at the start of \Cref{sec:fair-computation}.
Thus stable computation does not necessarily imply that the system will eventually produce the correct output.
Since the mass-action trajectory defines a fair rate schedule (\cref{lem:mass-action-gives-fair-rate-schedule}), the above example shows that some CRCs stably compute a function but do not fairly compute it; i.e., the converse of \cref{lem:fair-computation-implies-stable-computation} does not hold for all CRCs.

In contrast to the above example, the feedforward property defined in this section allows us to bridge the definition of stable computation, defined in terms of reachability (what \emph{could} happen), to convergence (what \emph{will} happen), defined in terms of fair rate schedules.

Recall that a reaction $\alpha=\langle  \vr,\vp \rangle$ \emph{produces} a species $S$ if $\vr(S) < \vp(S)$ and \emph{consumes} $S$ if $\vr(S) > \vp(S)$.
We say a CRN is \emph{feedforward} if the species can be ordered so that every reaction that produces a species also consumes another species earlier in the ordering.
Formally:
\begin{definition}\label{defn:feedforward}
A CRN $\calC = (\Lambda, R)$ is \emph{feedforward} if $\Lambda=\set{S_1, ..., S_n}$ and its stoichiometry matrix satisfies $\vM(i,j) > 0 \implies \exists (i' < i) \vM(i',j) < 0$.
\end{definition}

Intuitively, we want to avoid situations, as in the example above,  where the output stable state is always reachable but the trajectory does not converge to it.
This can happen if the trajectory does not converge or converges to a dynamic equilibrium where reactions balance each other.
In contrast, the total flux of reactions in a feedforward CRN must be bounded because there cannot be a complete ``cycle'' among the species that balance consumption with production.

We start with the following simple observation.
General CRNs can have reactions such as $A+B \to A+2B+C$ that do not consume any reactant, and such CRNs clearly have infinite total flux.
Luckily, by our definition of a CRN, any reaction $\alpha$ must either produce or consume some species, 
and if $\alpha$ produces a species, the feedforward condition guarantees that $\alpha$ consumes some other species.

\begin{obs}
\label{obs:feedforward-reactions-consume-species}
Every reaction in a feedforward CRN consumes some species.
\end{obs}

The following lemma will used in formalizing the idea that a feedforward CRN cannot converge to a dynamic equilibrium (like the example beginning this section).
General CRNs can have reactions that undo each other's effect (for instance, $A \to B$ and $B \to A$). 
For such CRNs, we cannot bound total flux as a function of the change in species concentration---indeed, concentrations might remain constant but the two reactions canceling each other can have arbitrarily large flux---allowing for a dynamic equilibrium.
In contrast, for feedforward CRNs the following lemma shows that total flux can be bounded by the change in state.

\begin{lemma}
\label{lem:feedforward-implies-no-cancellation}
For a feedforward CRN, for each reaction $\alpha$, there is a constant $K_\alpha$ independent of $\vu$ such that $\|\vM \vu\| < \epsilon$ implies that $\vu_\alpha < K_\alpha \epsilon$.
\end{lemma}

\begin{proof}
We show by induction that we can find such a $K_\beta$ for every reaction $\beta$ that consumes a species. First suppose $\beta$ is a reaction that consumes the first species $S_1$ in the feedforward ordering.
By the feedforward property, no reaction can produce $S_1$, so for each reaction $\alpha$, we have $\vM(S_1,\alpha) \leq 0$. 
Thus 
\[|(\vM \vu)(S_1)| = \sum_{\alpha \in R} |\vM(S_1, \alpha)| \vu_\alpha.\]
Since the left hand side of this equation is less than $\epsilon$, and the right hand side is a sum of non-negative terms, each term on the right hand side must be less than $\epsilon$. Since $\beta$ consumes $S_1$, $|\vM(S_1, \beta)| \ge 1$, so $\vu_\beta < \epsilon$ and we can take $K_\beta = 1$. 
This establishes the base case for $S_1$.

Now assume inductively that we've found an appropriate constant $K_\alpha$ for every reaction $\alpha$ that consumes a species $S_i$ for $i < n$, and suppose that $\beta$ consumes $S_n$. Then because $|(\vM\vu)(S_n)| < \epsilon$ we know that
\[\sum_{\substack{\alpha \in R \\ \vM(S_n, \alpha) < 0}} |\vM(S_n, \alpha)| \vu_\alpha 
< \epsilon + \sum_{\substack{\alpha \in R \\ \vM(S_n, \alpha) > 0}} |\vM(S_n, \alpha)| \vu_\alpha 
< \epsilon \cdot \left[ 1 + \sum_{\substack{\alpha \in R \\ \vM(S_n, \alpha) > 0}} |\vM(S_n, \alpha)| K_\alpha \right]\]
where for the second inequality we have used the feedforward condition to conclude that every reaction producing $S_n$ must consume $S_i$ for some $i < n$, and therefore have flux bounded by $K_\alpha \epsilon$ by inductive assumption. Like before, the leftmost term in the inequality is a sum of non-negative terms, so if we take 
\[K_\beta = 1 + \sum_{\substack{\alpha \in R \\ \vM(\alpha, S_n) > 0}} |\vM(S_n, \alpha)| K_\alpha\]
then $\vu_\beta < K_\beta \epsilon$. This shows that an appropriate $K_\beta$ exists for every reaction $\beta$ that consumes a species. But by \Cref{obs:feedforward-reactions-consume-species}, every reaction consumes a species, so we're done.
\end{proof}

Finally we are ready to prove the main lemma about feedforward CRNs, a consequence of which (\Cref{lem:stable-computation-implies-fair-computation-feedforward}) establishes the connection between stable computation and convergence to the output stable state for feedforward CRNs.

\begin{lemma}
\label{lem:fair-rate-schedule-converges}
Consider a feedforward CRN, and 
suppose $\vf$ is a valid rate schedule.
Then the corresponding trajectory $\vrho$ converges to a state 
$\vy$ in the limit time $t \to \infty$.
Further, if $\vf$ is fair, then $\vy$ is static, 
i.e., no reaction is applicable in $\vy$.
\end{lemma}

\begin{proof}
First, define
\[K = 1 + \max_{\alpha \in R} \sum_{S_i \in \Lambda} \max(0, \vM(S_i,\alpha)).\]
In particular, $K$ is larger than the sums of the positive entries in the columns of $\vM$. Intuitively, we will assign a ``mass'' to each species, and $K$ will be the ratio of the masses assigned to consecutive species in the feedforward ordering. By making $K$ sufficiently large, we can guarantee that running any reaction has the effect of decreasing the total mass $V(\vx)$ of any state $\vx$.
Formally, define $V: \Rp^\Lambda \to \Rp$ as
\[V(\vx) = \sum_{i =1}^n \frac{\vx(S_i)}{K^i}.\]

Recall $\vf_\alpha(t)$ is the rate of reaction $\alpha$ at time $t$. 
Note that whenever $\vrho(t)$ is differentiable\footnote{
By Lebesgue's fundamental theorem of calculus \cite[Theorem~6.11, Theorem~6.14]{royden1988real}, applied to $\vf_\alpha(t)$, we know that $\vrho(t)$ is locally absolutely continuous, and almost everywhere differentiable with derivative $\vM \cdot \vf(t)$.  
}
\begin{align*}
\frac{d}{dt} V(\vrho(t)) &= \sum_{i =1}^n \frac{d}{dt}\frac{(\vrho(t))(S_i)}{K^i} \\
&= \sum_{i =1}^n \frac{1}{K^i} \sum_{\alpha \in R} \vM(S_i,\alpha) \vf_\alpha(t) \\
&= \sum_{\alpha \in R} \sum_{i =1}^n \frac{1}{K^i}  \vM(S_i,\alpha) \vf_\alpha(t) \\
&= \sum_{\alpha \in R} C_\alpha \vf_\alpha(t),
\end{align*}
where $C_\alpha = \sum_{i =1}^n \frac{1}{K^i}  \vM(S_i,\alpha)$. For any fixed $\alpha$, let $i_0$ be the smallest $i$ such that $\vM(S_i, \alpha) \neq 0$,
i.e., $S_{i_0}$ is the first species in the feedforward ordering that is produced or consumed by $\alpha$. 
By the feedforward condition, $\vM(S_{i_0}, \alpha) \leq -1$, i.e., $S_{i_0}$ is consumed. 
As a result,
\begin{align*}
C_\alpha &= \frac{1}{K^{i_0}}\left(\vM(S_{i_0}, \alpha) + \frac{1}{K} \sum_{i = i_0 + 1}^n \frac{\vM(S_i,\alpha)}{K^{i - i_0 - 1}}\right) \\
&< \frac{1}{K^{i_0}}\left(\vM(S_{i_0}, \alpha) + 1\right) \\
&\le 0.
\end{align*}

Now we show that the total flux through all of the reactions is finite. To fix notation, let's write $\vF_\alpha$ for the total flux through reaction $\alpha$ as $t \to \infty$, i.e. 
\[\vF_\alpha = \int_0^\infty \vf_\alpha(t) dt.\]
Since every $C_\alpha$ is negative, for any fixed reaction $\alpha_0$, 
\begin{align*}
\frac{d}{dt} V(\vrho(t)) &= \sum_{\alpha \in R} C_\alpha \vf_\alpha(t) \le C_{\alpha_0} \vf_{\alpha_0}(t)
\end{align*}
so by integrating\footnote{
Note that $\int_0^\infty \frac{d}{dt} V(\vrho(t)) dt = \lim_{t \to \infty} V(\vrho(t)) - V(\vc)$ by the fundamental theorem of calculus \cite[Theorem~6.10]{royden1988real}.
To use the fundamental theorem of calculus we must ensure that $V(\vrho(t))$ is locally absolutely continuous---this follows from the fact that the trajectory $\vrho(t)$ is defined in terms of a (Lebesgue) integral.
} both sides
\begin{align*}
\lim_{t \to \infty} V(\vrho(t)) - V(\vc) &\le C_{\alpha_0} \int_0^\infty \vf_{\alpha_0}(t) dt = C_{\alpha_0}\vF_{\alpha_0}.
\end{align*}
If $\vF_{\alpha_0}$ were infinite then because $C_{\alpha_0} < 0$ we see that $V(\vrho(t))$ would be unbounded below. 
Since $\vrho(t)$ always remains in $\Rp^\Lambda$ where 
$V(\vrho(t))$
$\ge 0$, we conclude that $\vF_{\alpha_0}$ must be finite.

If we write $\vecv_\alpha \in \R^\Lambda$ for the $\alpha$-column of $\vM$ (i.e. $\vecv_\alpha(S) = \vM(S, \alpha)$), then 
\begin{align*}
\int_0^\infty \left\|\frac{d\vrho(t)}{dt}\right\| dt &= \int_0^\infty \left\|\sum_{\alpha \in R} \vecv_\alpha \vf_\alpha(t)\right\| dt \\
&\le \int_0^\infty \sum_{\alpha \in R} \vf_\alpha(t)|\vecv_\alpha| dt \\
&= \sum_{\alpha \in R}|\vecv_\alpha| \int_0^\infty  \vf_\alpha(t) dt  = \sum_{\alpha \in R}|\vec{v}_\alpha| \vF_\alpha\\
&< \infty.
\end{align*}
In other words, $\vrho$ has finite length. 

{
Let's now show that $\vrho$ having finite length implies that it converges. 
For any states $\vx,\vz$ define $d(\vx,\vz) = \| \vx - \vz \|$ as the Euclidean distance from $\vx$ to $\vz$.
It suffices to show 
}
\begin{equation}\label{eqn:convergence-statement}
\mbox{For all $\epsilon > 0$ there is some $M \in \Rp$ such that for all $t, s > M$, $d(\vrho(t), \vrho(s)) < \epsilon$.}\tag{$*$}
\end{equation}
Indeed, taking this as given, let $\vx_k = \vrho(k)$. Then the sequence $\seq{\vx_k}_{k = 1}^\infty$ is a Cauchy sequence in $\R^\Lambda$, so it must converge to some state $\vy$. Moreover, we actually know that $\vrho(t)$ converges to 
some state
$\vy$ as $t \to \infty$: for any $\epsilon > 0$ there is some $N$ such that for all $n > N$, we know $d(\vx_n, \vy) < \epsilon/2$ and there is some $M$ such that for all $t, s > M$ we know $d(\vrho(t), \vrho(s)) < \epsilon/2$. In particular, taking $m$ to be an integer larger than $N$ and $M$ we see that 
\[d(\vrho(t), \vy) \le d(\vrho(t), \vx_m) + d(\vx_m, \vy) < \epsilon\]
for any $t > M$. This shows that, taking (\ref{eqn:convergence-statement}) for granted, $\vrho$ must converge to $\vy$. 

Let's now prove (\ref{eqn:convergence-statement}). Proceed by contradiction, so
suppose that there is some $\epsilon > 0$ such that for all $M \in \mathbb{R}_{\ge 0}$, there exists some $t > s > M$ such that $d(\vrho(t), \vrho(s)) \ge \epsilon$. Then take $M_1 = 0$, and label the points we get by this assumption $t_1, s_1$. Then for any $n > 1$, take $M_n = t_{n-1}$ and label the next pair of points $t_n, s_n$. Then 
\begin{align*}
\int_0^\infty \left\|\frac{d\vrho(t)}{dt}\right\| dt &\ge \sum_{i = 1}^\infty \int_{s_i}^{t_i}\left\|\frac{d\vrho(t)}{dt}\right\| dt \\
&\ge \sum_{i = 1}^\infty d(\vrho(t_i), \vrho(s_i))  \\
&\ge \sum_{i = 1}^\infty \epsilon = \infty,   
\end{align*}
where the second inequality uses the absolute continuity of $\vrho$. This gives a contradiction, establishing (\ref{eqn:convergence-statement}),
which proves that $\vrho(t)$ converges to some state $\vy$ as $t \to \infty$.

Now let's assume that $\vf$ is fair and establish that $\vy$ is a static state.
If not then there is some reaction $\alpha$ applicable at $\vy$, so $\vH_\alpha(\vy) = C > 0$.
By the continuity of $\vH_\alpha$, there is some $\delta$ so that $\vH_\alpha(\vx) > C/2$ for all $\vx \in \Rp^\Lambda$ with $\| \vx - \vy \| < \delta$. Since $\vrho$ converges to $\vy$, this implies that $\vH_\alpha(\vrho(t)) > C/2$ for all $t$ greater than some $t_0$. 

For any measurable $U \subseteq \Rp$,
let $\mu(U) \in \Rp \cup \{\infty\}$ denote the measure of $U$.
Letting $T_\alpha$ be the subset of $\Rp$ as in Definition~\ref{defn:fair-rate-schedule} where $\vf_\alpha(t) \geq \vH_\alpha(\vrho(t))$ for $t \in T_\alpha$ and $\mu(T_\alpha) = \infty$, 
we see that $\vf_\alpha(t) > C/2$ for every $t \in T_\alpha \cap [t_0, \infty)$.
Also note that $\mu(T_\alpha \cap [t_0, \infty)) = \infty$.
By the contrapositive direction of \Cref{lem:feedforward-implies-no-cancellation} we see that this implies that $\left\|\frac{d\vrho(t)}{dt}\right\| > C/2K_\alpha$ for every $t \in T_\alpha \cap [t_0, \infty)$. 
As a result,
\[\int_0^\infty \left\|\frac{d\vrho(t)}{dt}\right\| dt \ge \int_{T_\alpha \cap [t_0, \infty)} \left\|\frac{d\vrho(t)}{dt}\right\| dt \ge \mu(T_\alpha \cap [t_0, \infty))\frac{C}{2K_\alpha} = \infty,\]
contradicting the finite length of $\vrho$. 
\end{proof}

By applying \Cref{lem:fair-rate-schedule-converges} to a feedforward CRN that also stably computes a function,
we obtain the following result, 
which states that such a CRN will reach the correct output under any fair rate schedule, from any state reachable from $\vx$.
This is almost a converse to \cref{lem:fair-computation-implies-stable-computation}; however note that unlike in \cref{lem:fair-computation-implies-stable-computation}, the CRC is required to be feedforward.

\begin{lem}
\label{lem:stable-computation-implies-fair-computation-feedforward}
Any feedforward CRC that stably computes a function $f$ also fairly computes $f$.
\end{lem}

\begin{proof}
By~\Cref{lem:fair-rate-schedule-converges},
under any fair rate schedule, the CRC converges from the initial state $\vx$ to a static state $\vy$. 
By \Cref{thm:valid-reachable-implies-segment-reachable},
since $\vy$ is reachable from $\vx$ under a valid rate schedule, 
we know $\vx \segto \vy$.
Since $\vec{y}$ is static, it is output stable.
Then because the CRC stably computes $f$ and $\vx \segto \vy$, we must have $\vy(Y) = f(\vx)$ or else it would have stabilized to an incorrect output.
So the CRC fairly computes $f$.
\end{proof}

We point out that the feedforward property is not a \emph{necessary} condition for stable computation to coincide with fair computation.
For example, consider the non-feedforward CRC with reactions $X + X \rxn R + Y$ and $R + R \rxn X$.
This CRC stably computes $f(x) = (2/3) x$~\cite{vasic2022programming}, 
and it can also be shown that it fairly computes $f$. 
The key property it shares with feedforward CRCs is \Cref{lem:feedforward-implies-no-cancellation}: 
a large flux through its reactions implies a large change in state.

\begin{lemma}
\label{lem:feedforward-mass-action-defined-all-times}
If the CRC is feedforward, then the mass-action trajectory $\vrho:\Rp\to\Rp^\Lambda$ as defined in Lemma~\ref{lem:mass-action-gives-valid-rate-schedule} is defined at all times.
\end{lemma}

\begin{proof}
By the Escape Lemma \cite[Lemma 9.19]{lee2013smooth} it suffices to show that $\vrho$ remains in a compact subset of $\Rp^\Lambda$ for all times when it is defined. 
Let $V(\vx)$ be as in the proof of \Cref{lem:fair-rate-schedule-converges}. We know by the argument of \Cref{lem:fair-rate-schedule-converges} that $\frac{d}{dt} V(\vrho(t)) \le 0$, so for all $t$ we know that $\vrho(t)$ lies in the subset of $\Rp^\Lambda$ where $V(\vx) \le V(\rho(0))$. Since $V$ is a linear function with strictly positive coefficients, this is a compact subset of $\Rp^\Lambda$. 
\end{proof}

Since mass-action yields a valid rate schedule (\Cref{lem:mass-action-gives-valid-rate-schedule}) 
that is fair 
(\Cref{lem:mass-action-gives-fair-rate-schedule}) 
and defined at all times for feedforward CRNs (\Cref{lem:feedforward-mass-action-defined-all-times}), the following corollary is immediate.
The corollary states that for a CRC stably computing a function $f$, the CRC will also converge to the correct output under mass-action kinetics, no matter the positive rate constants, and even if an adversary can first ``steer'' the CRC to some reachable state before letting mass-action kinetics take over.

\begin{cor} 
\label{cor:mass-action-convergence-stable-computation}
Consider a feedforward CRC stably computing a function $f$.
Then for any input state $\vx$, for any state $\vz$ reachable from $\vx$, 
for any choice of reaction rate constants,
the mass-action trajectory starting at $\vz$ is defined for all times and converges to an output stable state $\vy$ in the limit $t \to \infty$ such that $\vy(Y) = f(\vx)$.
\end{cor}

\section{The Computational Power of Stable Computation}
\label{sec:results-statement}

This section presents the main results of our paper, delineating the computational power of stable computation.
As justified in \Cref{subsec-stable-crns,sec:fair-computation}, we use stable computation as our primary notion of rate-independence.
In \cref{subsec:dual-rail-defn}, we discuss the input and output representation of negative quantities and the composition of CRN modules via the ``dual-rail representation''.
In \cref{sec:main-results}, we summarize our results on the computational power of stable computation for direct and dual-rail representations.
These results are proven in subsequent sections, with positive (\cref{negative-piecewise-linear-implies-computable} for dual-rail, \cref{sec:positive-continuous-piecewise-rational-linear-computable} for direct) and negative directions (\cref{sec:positive-continuous-piecewise-rational-linear-computable} for direct, \cref{subsec:negative-dual-rail} for dual-rail) separately.

\subsection{Dual-rail Representations}
\label{subsec:dual-rail-defn}

The \emph{direct} concentration-to-value mapping  articulated in \cref{def:direct-computation}
is a straightforward way to represent non-negative input and output values.
However, there are two reasons why 
an alternative, albeit more complex,
encoding may be preferred. 
First, since concentrations cannot be negative, 
computable functions are restricted to non-negative domain and range.
Second, as explained below, the direct output encoding frustrates the composition of smaller CRC modules into larger CRCs.
The \emph{dual-rail} representation we introduce in this section is a natural way to solve both problems.

A natural way to represent a (possibly negative) real value in chemistry is to encode it as the difference of two concentrations.
Formally, let $f:\R^k \to \R$ be a function.
A function $\hf:\Rp^{2k} \to \Rp^2$ is a \emph{dual-rail representation} of $f$ if, for all $\vx^+,\vx^- \in \Rp^k$, if $(y^+,y^-) = \hf(\vx^+,\vx^-)$, then $f(\vx^+ - \vx^-) = y^+ - y^-$.
In other words, $\hf$ represents $f$ as the difference of its two outputs $y^+$ and $y^-$, and it works for any input pair $(\vx^+,\vx^-)$ whose difference is the input value to $f$.
We can define a CRC to stably compute such a function in the same manner as in 
\Cref{subsec-stable-crns}, but having $2k$ input species $\Sigma = \{X_1^+,X_1^-,X_2^+,X_2^-,\ldots,X_k^+,X_k^-\}$ and two output species $\Gamma = \{Y^+,Y^-\}$.

\begin{definition}
\label{def:dual-rail-stable-computation}
We say that a CRC \emph{stably dual-computes} $f:\R^k \to \R$ if it stably computes a dual-rail representation $\hf:\Rp^k \times \Rp^k \to \Rp \times \Rp$ of $f$.
\end{definition}

This definition implies that, for all $\vx = (\vx^+, \vx^-) \in \Rp^{2k}$, for all $\vc$ such that $\vx \segto \vc$, there exists an output stable state $\vo$ such that $\vc \segto \vo$ and 
$\vo(Y^+) - \vo(Y^-) = f(\vx^+ - \vx^-)$.
Note that a single function has an infinite number of dual-rail representations;
we require only that a CRC exists to compute one of them to say that the function is stably dual-computable by a CRC.

Besides making negative values chemically representable, 
we will see that the dual-rail representation plays a key role in allowing the composition of smaller CRC modules into a larger CRC. 
A key concept to enable such composition is \emph{output-obliviousness}:

\begin{defn}
\label{def:output-oblivious}
A CRC $\calC = (\Lambda,R,\Sigma,\Gamma)$ is \emph{output-oblivious} if none of its output species $\Gamma$ is a reactant in any reaction.
In other words, for every $\alpha=\langle  \vr,\vp \rangle \in R$ and $Y \in \Gamma$, $\vr(Y) = 0$.
\end{defn}

To recognize the problem with composition of the direct representation (\cref{def:direct-computation}), 
define the \emph{composition} of two CRCs as the CRC which has the union of their reactions, relabeling the output species of the upstream CRN to be the input species of the downstream one~\cite[Definition 16]{chalk2021composable}.
Then with the direct output representation, output-obliviousness is necessary for composability but provably restricts computational power~\cite{chalk2021composable}:
(1) Composing two stably computing CRCs stably computes the function composition if and only if the upstream CRC is output-oblivious (except in trivial cases).
Intuitively, the downstream CRC can interfere with the upstream computation by 
prematurely consuming the output.
(2) An output-oblivious CRC can only stably compute ``superadditive'' functions.
For example, any CRC stably computing the function $f(x_1,x_2) = x_1 - x_2$ must necessarily be able to consume its output species since 
a state with more than the desired  amount of output is reachable by additivity (e.g., state with $x_1$ amount of $Y$).

Our results imply that the dual-rail representation allows composition without sacrificing computational power. 
In particular, our dual-rail constructions are all output-oblivious and thus composable by concatenation, and our negative results apply to dual-rail CRCs whether or not they are output-oblivious.
To see roughly why the dual-rail representation helps with composition, 
consider the $f(x_1,x_2) = x_1 - x_2$ function above.
We can now compute this function with an output-oblivious (and therefore composable) CRC by \emph{producing} $Y^-$ to decrease the value of the output, without consuming any output species.

Recall fair computation (\Cref{defn:fair-computation}).
We define dual-rail fair computation analogously.

\begin{defn}
\label{defn:fair-computation-dual-rail}
We say a CRC \emph{fairly dual-computes} a function $f:\R^k \to \R$ if it fairly computes a dual-rail representation of $f$.
\end{defn}

Since dual-rail computation is defined by a CRC that directly computes a dual-rail representation function,
\Cref{lem:fair-computation-implies-stable-computation,lem:stable-computation-implies-fair-computation-feedforward} imply the analogous results for dual-rail:

\begin{lem}\label{lem:dual-rail-fair-implies-stable}
If a CRC fairly dual-computes a function $f: \R^k \to \R$ then it stably dual-computes $f$. 
\end{lem}

\begin{proof}
The proof is similar to the proof of \Cref{lem:fair-computation-implies-stable-computation} except one uses \Cref{cor:multiple-output-not-stably-compute-implies-can-reach-far-from-correct} in the place of \Cref{lem:not-stably-compute-implies-can-reach-far-from-correct}.
\end{proof}

\begin{lem}\label{lem:dual-rail-feedforward-stable-implies-fair}
If a feedforward CRC stably dual-computes a function $f: \R^k \to \R$ then it fairly dual-computes $f$. 
\end{lem}

\begin{proof}
The proof is the same as \Cref{lem:stable-computation-implies-fair-computation-feedforward} since a static state is output stable regardless of how many output species there are. 
\end{proof}

\subsection{Statement of Main Results}
\label{sec:main-results}
Below we summarize our results about the computational power of stable computation, which we prove in subsequent subsections.
First, we formally define the relevant classes of functions that will correspond to direct and dual-rail stable computation:

\begin{defn}
\label{def:rational-linear}
A function $f: \R^k \to \R$ is \emph{rational linear} if there exist $a_1,\ldots,a_k \in \Q$ such that $f(\vx) = \sum_{i=1}^k a_i \vx(i).$
A function $f: \R^k \to \R$ is \emph{rational affine} if 
there exist $a_1,\ldots,a_k, c \in \Q$ such that $f(\vx) = \sum_{i=1}^k a_i \vx(i) + c$,
i.e., $f$ is a rational constant $c$ plus a rational linear function.
\end{defn}

We note that rational linearity has the equivalent characterization that $f$ is linear and maps rational inputs $\vec{x} \in \Q^n$ to rational outputs.

\begin{defn}\label{def:piecewise-rational-linear}
A function $f: \R^k \to \R$ is \emph{piecewise rational linear (affine)} if there is a finite set of partial rational linear (affine) functions $f_1,\ldots,f_p:\R^k \dashrightarrow \R$, with
$\bigcup_{j=1}^p \dom f_j = \R^k$, such that, for all $j \in \{1,\ldots,p\}$ and all $\vx \in \dom f_j$, $f(\vx) = f_j(\vx)$.
In this case, we say that $f_1,\ldots,f_p$ are the \emph{components} of $f$.
\end{defn}

\begin{defn}
\label{def:positive-continuous}
A function $f: \Rp^k \to \Rp$ is  \emph{positive-continuous} if, for all $U \subseteq \{1,\ldots,k\}$, $f$ is continuous on the domain
\[
D_U = \setl{\vx \in \Rp^k}{(\forall i \in \{1,\ldots,k\})\ \vx(i) > 0 \iff i \in U}.
\]
\end{defn}
In other words, $f$ is continuous on any subset $D \subset \Rp^k$ that does not have any coordinate $i \in \{1,\ldots,k\}$ that takes both zero and positive values in $D$.

The following theorems are the main results of this paper,
exactly characterizing the functions stably computable with direct and dual-rail representation of inputs and outputs,
and showing the equivalence between fair computation and stable computation.
Furthermore, although non-feedforward CRCs do exist to compute functions in this class,
the theorem shows that feedforward CRCs suffice to compute all such functions.

\begin{thm}
\label{thm:computable-characterization}
For a function $f: \Rp^k \to \Rp$, the following are equivalent:
\begin{enumerate}
    \item $f$ is fairly computable by a CRC.
    \item $f$ is stably computable by a CRC.
    \item $f$ is positive-continuous piecewise rational linear.
    \item $f$ is stably computable by a feedforward CRC.
    \item $f$ is fairly computable by a feedforward CRC.
\end{enumerate}
\end{thm}
\begin{proof}
(1) implies (2) is \Cref{lem:fair-computation-implies-stable-computation}.
(2) implies (3) is \Cref{lem:positive-continuous-piecewise-rational-linear}.
(3) implies (4) is \Cref{lem-nonnegative-piecewise-linear-implies-computable}.
(4) implies (5) is \Cref{lem:stable-computation-implies-fair-computation-feedforward}.
(5) implies (1) is obvious.
\end{proof}

The following is the dual-rail analog of \Cref{thm:computable-characterization}.

\begin{thm}
\label{thm:computable-characterization-dual-rail}
For a function $f: \R^k \to \R$, the following are equivalent:
\begin{enumerate}
    \item $f$ is fairly dual-computable by a CRC.
    \item $f$ is stably dual-computable by a CRC.
    \item $f$ is continuous piecewise rational linear.
    \item $f$ is stably dual-computable by a feedforward, output-oblivious CRC.
    \item $f$ is fairly dual-computable by a feedforward, output-oblivious CRC.
\end{enumerate}
\end{thm}

\begin{proof}
(1) implies (2) is \Cref{lem:dual-rail-fair-implies-stable}.
(2) implies (3) is \Cref{lem-computable-implies-piecewise-linear}.
(3) implies (4) is \Cref{lem-piecewise-linear-implies-computable}. 
(4) implies (5) is \Cref{lem:dual-rail-feedforward-stable-implies-fair}.
(5) implies (1) is obvious.
\end{proof}

\subsection{Positive Result: Continuous Piecewise Rational Linear Functions are Dual-Rail Computable}
\label{negative-piecewise-linear-implies-computable}

Definition~\ref{def:piecewise-rational-linear} does not stipulate how complex the ``boundaries'' between the linear pieces of a piecewise rational linear function can be. The boundaries can even be irrational in some sense, e.g., the function $f(x_1,x_2) = 0$ if $x_1 > \sqrt{2} \cdot x_2$ and $f(x_1,x_2) = x_1+x_2$ otherwise. However, if we additionally require that $f$ be \emph{continuous}, then the following theorem of Ovchinnikov~\cite[Theorem 2.1]{ovchinnikov2002max}
states that $f$ has a particularly clean form,
conducive to computation by CRCs.

\newcommand{\minMaxThmText}{
  Let $D \subseteq \R^k$ be convex.
  For every continuous piecewise affine function $f:D \to \R$ with components $g_1,\ldots,g_p$, there exists a family $S_1,\ldots,S_q \subseteq \{1,\ldots,p\}$ such that, for all $\vx \in D$,
  $
    f(\vx) = \max\limits_{i \in \{1,\ldots,q\}} \min\limits_{j \in S_i} g_j(\vx).
  $
}

\newtheorem*{minMaxThm}{\Cref{thm-min-max-representation}}

\begin{thm}[\cite{ovchinnikov2002max}, Theorem 2.1]
\label{thm-min-max-representation}
\minMaxThmText
\end{thm}

Note that as a special case, the above result applies when $f$ is continuous piecewise rational linear. 
The above theorem as stated slightly generalizes the result due to Ovchinnikov~\cite{ovchinnikov2002max} (by not requiring $D$ to be closed), although the proof technique is essentially the same.
For completeness, we provide the proof in \Cref{app:A}.

We use the theorem above to dual-compute continuous piecewise rational linear functions by composing CRC modules for rational linear functions, min, and max. 
These modules are developed in the following three lemmas.

\begin{lem}\label{lem:dual-rail-rational-linear-computable}
Rational linear functions are 
stably dual-computable by a feedforward, output-oblivious CRC.
\end{lem}

\begin{proof}
Let $g:\R^k \to \R$ be a rational linear function $g(\vx) = \sum_{i=1}^k a_i \vx(i)$.
  By clearing denominators, there exist $n_1,\ldots,n_k \in \Z$ and $d \in \Z^+$ such that
  $g(\vx) = \frac{1}{d} \sum_{i=1}^k n_i \vx(i).$
  The following reactions compute a dual-rail representation of $g$ with input species $X_1^+,\ldots,X_k^+, X_1^-,\ldots,X_k^-$ and output species $Y^+,Y^-$.
  For each $i$ such that $n_i > 0$, add the reactions
    \begin{eqnarray*}
    X_i^+ &\to& n_i W^+
    \\ X_i^- &\to& n_i W^-
    \end{eqnarray*}
  For each $i$ such that $n_i < 0$, add the reactions
    \begin{eqnarray*}
    X_i^+ &\to& |n_i| W^-
    \\ X_i^- &\to& |n_i| W^+
    \end{eqnarray*}
  To divide the values of $W^-$ and $W^+$ by $d$, add the reactions
    \begin{eqnarray*}
    d W^+ &\to& Y^+
    \\
    d W^- &\to& Y^-
    \end{eqnarray*}
    
In particular, these reactions compute the dual-rail representation $\hat{g}: \Rp^{2k} \to \Rp^2$ where 
\[\hat{g}(x_1^+,\ldots, x_k^+, x_1^-, \ldots x_k^-) = (y^+, y^-) = \left(\frac{1}{d}\left[\sum_{n_i > 0} n_ix_i^+ + \sum_{n_i < 0}|n_i|x_i^-\right], \frac{1}{d}\left[\sum_{n_i > 0} n_ix_i^- + \sum_{n_i < 0}|n_i|x_i^+\right]\right).\]
It is straightforward to verify that $\hat{g}$ really is a dual-rail representation of $g$. To see that the above CRC stably computes $\hat{g}$, define the functions $p, q: \Rp^\Lambda \to \R$ so that
\begin{align*}
p(\vc) &= \vc(Y^+) + \frac{1}{d}\vc(W^+) + \frac{1}{d}\sum_{n_i > 0}n_i\vc(X_i^+) + \frac{1}{d} \sum_{n_i < 0}|n_i|\vc(X_i^-) \\
q(\vc) &= \vc(Y^-) + \frac{1}{d}\vc(W^-) + \frac{1}{d}\sum_{n_i > 0}n_i\vc(X_i^-) + \frac{1}{d} \sum_{n_i < 0}|n_i|\vc(X_i^+)
\end{align*}

It is also straightforward to verify that both $p$ and $q$ are preserved by all of the reactions in the above CRC. This shows that for all $\vc$ and $\vd$ with $\vc \segto \vd$ we have $p(\vd) = p(\vc)$ and $q(\vd) = q(\vc)$. Now observe that from any state it is always possible to reach a state $\vo$ that only has positive concentrations of $Y^+$ and $Y^-$ by executing the reactions above to completion in the order in which they are listed. Such a state is evidently output stable so for any input state $\vx$ and any $\vc$ reachable from $\vx$, there is an output stable state $\vo$ reachable from $\vc$. Furthermore, since $\vo$ only has positive concentrations of $Y^+$ and $Y^-$ we know that $p(\vo) = \vo(Y^+)$ and $q(\vo) = \vo(Y^-)$. Since $\vo$ is reachable from $\vx$ we see that
\begin{align*}
\vo(Y^+) &= p(\vo) = p(\vx) = \frac{1}{d}\left[\sum_{n_i > 0} n_ix_i^+ + \sum_{n_i < 0}|n_i|x_i^-\right] \\
\vo(Y^-) &= q(\vo) = q(\vx) = \frac{1}{d}\left[\sum_{n_i > 0} n_ix_i^- + \sum_{n_i < 0}|n_i|x_i^+\right]
\end{align*}
This shows that a dual-rail representation of any rational linear function can be stably dual-computed by a CRC. Moreover, the CRC above is clearly output-oblivious, and it is feedforward under the ordering $X_1^+ < \ldots < X_n^+ < X_1^- < \ldots X_n^- < W^+ < W^- < Y^+ < Y^-$. 
\end{proof}

\begin{lem}\label{lem:dual-rail-min-computable}
Min is 
stably dual-computable by a feedforward, output-oblivious CRC.
\end{lem}

\begin{proof}
The following reactions stably compute a dual-rail representation of $\min$ with input species $X_1^+$, $X_2^+$, $X_1^-$, $X_2^-$ and output species $Y^+$, $Y^-$.
  \begin{eqnarray}
    X_1^+ + X_2^+ &\to& Y^+ \label{rxn-min-1}
    \\
    X_1^- &\to& X_2^+ + Y^- \label{rxn-min-2}
    \\
    X_2^- &\to& X_1^+ + Y^- \label{rxn-min-3}
  \end{eqnarray}

In particular this CRC computes the dual rail representation $\hat{f}: \Rp^4 \to \Rp^2$ where
\[\hat{f}(x_1^+, x_2^+, x_1^-, x_2^-) = (y^+, y^-) = (\min(x_1^+ + x_2^-, x_2^+ + x_1^-), x_1^- + x_2^-)\]
It is straightforward to verify that $\hat{f}$ is really a dual-rail representation of $\min$. To see that the above CRC stably computes $\hat{f}$ 
  \footnote{
    This analysis of the CRC for the min function here is directly based on the definition of stable computation.
    Recently a powerful framework has been developed~\cite{vasic2022programming},
    based on a wide class of so-called \emph{noncompetitive} CRCs in which no species consumed in a reaction is a reactant in another reaction (not even as a non-consumed catalyst).
    For such CRCs, the task of proving correctness of stable computation is greatly simplified.
    Since the CRC computing min is noncompetitive, that framework could be applied here to yield a simpler proof of correctness.
    We use our direct proof here for the sake of making the current paper self-contained.
  }, define the functions $p, q, \delta: \Rp^\Lambda \to \R$ so that
\begin{align*}
p(\vc) &= 2\vc(Y^+) + \vc(X_1^+) + \vc(X_2^+) + \vc(X_1^-) + \vc(X_2^-) \\
q(\vc) &= \vc(Y^-) + \vc(X_1^-) + \vc(X_2^-) \\
\delta(\vc) &= \vc(X_1^+) - \vc(X_1^-) - \vc(X_2^+) + \vc(X_2^-)
\end{align*}

It is also straightforward to verify that the above three functions are preserved by all of the reactions in the given CRC. Note that by running \ref{rxn-min-3}, then \ref{rxn-min-2}, then \ref{rxn-min-1} to completion it is always possible to reach a state $\vo$ with $\vo(X_1^-) = \vo(X_2^-) = 0$ and also $\vo(X_1^+) = 0$ or $\vo(X_2^+) = 0$. Such a state is evidently output stable, so for any input state $\vx$ and any $\vc$ reachable from $\vx$ there is an output stable state $\vo$ reachable from $\vc$. Since $\vo$ is reachable from $\vx$ we know that 
\[\vo(Y^-) = q(\vo) = q(\vx) = x_1^- + x_2^-.\]

Now suppose without loss of generality that $x_1^+ + x_2^- \le x_2^+ + x_1^-$ (the analysis of the other case is similar). Then $\delta(\vx) \le 0$, so $\vo(X_1^+) - \vo(X_2^+) = \delta(\vo) \le 0$. If $\vo(X_1^+)$ were positive, then by the definition of $\vo$ we would know that $\vo(X_2^+) = 0$, contradicting the fact that $\delta(\vo) \le 0$. Thus $\vo(X_1^+) = 0$ and
\[\vo(X_2^+) = -\delta(\vo) = -\delta(\vx) = x_1^- + x_2^+ - x_1^+ - x_2^-.\]
Finally, note that 
\[p(\vx) = x_1^+ + x_2^+ + x_1^- + x_2^- = p(\vo) = 2\vo(Y^+) + \vo(X_2^+) = 2\vo(Y^+) + x_1^- + x_2^+ - x_1^+ - x_2^-.\]
Solving the above equation for $\vo(Y^+)$ shows that $\vo(Y^+) = x_1^+ + x_2^- = \min(x_1^+ + x_2^-, x_2^+ + x_1^-)$. This shows that the above CRC stably computes $\hat{f}$. The CRN is clearly output-oblivious and it is feedforward with the ordering $X_1^- < X_2^- < X_1^+ < X_2^+ < Y^- < Y^+$. 
\end{proof}

\begin{cor}\label{cor:dual-rail-max-computable}
Max is 
stably dual-computable by a feedforward, output-oblivious CRC.
\end{cor}

\begin{proof}
To stably compute a dual-rail representation of max, observe that it is equivalent to computing the min function with the roles of the ``plus'' and ``minus'' species reversed (which negates the value represented in dual-rail), 
  because $\max(x_1,x_2) = - \min(-x_1,-x_2)$.
  In other words, use the reactions
  \begin{eqnarray*}
    X_1^- + X_2^- &\to& Y^-
    \\
    X_1^+ &\to& X_2^- + Y^+
    \\
    X_2^+ &\to& X_1^- + Y^+
  \end{eqnarray*}
\end{proof}

\begin{lem}
\label{lem-piecewise-linear-implies-computable}
  Let $D \subseteq \R^k$ be convex, and let $f:D \to \R$ be a continuous piecewise rational linear function.
  Then $f$ is stably dual-computed by a feedforward, output-oblivious CRC.
\end{lem}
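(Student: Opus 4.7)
The plan is to reduce the construction to building a few basic dual-rail CRC ``modules'' and gluing them together via the dual-rail composition described in Section~\ref{sec:prev}. By Theorem~\ref{thm-min-max-representation} we may write
\[
f(\vx) \;=\; \max_{1 \le i \le q}\; \min_{j \in S_i}\; L_j(\vx),
\]
where $L_1,\ldots,L_p$ are the rational linear components of $f$. Since dual-rail modules are composable simply by concatenating reactions and renaming the output species of one module to the input species of the next, it suffices to construct stably dual-computing CRCs for (a) each rational linear form $L_j$, (b) binary min of two dual-rail values, and (c) binary max of two dual-rail values; composing these according to the max-min formula then yields a CRC stably dual-computing $f$.

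For the rational linear module with $L(\vx) = \sum_i (p_i/q_i)\,\vx(i)$, I would clear denominators and, for each $i$ with $p_i > 0$, include the reactions $q_i X_i^+ \to p_i Y^+$ and $q_i X_i^- \to p_i Y^-$; for each $i$ with $p_i < 0$ I swap the $+$ and $-$ products. These reactions consume only inputs, produce the outputs monotonically, form a feedforward CRC, and clearly force $Y^+ - Y^- = L(\vx)$ at every output stable state. For the min and max modules, I take the Fig.~\ref{fig:ex}(b) construction as a starting point and extend it to arbitrary (i.e., non-canonical) dual-rail inputs. The idea is that for inputs with only the positive rails present, the three-reaction template $U_1^+ + U_2^+ \to Y^+ + Z_1$, $Z_1 + \cdots$ realizes min (or max) exactly as in the motivating example; to handle inputs with $U^-$ rails present, I add ``sign-tracking'' reactions that effectively produce $Y^-$ whenever a $U^-$ mass could have changed the comparison. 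Each such module will again be feedforward and output-monotonic, allowing dual-rail composition.

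To compose, I wire the modules together following the max-min formula: the outputs $(Y^+,Y^-)$ of the linear-form modules are renamed to serve as inputs to the min modules, whose outputs are renamed as inputs to a single max module at the top. All internal species are kept disjoint across modules. Since each module is feedforward and produces its outputs monotonically, so is the composition, and no downstream reaction can alter an upstream computation. Correctness of stable computation then follows by induction on the module hierarchy: given any reachable state of the composed CRC, I first use the stable-computation property of the innermost modules to drive the system to a state where their outputs match the intended value, and then propagate upward.

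The main obstacle is the min and max modules. One must design them so that they stably dual-compute for \emph{arbitrary} dual-rail input encodings (not only the canonical ones with at most one rail nonzero), and prove that no adversarial reaction schedule can either (i) reach a state from which the correct output is no longer reachable, or (ii) stabilize the output at a wrong value while some inputs remain unprocessed. The feedforward structure and output-monotonicity of the submodules are the key tools: they guarantee that the downstream max module cannot deplete or corrupt the min modules' outputs, and they let us invoke Lemma~\ref{lem:mass-action-convergence-stable-computation} to reduce correctness of the whole construction to correctness of each module in isolation.
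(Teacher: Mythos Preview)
Your overall architecture is exactly the paper's: invoke Theorem~\ref{thm-min-max-representation} to write $f=\max_i\min_{j\in S_i}L_j$, build dual-rail modules for rational linear forms, binary $\min$, and binary $\max$, and compose. Your linear module is fine (in fact slightly simpler than the paper's, which routes through intermediate species $W^\pm$ and divides by a common denominator).

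The genuine gap is precisely where you flag it: you never actually construct the $\min$/$\max$ modules, only describe desiderata (``sign-tracking reactions'', handle non-canonical dual-rail inputs, feedforward, output-monotone). This is the entire technical content of the lemma. The paper's $\min$ module is just three reactions,
\[
X_1^+ + X_2^+ \to Y^+,\qquad X_1^- \to X_2^+ + Y^-,\qquad X_2^- \to X_1^+ + Y^-,
\]
and the non-obvious point is that these work for \emph{arbitrary} dual-rail encodings, not just canonical ones; the paper verifies this by tracking the invariants $x_1(\vi)=x_1(\vc)+y(\vc)$ and $x_2(\vi)=x_2(\vc)+y(\vc)$ and arguing that in any stable state one of $\vc(X_1^+),\vc(X_2^+)$ is zero while both $\vc(X_1^-),\vc(X_2^-)$ are zero. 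The $\max$ module is obtained by swapping $+$ and $-$. Without an explicit construction and this invariant argument, your outline does not constitute a proof.

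One further correction: Lemma~\ref{lem:mass-action-convergence-stable-computation} is not a tool for proving stable computation or for reducing correctness of a composition to correctness of modules; it goes the other way (feedforward $+$ stably computing $\Rightarrow$ mass-action convergence). Composition correctness in the paper is argued directly from output-monotonicity and the reachability-based definition, not via that lemma.
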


\begin{proof}
By Theorem~\ref{thm-min-max-representation}, we know that any such continuous piecewise rational linear function can be represented as a composition of $\max$, $\min$, and rational linear functions. Moreover, the $\min$ and $\max$ functions with two arguments can be composed in a tree of depth $\log l$ to compute the minimum or maximum functions with input arity $l$. 
Since multiple rational linear functions may use the same inputs, 
we also need a fan-out module (likewise feedforward, output-oblivious) which copies a single input to multiple outputs: 
\begin{align*}
X_1^+ &\to Y_{1}^+ +\dots + Y_{p}^+ \\
X_1^- &\to Y_{1}^- +\dots + Y_{p}^-,
\end{align*}
where $p$ is the number of linear components of the piecewise linear function $f$.
Because output-oblivious CRCs are composable, and because the composition of feedforward, output-oblivious CRCs is again feedforward and output-oblivious, we can compose the CRCs from \Cref{lem:dual-rail-rational-linear-computable}, \Cref{lem:dual-rail-min-computable}, and \Cref{cor:dual-rail-max-computable} to produce a feedforward, output-oblivious CRC that computes $f$.
\end{proof}

Note that \Cref{lem-piecewise-linear-implies-computable} applies for arbitrary convex domains $D \subseteq \R^k$, which will be useful in proving \Cref{lem-nonnegative-piecewise-linear-implies-computable}, where we take $D$ to be a strict convex subset of $\Rp^k$ in which no coordinate takes both 0 and positive values in $D$.
Since $\R^k$ is itself convex, it also establishes the ``(3) implies (4)'' implication in the proof of \Cref{thm:computable-characterization-dual-rail}.

\subsection{Positive Result: Positive-Continuous Piecewise Rational Linear Functions are Directly Computable}\label{sec:positive-continuous-piecewise-rational-linear-computable}

The following lemma is the \emph{direct} stable computation analog of \Cref{lem-piecewise-linear-implies-computable}.
Intuitively, it is proven by using \Cref{lem-piecewise-linear-implies-computable} to stably dual-compute $2^k$ different continuous piecewise rational linear functions in parallel,
one for each possible choice of which input species $X_1,\ldots,X_k$ are 0.
A separate computation determines which inputs are positive and selects the appropriate output.
Note that positive inputs may be discovered ``piecemeal'', so the system must be robust to a continual updating of the decision.
However, a there is a monotonicity to this process that will make consistent updating possible:
once an input species is discovered to be present, we know for sure it is.
(Whereas a species that appears to be absent simply may have not yet reacted.)

\begin{lem}\label{lem-nonnegative-piecewise-linear-implies-computable}
  Every positive-continuous piecewise rational linear function $f:\Rp^k \to \Rp$ is stably computable by a feedforward CRC.
\end{lem}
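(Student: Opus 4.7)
The plan is to combine dual-rail sub-CRCs (provided by Lemma~\ref{lem-piecewise-linear-implies-computable}), one for each face of $\Rp^k$, gated by a minimum gadget that detects which input coordinates are nonzero, and to merge their outputs via a dual-rail annihilation. For each $U \subseteq \{1,\ldots,k\}$, the restriction $f|_{D_U}$ viewed as a function of only the coordinates in $U$ is continuous piecewise rational linear on the convex set $\Rp^{|U|}_{>0}$; by Theorem~\ref{thm-min-max-representation} it extends to a continuous piecewise rational linear function $g_U:\R^{|U|}\to\R$. Using M\"obius inversion on the subset lattice, let $B_U = \sum_{V\subseteq U}(-1)^{|U|-|V|}g_V$, so that $\sum_{V\subseteq U} B_V(\vx|_V) = g_U(\vx|_U)$ for every $U$ and every $\vx$. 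Writing $U^* = \{i : \vx(i)>0\}$ for the support of $\vx$, this gives $\sum_{U\subseteq U^*} B_U(\vx|_U) = f(\vx)$. Each $B_U$ is continuous piecewise rational linear on $\R^{|U|}$, hence dual-computed by some CRC $\calC_U$ via Lemma~\ref{lem-piecewise-linear-implies-computable}.

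The CRC will have four parts. First, a preprocessing reaction $X_i \to X_i' + \sum_{U \ni i}(X_{i,U}^{\min} + X_{i,U}^{\mathrm{in}})$ for each $i$, which splits each input into a persistent catalyst $X_i'$ (never consumed) and consumable copies for each sub-CRC. Second, for every nonempty $U$, a minimum-gadget reaction $\sum_{i\in U} X_{i,U}^{\min} \to T_U$ producing a gate species $T_U$ which at any output-stable state satisfies $T_U>0$ iff $U \subseteq U^*$. Third, for each nonempty $U$, instantiate $\calC_U$ with the $X_{i,U}^{\mathrm{in}}$ fed as its positive input rail (negative rail initially $0$), and modify every reaction of $\calC_U$ to carry $T_U$ as an additional catalyst, so that $\calC_U$ is quiescent unless $T_U>0$. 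Fourth, funnel the sub-CRC outputs into a shared dual-rail $(Y,Y^-)$ via reactions $Y_U^+ \to Y$ and $Y_U^- \to Y^-$ (where $Y$ is the designated output species), plus an annihilation reaction $Y + Y^- \to \emptyset$.

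At any output-stable state, preprocessing and the min-gadgets have completed, each gated $\calC_U$ with $U \subseteq U^*$ has reached its own internal stable state and contributes $B_U(\vx|_U)$ to $Y-Y^-$, and each $\calC_U$ with $U \not\subseteq U^*$ is inert. Summing gives $Y-Y^- = f(\vx) \geq 0$, so the annihilation reaction forces $Y^- = 0$ and $Y = f(\vx)$. The main obstacle will be the per-sub-CRC bookkeeping: verifying that augmenting every reaction of $\calC_U$ with the catalyst $T_U$ preserves the internal conservation law of Lemma~\ref{lem-piecewise-linear-implies-computable} (immediate because catalysts do not affect stoichiometry), and that the funneling and annihilation reactions combine with the sub-CRC invariants to yield a global conservation law of the form $(Y - Y^-) + \sum_U \left[(Y_U^+ - Y_U^-) + \text{intermediates}_U\right] = \sum_U B_U(\vx|_U)$. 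Reachability of the output-stable state from any reachable intermediate state will then follow from the corresponding property of each $\calC_U$ together with the observation that $T_U$ is never consumed.
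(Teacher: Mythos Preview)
Your approach is correct and takes a genuinely different route from the paper's. Both constructions gate, for each nonempty $U\subseteq\{1,\ldots,k\}$, a dual-rail sub-CRC $\calC_U$ by a species that is positive iff $U\subseteq U^*=[\vx]$, so that exactly the sub-CRCs with $U\subseteq U^*$ run; and both finish with $Y^+ + Y^- \to \emptyset$. The difference is what each $\calC_U$ computes and how the outputs are merged. The paper has each $\calC_U$ compute $f_U=f\upharpoonright D_U$ directly, lets \emph{all} active sub-CRCs write into the shared $(Y^+,Y^-)$, tags each contribution with auxiliary species $Y^+_U,Y^-_U$, and then \emph{undoes} every contribution from a strict subset via reactions $I_U + Y^+_{U'}\to I_U + Y^-$ and $I_U + Y^-_{U'}\to I_U + Y^+$ for each $U'\subsetneq U$. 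You instead precompute, via M\"obius inversion on the subset lattice, functions $B_U$ whose sum over $U\subseteq U^*$ is already $g_{U^*}(\vx|_{U^*})=f(\vx)$; the sub-CRC outputs then simply add, and no cancellation layer is needed. Your route costs one additional observation---each $B_U$, as a finite signed sum of continuous piecewise rational linear functions (composed with coordinate projections), is itself continuous piecewise rational linear, so Lemma~\ref{lem-piecewise-linear-implies-computable} applies to it---but it trades the paper's $\Theta(3^k)$ cancellation reactions (one per pair $U'\subsetneq U$) for $O(2^k)$ funneling reactions. Both constructions remain feedforward (adding $T_U$ or $I_U$ as a catalyst does not change net stoichiometry), so Lemma~\ref{lem:mass-action-convergence-stable-computation} still yields the mass-action corollary. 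Your output-stability sketch is right: at any output-stable state, if $U\subseteq U^*$ and $T_U=0$ then the min-gadget is still applicable and would enable $\calC_U$ to alter $Y$, a contradiction; and if any active $\calC_U$ or the funneling/annihilation can still move, $Y$ can still change; hence at stability each active $\calC_U$ is internally stable and the global invariant collapses to $Y-Y^-=\sum_{U\subseteq U^*}B_U(\vx|_U)=f(\vx)\ge 0$, forcing $Y^-=0$.
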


\begin{proof}
  The CRC will have input species $X_1,\ldots,X_k$ and output species $Y^+$.
  (While it will be helpful to think of a $Y^+$ and $Y^-$ species, and during the computation the output will be encoded in their difference, the output of the CRC is only the $Y^+$ species as per direct computability.)

  Let $f:\Rp^k \to \Rp$ be a positive-continuous piecewise linear function.
  Since it is positive-continuous, there exist $2^k$ domains
  \[D_U = \{ \vx \in \Rp^k \ |\ (\forall i \in \{1,\ldots,k\})\ \vx(i) > 0 \iff i \in U \},\]
  one for each subset $U \subseteq \{1,\ldots,k\}$, such that $f \upharpoonright D_U$ is continuous.
  Define $f_U = f \upharpoonright D_U$. 
  Since $D_U$ is convex, by Lemma~\ref{lem-piecewise-linear-implies-computable} there is a CRC $\calC_U$ computing a dual-rail representation $\hf_U:\Rp^k\times\Rp^k\to\R\times\R$ of $f_U$.
  By letting the initial concentration of the ``minus'' version of the $i$'th input species $X_i^-$ be 0, we convert $\calC_U$ into a CRC that directly computes an output dual-rail representation of $f_U$.

  The intuition of the proof is as follows.
The case $U=\emptyset$ is trivial, as we will have no reactions of the form $\emptyset \to A$ for any species $A$, so if no species are initially present, no species (including $Y^+$) will ever be produced;
this is correct since any linear function $f$ obeys $f(\vec{0})=0$.
For each non-empty $U$, we compute  $f_U$ independently in parallel by CRC $\calC_{U}$, modifying each reaction producing $Y^+$ to produce an equivalent amount of species $Y_U$, which is specific to $U$.
For each such $U$ there are inactive and active ``indicator'' species $J_U$ and $I_U$.
  In parallel, there are reactions that will activate indicator species $I_U$ (i.e.~convert $J_U$ to $I_U$) if and only if all species $X_i$ are present initially for each $i \in U$.
  These $I_U$ species will then counteract the effect of any CRC computing $f_{U'}$ for $U' \subsetneq U$ by catalytically converting 
  all $Y_{U'}^+$ to $Y^-$ and all $Y_{U'}^-$ to $Y^+$.
  If $U$ is the complete set of indices of non-zero inputs, then only CRCs computing $f_{U'}$ for subsets $U' \subsetneq U$ have produced any amount of $Y^+$, so eventually all of these will be counteracted by $I_U$.

  Formally, construct the CRC as follows.
  Let $l = 2^{k-1}-1$.
  For each $i \in \{1,\ldots,k\}$, add the reaction
    $X_i \to I_{\{i\}} +
    J_{U_1} + X_i^{U_1} + J_{U_2} + X_i^{U_2} + \ldots + J_{U_{l}} + X_i^{U_{l}},$
  where $U_1,U_2,\ldots,U_{l}$ are all subsets of $\{1,\ldots,k\}$ that are strict supersets of $\{ i \}$.
  The extra versions $X_i^{U_1}, \ldots, X_i^{U_l}$ of $X_i$ are used as inputs to the parallel computation of each $f_U$.
  We generate the inactive indicator species from the input species in this manner, because the CRC is not allowed to start with anything other than the input.

  The indicator species are activated as follows. For each nonempty $U,U' \subseteq \{1,\ldots,k\}$ such that $U \neq U'$, add the reaction
  $
    I_U + I_{U'} + J_{U \cup U'} \to I_U + I_{U'} + I_{U \cup U'}.
  $

  For each nonempty $U \subseteq \{1,\ldots,k\}$, let $\calC_U$ be the CRC computing an output dual-rail representation of $f_U$ (i.e.~dual rail on the output). 
  Modify $\calC_U$ as follows.
  Rename the output species of $\calC_U$ to $Y^+$ and $Y^-$, i.e., all parallel CRCs share the same output species.
  For each reaction producing the output species $Y^+$, add the product $Y^+_U$ (which is a species specific to $\calC_U$) with the same net stoichimetry.
  Similarly, for each reaction producing the output species $Y^-$, add the product $Y^-_U$ with the same net stoichimetry.
  For instance, replace the reaction $A + B \to Y^+$ by the reaction $A + B \to Y^+ + Y^+_U$, and replace the reaction $A + Y^+ \to B + 4 Y^+$ by the reaction $A + Y^+ \to B + 4 Y^+ + 3 Y^+_U$.
  Therefore the eventual amount of $Y_U^+$ is equal to the total amount of $Y^+$ produced by $\calC_U$, and similarly for $Y_U^-$ and $Y^-$.
  For each $U' \subset U$, add the reactions
    $I_U + Y^+_{U'} \to I_U + Y^-$,
    $I_U + Y^-_{U'} \to I_U + Y^+.$
  Also, for each reaction in $\calC_U$, add $I_U$ as a catalyst.
  This ensures that $\calC_U$ cannot execute any reactions (and therefore cannot produce any amount of $Y^+$ or $Y^-$) unless all species $X_i$ for $i \in U$ are present.

We observe that the dual-rail CRC described above (with output species $Y^+$ and $Y^-$) is output-oblivious as it involves the output species only as products of reactions. 
Further, the output value is non-negative for any input.
Thus, we can convert the dual-rail representation to the direct one (\cref{def:direct-computation}) with a single output species $Y^+$, by adding the reaction $Y^+ + Y^- \to \emptyset$.

  To complete the proof, since the CRC of \Cref{lem-piecewise-linear-implies-computable} is feedforward, we can confirm by inspection our modifications preserve the feedforward property as well.
  In particular, one should order species within the CRN dual-computing $f_U$ before any species for supersets of $U$ and after any species for subsets of $U$.
\end{proof}

\subsection{Negative Result: Directly Computable Functions are Positive-Continuous Piecewise Rational Linear}

\label{subsec-negative-computable-implies-piecewise-linear}

\subsubsection{Siphons and output stability}

In order to characterize stable function computation for CRCs, we will crucially rely on the notion of siphons, which we recall from~\cref{sec:mass-action-reachability}, Definition~\ref{defn:siphon}.
Lemma~\ref{lem:siphon-stable} shows the underlying relationship between output stability and siphons.

Let $\calC=(\Lambda,R,\Sigma,\{Y\})$ be a CRC.
We call a siphon $\Omega$ 
\emph{stabilizing}
if, for any state $\vd$, $\vd  \upharpoonright \Omega = \vec{0}$ implies that $\vd$ is output stable.
In other words, ``draining'' $\Omega$ (removing all of its species) causes the output to stabilize.

\begin{lem}\label{lem:stable-state-has-stable-siphon}
  If $\vc$ is an output stable state, then $\Omega_\vc = \Lambda \setminus \producible(\vc)$ is a {stabilizing} siphon.
\end{lem}

\begin{proof}
    By Lemma~\ref{lem:not-producible-is-siphon},
    $\Omega_\vc$ is a siphon for any state $\vc$. 
    We show the contrapositive that if $\Omega_\vc$ is not 
    stabilizing, then $\vc$ is not output stable.
    Suppose there is some particular state $\vd$ with $\vd \upharpoonright \Omega_\vc = \vec{0}$ that is not output stable, i.e., for some $\ve$ such that $\vd \segto \ve$ we have $\vd(Y) \neq \ve(Y)$. Because of this there must be some reaction $\alpha$ applicable at a state $\vd'$ reachable from $\vd$ such that $\alpha$ changes the amount of $Y$ (in other words $\vM_{Y, \alpha} \neq 0$). 
    Since $\Omega_\vc$ is a siphon absent in $\vd$, 
    by Lemma~\ref{lem-seg-reachable-forward-invariant-siphon}, $\vd' \upharpoonright \Omega_{\vc} = \vec{0}$, so all reactants of $\alpha$ must be contained in $\producible(\vc)$.
    By Lemma~\ref{lem-all-species-present} we can find a state $\vc'$ such that $\vc \segto \vc'$ and $[\vc'] = \producible(\vc)$. Since $\alpha$ is applicable at $\vc'$ we see that $\vc$ is not output stable.
\end{proof}

The next lemma shows the key property of siphons that we will use to reason about stably computing CRC's:
that they \emph{characterize} the output stable states,
i.e., the \emph{only} way for the output to stabilize is to drain some stabilizing siphon.

\begin{lem}  \label{lem:siphon-stable}
  There is a set of stabilizing siphons $\calS$ such that a state $\vc$ is output stable if and only if $\exists \Omega \in \calS$ such that\ $\vc \upharpoonright \Omega = \vec{0}$.
\end{lem}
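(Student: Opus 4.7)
The plan is to define $\calS$ explicitly and verify the equivalence in both directions; the trichotomy will then fall out as degenerate subcases. Let $R_Y \subseteq R$ be the set of reactions whose net stoichiometric change of $Y$ is nonzero, and define
\[
\calS = \setr{\Omega \subseteq \Lambda}{\Omega \text{ is a siphon, and every reaction in } R_Y \text{ has a reactant in } \Omega}.
\]
I claim each $\Omega \in \calS$ is an output stable siphon in the sense of the statement, and every output stable $\vc$ is zeroed by some $\Omega \in \calS$.

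For the ``siphon implies stable'' direction, fix $\Omega \in \calS$ and $\vc$ with $\vc \upharpoonright \Omega = \vec{0}$. If $\vd$ is any state with $\vc \segto \vd$, Lemma~\ref{lem-seg-reachable-forward-invariant-siphon} gives $\vd \upharpoonright \Omega = \vec{0}$. I will argue by induction on the number of segments in the reaching path that $\vd(Y) = \vc(Y)$. On a single segment $\vc \to^1_\vu \vd$, any reaction with positive flux in $\vu$ is applicable at $\vc$; but every reaction in $R_Y$ has a reactant in $\Omega$, which is absent at $\vc$, so no such reaction appears in $\vu$, and only reactions with zero net change of $Y$ are fired. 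Thus $\vd(Y) = \vc(Y)$. The inductive step simply chains single segments, applying the siphon invariance at each intermediate state $\vb_i$ to reduce to the base case. Hence $\vc$ is output stable.

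For the converse direction, suppose $\vc$ is output stable and set $\Omega_\vc = \Lambda \setminus \calP(\vc)$. Since species in $\Omega_\vc$ are not producible from $\vc$, they are in particular absent at $\vc$, so $\vc \upharpoonright \Omega_\vc = \vec{0}$. By Lemma~\ref{lem:not-producible-is-siphon}, $\Omega_\vc$ is a siphon. Finally, if some $\alpha \in R_Y$ had all its reactants in $\calP(\vc)$, then by Lemma~\ref{lem-all-species-present} there would be $\vd$ with $\vc \segto \vd$ and $[\vd] = \calP(\vc)$; $\alpha$ would be applicable at $\vd$, and firing it for a small positive flux would reach some $\vd'$ with $\vd'(Y) \neq \vd(Y)$, contradicting the output stability of $\vc$ (both $\vd$ and $\vd'$ are segment-reachable from $\vc$, so both must have $Y$-value equal to $\vc(Y)$). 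Thus every $\alpha \in R_Y$ has a reactant in $\Omega_\vc$, i.e., $\Omega_\vc \in \calS$.

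The trichotomy itself is then automatic. If $R_Y = \emptyset$, then $\emptyset \in \calS$ and every state is trivially output stable (matching the first clause); if $R_Y \neq \emptyset$ we are in the nondegenerate third clause; the ``every state output unstable'' clause is vacuous since $\vec{0}$ is always output stable because no reaction has an empty reactant set. The main subtlety is in the first direction, where the single-segment observation ``no reaction in $R_Y$ can fire from a state zeroed on $\Omega$'' must be iterated at every intermediate state of a segment-reachable path; this is precisely what the siphon-invariance Lemma~\ref{lem-seg-reachable-forward-invariant-siphon} enables.
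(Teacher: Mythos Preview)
Your proof is correct and arrives at the same set $\calS$ as the paper (the paper phrases it via ``transversals'' $\Omega'_k$ hitting each $Y$-changing reaction, then takes all siphons containing some $\Omega'_k$, which is exactly your condition that every reaction in $R_Y$ has a reactant in $\Omega$). The forward direction is identical in both.

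Your converse direction is somewhat more direct than the paper's. The paper first argues, via Lemma~\ref{lem:produce-all}, that for an output-stable $\vc$ at least one transversal $\Omega'_k$ stays zero on every reachable state, and then invokes Lemma~\ref{lem:siphonexists} to extend $\Omega'_k$ to a siphon on which $\vc$ vanishes. You instead go straight to $\Omega_\vc = \Lambda \setminus \calP(\vc)$, cite Lemma~\ref{lem:not-producible-is-siphon} for the siphon property, and verify membership in $\calS$ by a one-step contradiction using Lemma~\ref{lem-all-species-present}. Since the siphon produced by the paper's Lemma~\ref{lem:siphonexists} is precisely $\Lambda \setminus \calP(\vc)$, the two arguments land on the same object; yours just skips the detour through the $\Omega'_k$'s. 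Your observation that the ``every state output unstable'' clause is vacuous (because $\vec{0}$ is always output stable under the standing assumption $\vr \neq \vec{0}$) is a nice sharpening the paper does not make explicit.
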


\begin{proof}
Take $\calS$ to be $\set{\Omega_\vc \subseteq \Lambda \ |\ \vc \text{ is output stable}}$, 
where $\Omega_\vc = \Lambda \setminus \producible(\vc)$.
By Lemma~\ref{lem:stable-state-has-stable-siphon}, each $\Omega \in \calS$ is a stabilizing siphon, so if $\exists \Omega \in \calS$ with $\vd \upharpoonright \Omega = \vec{0}$ then $\vd$ is output stable. 
On the other hand, if $\vd$ is some output stable state, 
then by the definition of $\Omega_\vd$
(the set of species that cannot be produced from $\vd$),
we have $\vd \upharpoonright \Omega_{\vd} = \vec{0}$.
By our construction of $\calS$ we know that $\Omega_{\vd} \in \calS$. 
\end{proof}

\subsubsection{Linearity restricted to inputs draining a siphon}

This section aims to prove that the function computed by $\calC$, when restricted to inputs that can drain a particular stabilizing siphon, is linear.
This is the first step establishing the ``piecewise linear'' portion of the ``positive-continuous piecewise rational linear'' claims in 
 \Cref{thm:computable-characterization}.

Recall that $\prepathsset$, defined in Definition~\ref{def:prepaths}, 
is the space of all prepaths
(i.e., the vector space in which paths live),
and $\Gamma_\infty$, defined in Definition~\ref{def:paths}, is the space of all paths (allowing infinitely-many line segments). 
We now define a map $\vec{o}$ which intuitively sends a path to the final state that it reaches.

\begin{lem}
\label{lem:o}
The map $\vec{o}: \prepathsset \to \R^\Lambda$ sending
\[\vgamma \mapsto \lim_{n \to \infty} \vec{x}_n(\vgamma)\]
is linear 
(recall $\vx_n(\vgamma)$, defined in Definition~\ref{def:prepaths}, is the state reached after traversing $n$ segments along $\vgamma$).
\end{lem}

\begin{proof}
To check that $\vec{o}$ is a linear function, note that for any $\vgamma_0, \vgamma_1 \in \prepathsset$ and $\lambda \in \R$ we have
\begin{align*}
    \vec{o}(\vgamma_0 + \lambda \vgamma_1) &= \lim_{n \to \infty} \vec{x}_n( \vgamma_0 + \lambda \vgamma_1) \\
    &= \lim_{n \to \infty} \vec{x}_n(\vgamma_0) + \lambda\vec{x}_n(\vgamma_1) \\
    &= \lim_{n \to \infty} \vec{x}_n(\vgamma_0) + \lambda\lim_{n \to \infty}\vec{x}_n(\vgamma_1) \\
    &= \vec{o}(\vgamma_0) + \lambda \vec{o}(\vgamma_1). \qedhere
\end{align*}
\end{proof}

\begin{defn}
\label{defn:paths-draining-siphon}
Let $\Omega$ be a stabilizing siphon. Define $\Gamma(\Omega)$ to consist of paths $\vgamma \in \Gamma_\infty$ such that $\vec{o}(\vgamma)\upharpoonright \Omega = \vec{0}$.
\end{defn}

These are the paths that converge to a state where a given stabilizing siphon is drained. 

\begin{lem}
$\Gamma(\Omega)$ is convex for each stabilizing siphon $\Omega$.
\end{lem}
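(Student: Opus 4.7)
The plan is to deduce convexity of $\Gamma(\Omega)$ directly from two facts already established in the excerpt: first, that $\Gamma$ itself is convex (Lemma~\ref{lem:Pathspace-convex}); second, that the map $\vec{o}: V \to \R^\Lambda$ sending a prepath to its limiting state is linear (proved immediately above the statement). With these in hand, the verification is essentially a one-line computation.

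Concretely, I would fix two paths $\gamma_0, \gamma_1 \in \Gamma(\Omega)$ and a scalar $\lambda \in [0,1]$, and consider $\gamma_\lambda = (1-\lambda)\gamma_0 + \lambda \gamma_1$. First I would invoke Lemma~\ref{lem:Pathspace-convex} to conclude that $\gamma_\lambda \in \Gamma$, so that $\gamma_\lambda$ is at least a valid piecewise linear path. Next I would apply linearity of $\vec{o}$ to get
\[
\vec{o}(\gamma_\lambda) = (1-\lambda)\,\vec{o}(\gamma_0) + \lambda\,\vec{o}(\gamma_1).
\]
Restricting both sides to $\Omega$ (which is just componentwise projection, hence itself linear) and using the defining property that $\vec{o}(\gamma_0)\upharpoonright\Omega = \vec{0}$ and $\vec{o}(\gamma_1)\upharpoonright\Omega = \vec{0}$, one obtains $\vec{o}(\gamma_\lambda)\upharpoonright\Omega = \vec{0}$, so $\gamma_\lambda \in \Gamma(\Omega)$.

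There is no real obstacle here: the whole point of packaging paths as vectors in $V$ and setting up $\vec{o}$ as a linear map was precisely to make statements like this transparent. The only mild subtlety worth noting is that we are intersecting the convex set $\Gamma$ with the preimage under the linear map $\vec{o}$ of the linear subspace $\{\vec{v} \in \R^\Lambda : \vec{v}\upharpoonright\Omega = \vec{0}\}$; an intersection of a convex set with an affine (in fact linear) subspace is convex, which is exactly what the one-line argument above expresses. I would present the proof in this two-sentence style and move on.
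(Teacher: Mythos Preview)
Your proposal is correct and matches the paper's proof essentially line for line: invoke Lemma~\ref{lem:Pathspace-convex} for $\gamma_\lambda \in \Gamma$, use linearity of $\vec{o}$ to write $\vec{o}(\gamma_\lambda) = (1-\lambda)\vec{o}(\gamma_0) + \lambda\vec{o}(\gamma_1)$, and conclude that $\Omega$ is drained at $\vec{o}(\gamma_\lambda)$. Your additional remark about $\Gamma(\Omega)$ being the intersection of $\Gamma$ with the preimage of a linear subspace under a linear map is a clean conceptual summary, but the paper simply states the direct computation.
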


\begin{proof}
Suppose that $\vgamma_0$ and $\vgamma_1$ are in $\Gamma(\Omega)$ and let $\vgamma_\lambda = (1-\lambda)\vgamma_0 + \lambda \vgamma_1$. By Lemma~\ref{lem:Pathspace-convex} we know that $\vgamma_\lambda$ is in $\Gamma_\infty$. Moreover 
\[\vec{o}(\vgamma_\lambda) = (1 - \lambda)\vec{o}(\vgamma_0) + \lambda \vec{o}(\vgamma_1).\]
Since $\Omega$ is drained at both $\vec{o}(\vgamma_0)$ and $\vec{o}(\vgamma_1)$, we conclude that it must also be drained at $\vec{o}(\vgamma_\lambda)$, so $\vgamma_\lambda \in \Gamma(\Omega)$. 
\end{proof}

\begin{defn}
Let 
\[\Sigma(\Omega) = \{ \left. \vx \in \Rp^\Lambda \ \right|\ [\vx] \subseteq \Sigma,\ (\exists \vo)\ \vx \segto \vo, \text{ and } \vo \upharpoonright \Omega = \vec{0} \}\]
denote those input states from which the siphon $\Omega$ is drainable.
\end{defn}

\begin{lem}\label{lem-siphon-implies-linear-nondual}
  Let $f:\Rp^k \to \Rp$ be stably computed by a CRC $\calC=(\Lambda,R,\Sigma,\{Y\})$.
  Let $\Omega$ be a stabilizing siphon.
  Then $f$ restricted to $\Sigma(\Omega)$ is a linear function.
\end{lem}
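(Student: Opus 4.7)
The plan is to exploit the linear structure of the path space $\Gamma$ from Definition~\ref{def:paths} together with the defining property of output stable siphons to show that $f$, restricted to $\Sigma(\Omega)$, is additive and positively homogeneous; since $\Sigma(\Omega)$ is a convex cone, this is enough to force $f \upharpoonright \Sigma(\Omega)$ to be the restriction of a linear function on $\R^k$.

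First I would strengthen the previously observed convexity of $\Gamma(\Omega)$ to closure under nonnegative scaling. If $\gamma = (\vx_0, \vu_1, \vu_2, \ldots) \in \Gamma$ and $\alpha \geq 0$, then $\vx_i(\alpha\gamma) = \alpha \vx_i(\gamma) \in \Rp^\Lambda$, and any reaction with positive flux in $\alpha \vu_i$ has positive flux in $\vu_i$ (so it is applicable at $\vx_{i-1}(\gamma)$, hence at $\vx_{i-1}(\alpha\gamma)$ since applicability is preserved under positive scaling of concentrations). Combined with the existing convexity, $\Gamma(\Omega)$ is closed under all nonnegative combinations $\alpha \gamma_1 + \beta \gamma_2$ with $\alpha + \beta > 0$; and linearity of $\vec{o}$ gives $\vec{o}(\alpha \gamma_1 + \beta \gamma_2) \upharpoonright \Omega = \alpha \vec{o}(\gamma_1) \upharpoonright \Omega + \beta \vec{o}(\gamma_2) \upharpoonright \Omega = \vec{0}$, keeping us in $\Gamma(\Omega)$.

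Next, given $\vx_1, \vx_2 \in \Sigma(\Omega)$, I would pick paths $\gamma_1, \gamma_2 \in \Gamma(\Omega)$ with $\vx_0(\gamma_i) = \vx_i$ and $\vec{o}(\gamma_i) = \vo_i$ where $\vo_i \upharpoonright \Omega = \vec{0}$. Since $\Omega$ is an output stable siphon, each $\vo_i$ is output stable; because $\calC$ stably computes $f$, this forces $\vo_i(Y) = f(\vx_i)$. For any $\alpha, \beta \geq 0$ with $\alpha + \beta > 0$, the combined path $\alpha \gamma_1 + \beta \gamma_2$ lies in $\Gamma(\Omega)$, begins at $\alpha \vx_1 + \beta \vx_2$ (still supported in $\Sigma$, hence a valid input state), and ends at $\alpha \vo_1 + \beta \vo_2$, which is output stable (its restriction to $\Omega$ is $\vec{0}$) and has $Y$-coordinate $\alpha f(\vx_1) + \beta f(\vx_2)$. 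This witnesses $\alpha \vx_1 + \beta \vx_2 \in \Sigma(\Omega)$, and stable computation forces
\[
f(\alpha \vx_1 + \beta \vx_2) = \alpha f(\vx_1) + \beta f(\vx_2).
\]
Thus $\Sigma(\Omega)$ is a convex cone on which $f$ is additive and positively homogeneous; a standard extension (define a linear map on the subspace $\mathrm{span}(\Sigma(\Omega)) = \Sigma(\Omega) - \Sigma(\Omega)$ via $g(\vx_1 - \vx_2) := f(\vx_1) - f(\vx_2)$, verify well-definedness using additivity, then extend arbitrarily to $\R^k$) produces the desired linear function agreeing with $f$ on $\Sigma(\Omega)$.

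The main obstacle I anticipate is being careful that the combined path's endpoint $\alpha \vo_1 + \beta \vo_2$ is genuinely output stable: this is precisely where the assumption that $\Omega$ is an \emph{output stable} siphon is used, since without it draining $\Omega$ would only assert reachability of a drained state, not of an output stable one. A minor bookkeeping point is verifying that the identification of inputs (vectors in $\Rp^\Sigma$) with states (vectors in $\Rp^\Lambda$ padded with zeros) is preserved under taking nonnegative combinations of input vectors, but this follows immediately from $[\alpha \vx_1 + \beta \vx_2] \subseteq [\vx_1] \cup [\vx_2] \subseteq \Sigma$.
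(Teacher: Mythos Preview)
Your proposal is correct. Both your argument and the paper's ultimately rest on the linearity of the maps $\vx_0, \vo : V \to \R^\Lambda$ and the structure of $\Gamma(\Omega)$, but they extract the conclusion differently. The paper uses only \emph{convexity} of $\Gamma(\Omega)$: it projects $\Gamma(\Omega)$ to the $(k+1)$-dimensional input--output space, observes that the image is the graph of $f\upharpoonright\Sigma(\Omega)$ and is convex, and argues geometrically that a convex graph of a function must lie in a $k$-dimensional hyperplane (otherwise the convex hull of $k+1$ affinely independent points would contain two output values over a single input). This yields only that $f$ is \emph{affine}; a separate step (no reactions $\emptyset\to\ldots$, hence $f(\vec 0)=0$) upgrades affine to linear. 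You instead strengthen convexity to closure under all nonnegative combinations and read off additivity and positive homogeneity of $f$ directly from $\vo(\alpha\gamma_1+\beta\gamma_2)(Y)=\alpha f(\vx_1)+\beta f(\vx_2)$, bypassing both the hyperplane argument and the $f(\vec 0)=0$ step. Your route is a touch more elementary and self-contained; the paper's route makes do with the already-proved convexity lemma and a clean dimension count.
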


\begin{proof}
Recall $\Gamma(\Omega)$ from \Cref{defn:paths-draining-siphon}, the map $\vx_0$ from \Cref{def:prepaths}, and the map $\vo$ from \Cref{lem:o}.
First project 
$\Gamma(\Omega)$ to $\Rp^\Lambda \times \Rp^\Lambda$ by the map 
$\vgamma \mapsto (\vec{x}_0(\vgamma), \vec{o}(\vgamma))$. Let $G \subseteq \R^{k+1}$ be the further projection to the $(k+1)$-dimensional subspace corresponding only to the input species $X_1,\ldots,X_k$ and output species $Y$.
$G$ is the graph of the function $y = f(\vx)$ restricted to inputs $\vx \in \Sigma(\Omega)$. Since $G$ is the image of a convex set under a linear transformation, it is also convex. We claim that $G$ must be a subset of a $k$-dimensional hyperplane. 

For the sake of contradiction, suppose not. Then there are $k+1$ non-coplanar points in $G$. Since $G$ is convex, it contains the entire $(k+1)$-dimensional convex hull $H$ of these points. Since $H$ is a $(k+1)$-dimensional convex polytope, it contains two different values of $y$ corresponding to the same value of $\vx$, contradicting the fact that only a single $y$ value exists in all output-stable states reachable from $\vx$. This establishes the claim that $G$ must be a subset of a $k$-dimensional hyperplane.
  
Since the graph of $f$ is a subset of a $k$-dimensional hyperplane, $f$ is an affine function. Since there are no reactions of the form $\emptyset \to \ldots$, $Y$ cannot be produced from the initial state $\vx=\vec{0}$ (nor can any other species), so $f(\vec{0})=0$. Therefore this hyperplane passes through the origin, so it defines a linear function.
\end{proof}

In Lemma~\ref{lem-siphon-implies-linear-nondual}, the reason that we restrict attention to a single output siphon $\Omega$ is that if different output siphons are drained, then different linear functions may be computed by the CRC. 
For example, $X_1+X_2\to Y$ computes $f(x_1,x_2) = x_1$ if siphon $\{X_2\}$ is drained and $f(x_1,x_2) = x_2$ if siphon $\{X_1\}$ is drained.
If a CRC stably computes then an output stable state is reachable from any input state.
Thus by \Cref{lem:siphon-stable}, from every input state some stabilizing siphon is drainable, and the following corollary is immediate:
\begin{corollary}
    \label{cor:picewise-linear}
  Let $f:\Rp^k \to \Rp$ be stably computed by a CRC.
  Then $f$ is piecewise linear.    
\end{corollary}

\subsubsection{Positive-Continuity}

Ideally, in order to prove that the function stably computed by a CRC is positive-continuous,
we would like to prove the following:
for any {stabilizing} siphon $\Omega$, the set $\Sigma(\Omega)$ of input states that can drain $\Omega$ is closed relative to the positive orthant.
If that were true, then we could use a fundamental topological result that if a function is piecewise continuous with finitely many pieces 
(e.g., piecewise linear),
and if the domain defining each piece is closed
(with agreement between pieces on intersecting domains),
then the whole function is continuous.
However, the above statement is not true in general. 
Consider the following counterexample:
\begin{align*}
    X_1 &\to C
    \\
    X_1 + X_2 + C &\to C + Y
\end{align*}
If initially $\vi(X_1) > \vi(X_2)$, then the {stabilizing} siphon $\{X_2\}$ is drainable, by producing $(\vi(X_1) - \vi(X_2)) / 2$ of $C$ via the first reaction (leaving an excess of $X_1$ over $X_2$ still),
then running the second reaction until $X_2$ is gone to produce $Y$.
Because $X_2$ can only be consumed if $C$ is produced,
which requires consuming a positive amount of $X_1$,
the set of inputs from which $\{X_2\}$ can be drained is the non-closed set $\{ \vi \mid \vi(X_1) > \vi(X_2)\}$.
Note that the above CRC does not stably compute anything because,
starting from a state with $\vi(X_1) > \vi(X_2)$, 
the first reaction could run until $X_2$ exceeds $X_1$ before starting the second reaction,
which would imply that the amount of $Y$ produced depends on how much reaction 1 happens.
It is still unclear whether such counterexamples exist for CRCs stably computing some function that is not identically $0$.

Instead of relying on $\Sigma(\Omega)$ being closed, we must make a more careful argument.
In lieu of working directly with the sets $\Sigma(\Omega)$ we consider ``shifted'' sets $\Tilde{\Sigma}_{(\vy, \vz)}(\Omega)$ (see Definition~\ref{defn:shifted_set} below). Each $\Tilde{\Sigma}_{(\vy, \vz)}(\Omega)$ is (possibly strictly) contained in the original $\Sigma(\Omega)$, but they still cover the set of inputs. Crucially, we are able to show that the shifted sets $\Tilde{\Sigma}_{(\vy, \vz)}(\Omega)$ are closed, allowing us to apply the argument at the start of this section to prove that every function stably computed by a CRC is positive-continuous.

\begin{defn}
Let 
\[X(\Omega) = \{ \left. \vx \in \Rp^\Lambda\ \right|\ (\exists \vo)\ \vx \to^1 \vo \text{ and } \vo \upharpoonright \Omega = \vec{0} \}\]
denote those states from which siphon $\Omega$ is drainable via a single straight line segment.
\end{defn}

\begin{lem}\label{lem-siphon-inputs-closed}
  Let $\Omega$ be a siphon.
  Let $\va_1,\va_2,\ldots \in X(\Omega)$ be a convergent sequence of states, where $\va=\lim_{i\to\infty} \va_i$.
  Suppose $[\va] = \Lambda$.
  Then $\va \in X(\Omega).$
\end{lem}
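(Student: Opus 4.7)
The plan is to choose, for each $i$, a flux vector $\vu_i \in \Rp^R$ witnessing $\va_i \in X(\Omega)$, extract a convergent subsequence $\vu_{i_k} \to \vu$, and verify that $\vu$ witnesses $\va \in X(\Omega)$.

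First, since $[\va] = \Lambda$ and $\va_i \to \va$, we have $[\va_i] = \Lambda$ for all sufficiently large $i$; dropping finitely many initial terms, we may assume this holds throughout, so every reaction is applicable at every $\va_i$ and at $\va$. The applicability condition in Definition~\ref{defn-reachable-line} is then automatic, and $\va_i \in X(\Omega)$ is equivalent to nonemptiness of the polyhedron
\[
P_i \;=\; \{\, \vu \in \Rp^R \mid \vM \vu + \va_i \geq \vec{0},\ (\vM \vu + \va_i) \upharpoonright \Omega = \vec{0}\,\}.
\]

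The key step is to choose $\vu_i \in P_i$ whose norms are bounded uniformly in $i$. Writing $P_i$ in the standard form $\{\vu \geq \vec{0} \mid A \vu \leq \vb_i,\ A' \vu = \vb'_i\}$, the matrices $A,A'$ depend only on $\vM$ and $\Omega$, while $\vb_i, \vb'_i$ depend linearly (hence continuously) on $\va_i$. Standard polyhedral theory guarantees that a nonempty polyhedron of this form contains a basic feasible solution (a vertex), whose coordinates are given by Cramer's rule applied to square submatrices of $(A, A')$; since $\vM$ is integer-valued, this yields a constant $C$ (depending only on $\vM$ and $\Omega$) such that some $\vu_i \in P_i$ satisfies $\|\vu_i\| \leq C \cdot \|(\vb_i, \vb'_i)\|$. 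Because $\va_i \to \va$ implies $\|\va_i\|$ is bounded, so is $\|\vu_i\|$. By compactness we then pass to a subsequence with $\vu_{i_k} \to \vu$, where clearly $\vu \geq \vec{0}$. Taking limits in the defining (in)equalities of $P_{i_k}$ places $\vu$ in $P_{\va}$, so setting $\vo = \va + \vM \vu$ gives $\vo \geq \vec{0}$ and $\vo \upharpoonright \Omega = \vec{0}$. Combined with $[\va] = \Lambda$ ensuring applicability, this yields $\va \to^1 \vo$, so $\va \in X(\Omega)$.

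The main obstacle is the uniform norm bound in the middle step: a priori, the witnesses $\vu_i$ could escape to infinity along a recessive direction $\hat{\vu} \geq \vec{0}$ of $P_i$ (meaning $\vM \hat{\vu} \geq \vec{0}$ with $(\vM \hat{\vu}) \upharpoonright \Omega = \vec{0}$), and no such direction is ruled out by the hypotheses. Choosing a canonical vertex witness via polyhedral theory cleanly sidesteps this; an alternative would be to analyze $\hat{\vu} = \lim \vu_i / \|\vu_i\|$ directly and subtract a suitable multiple from each $\vu_i$ to reduce its norm, but that detour seems unnecessary given the polyhedral argument.
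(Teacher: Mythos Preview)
Your argument is correct, but the paper proves the lemma differently. Rather than selecting bounded witnesses and extracting a convergent subsequence, the paper works with the joint polyhedron
\[
P \;=\; \{\,(\vx_0,\vu)\in \Rp^\Lambda \times \Rp^R \mid \vx_0 + \vM\vu \in \Rp^\Lambda,\ (\vx_0+\vM\vu)\upharpoonright\Omega = \vec{0}\,\},
\]
ignoring applicability entirely. Since linear images of polyhedra are polyhedra (the paper cites \cite{ziegler1995polytopes}), the projection $\vx_0(P)$ onto the first factor is closed. One then observes $X(\Omega)\subseteq \vx_0(P)$ and $\vx_0(P)\cap\R_{>0}^\Lambda = X(\Omega)\cap\R_{>0}^\Lambda$ (the latter because when every species is present, applicability is automatic). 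Since all but finitely many $\va_i$ lie in $\R_{>0}^\Lambda$, they lie in $\vx_0(P)$; closedness gives $\va\in\vx_0(P)$, and $[\va]=\Lambda$ then gives $\va\in X(\Omega)$.

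The two proofs invoke different standard facts about polyhedra. The paper's projection argument is shorter and avoids any explicit norm control: it packages the ``existence of a witness for the limit'' into the single statement that projections of polyhedra are closed. Your route is more hands-on---you explicitly locate vertex witnesses with a Cramer-type bound and then pass to a limit---which has the virtue of being self-contained if one does not want to quote the projection theorem, and makes transparent exactly where the potential obstruction (recession directions) is handled. Either is fine; note also that your appeal to integrality of $\vM$ is not actually needed for the bound, since there are only finitely many possible basis matrices regardless.
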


\begin{proof}
Consider the set $P$ of $\vgamma = (\vec{x}_0, \vec{u}) \in \R^\Lambda_{\ge 0} \times \R^R_{\ge 0}$ such that $\vec{o}(\vgamma) \in \R^\Lambda_{\ge 0}$ and $\vec{o}(\vgamma) \upharpoonright \Omega = \vec{0}$. 
Note that reactions occurring with positive flux in $\vec{u}$ might not be applicable at $\vec{x}_0$.
$P$ is cut out by a system of non-strict linear inequalities (in other words, it is a polyhedron). By \cite{ziegler1995polytopes}, $\vec{x}_0(P)$ is also a polyhedron, and is in particular closed. 

Note that $X(\Omega) \subseteq \vec{x}_0(P)$ and $\vec{x}_0(P) \cap \R_{> 0}^\Lambda = X(\Omega) \cap \R_{> 0}^\Lambda$. The first relation follows since if $\vec{x} \in X(\Omega)$, then the straight-line path draining $\Omega$ produces a $\vgamma \in P$ such that $\vec{x}_0(\vgamma) = \vec{x}$. The second relation holds since if every species is present in $\vec{x}_0$ then every reaction is applicable at $\vec{x}_0$, so any of the points $\vgamma \in P$ that project to $\vec{x}_0$ are valid paths that drain $\Omega$.

Since $\va = \lim_{i\to\infty} \va_i$, we have that for all but finitely many $i$, $\va_i \in X(\Omega) \cap \R_{> 0}^\Lambda$. As a result, these $\va_i$ are in $\vec{x}_0(P)$, so $\va$ in in $\vec{x}_0(P)$, too, since the set is closed.
Since $\va \in \vec{x}_0(P) \cap \R_{>0}^\Lambda$, we conclude that $\va \in X(\Omega).$
\end{proof}

Note that the hypothesis $[\va]=\Lambda$ is necessary.
Otherwise, consider the reactions $A\to C$, $A+B\to\emptyset$, and $F+C\to C$, with $\va_i(C)=0$, $\va_i(F)=1$, $\va_i(B)=1$, and $\va_i(A)$ approaching 1 from above as $i \to \infty$ (whence $C \not \in [\va]$).
Then the siphon $\Omega=\{A,B,F\}$ is drainable from each $\va_i$ by running $A \to C$ until $A$ and $B$ have the same concentration, then running the other two reactions to completion.
However, $\va(A)=\va(B)$, so running any amount of reaction $A \to C$ prevents reaction $A+B\to\emptyset$ from draining $B$.
Therefore $\va \not\in X(\Omega)$ but $\va_i \in X(\Omega)$ for all $i$.

\begin{defn}
A pair $(\vy, \vz) \in \Rp^\Lambda \times \Rp^\Lambda$ of states is a 
\emph{full input pair}
if $[\vy] = \Sigma$, $[\vz] = \Lambda$, and $\vy \segto \vz$. 
\end{defn}

\begin{defn}\label{defn:shifted_set}
If $(\vy, \vz)$ is a 
full input pair and $\Omega$ is an stabilizing siphon, define
\begin{align*}
\Tilde{\Sigma}_{(\vy, \vz)}(\Omega) = \{\vx \in \Rp^\Lambda \ |\ [\vx] = \Sigma \text{ and } \forall \lambda > 0 \text{ such that } \lambda\vy < \vx, \\
\text{it is the case that }\vx - \lambda\vy + \lambda\vz \in X(\Omega)\}    
\end{align*}
\end{defn}

Intuitively, in the CRC at the beginning of the section, the obstacle to the set $\Sigma(\Omega)$ being closed for the siphon $\Omega = \set{X_2}$ was that not all species were present initially: to drain $\Omega$, you first need to produce some (arbitrarily small) amount of $C$, which leads to the requirement $\vi(X_1) > \vi(X_2)$. To fix this problem, $\tilde{\Sigma}_{(\vy, \vz)}(\Omega)$ considers only the states that can drain $\Omega$ after having been ``perturbed" into a state where all species are present, where the full input pair $(\vy, \vz)$ specifies how to perform this perturbation. 

To see how this works in the example CRC, let $(\vy, \vz)$ be the full input pair where $\vy = \set{1X_1, 1X_2, 0C, 0Y}$ and $\vz = \set{.25X_1, .75X_2, .5C, .25Y}$. Then for an input state $\vx = \set{aX_1, bX_2, 0C, 0Y}$, $\lambda\vy < \vx$ when $\lambda < \min(a,b)$. As a result, $\vx \in \tilde{\Sigma}_{(\vy, \vz)}(\Omega)$ if $\vx - \lambda\vy + \lambda\vz = \set{(a - .75\lambda)X_1, (b - .25\lambda)X_2, .5\lambda C, .25\lambda Y}$ can drain $X_2$ for all $0 < \lambda < \min(a,b)$. This happens when $b - .25\lambda \le a - .75\lambda$, so $b + .5\lambda \le a$. Taking the limit as $\lambda \to 0$ we see that $b \le a$, so $\min(a,b) = b$, and taking the limit as $\lambda \to \min(a,b) = b$, we see that $1.5 b \le a$. Since $1.5b \le a$ is also a sufficient condition for $\vx$ to be in $\tilde{\Sigma}_{(\vy, \vz)}(\Omega)$ we see that $\tilde{\Sigma}_{(\vy, \vz)}(\Omega)$ is closed and contained in $\Sigma(\Omega)$. The next two lemmas show that these properties hold in general.

\begin{lem}\label{lem-shifted-closed}
For any full input pair $(\vy, \vz)$ and any {stabilizing} siphon $\Omega$, $\Tilde{\Sigma}_{(\vy, \vz)}(\Omega)$ is closed relative to $\R_{>0}^\Sigma$. 
\end{lem}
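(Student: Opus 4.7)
The plan is to prove closedness by a direct sequential characterization: take a sequence $\vx_n \in \Tilde{\Sigma}_{(\vy, \vz)}(\Omega)$ with $\vx_n \to \vx$ where $\vx \in \R_{>0}^\Sigma$, and show $\vx \in \Tilde{\Sigma}_{(\vy, \vz)}(\Omega)$. Since $\vx \in \R_{>0}^\Sigma$ the support condition $[\vx] = \Sigma$ is automatic, so the real work is verifying the universal condition over $\lambda$.

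Fix any $\lambda > 0$ with $\lambda \vy < \vx$. I will produce $\va := \vx - \lambda \vy + \lambda \vz$ as a limit of $\va_n := \vx_n - \lambda \vy + \lambda \vz$ and apply Lemma~\ref{lem-siphon-inputs-closed} to conclude $\va \in X(\Omega)$. For this I need two things: (i) $\va_n \in X(\Omega)$ for all sufficiently large $n$, and (ii) $[\va] = \Lambda$. For (i), the strict inequality $\lambda \vy < \vx$ is preserved under small perturbations on the coordinates in $\Sigma$ (where $\vy$ is supported), so $\lambda \vy < \vx_n$ eventually, and then the defining property of $\vx_n \in \Tilde{\Sigma}_{(\vy, \vz)}(\Omega)$ gives $\va_n \in X(\Omega)$. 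For (ii), because $(\vy \segto \vz)$ is a good pair we have $[\vz] = \Lambda$, so $\lambda \vz(S) > 0$ for every species $S$, while the remaining terms $\vx(S) - \lambda \vy(S)$ are nonnegative (on $\Sigma$ by the strict inequality $\lambda \vy < \vx$, and trivially zero elsewhere). Hence $\va(S) > 0$ for every $S \in \Lambda$.

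Combining (i), (ii), and $\va_n \to \va$, Lemma~\ref{lem-siphon-inputs-closed} applies directly and yields $\va \in X(\Omega)$. Since $\lambda$ was arbitrary subject to $\lambda \vy < \vx$, this shows $\vx \in \Tilde{\Sigma}_{(\vy, \vz)}(\Omega)$, proving closedness relative to $\R_{>0}^\Sigma$.

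The only subtlety I foresee is interpreting the strict inequality $\lambda \vy < \vx$ correctly (it is strict on the coordinates in $\Sigma$, which is the support of $\vy$); once that is pinned down the rest of the argument is essentially bookkeeping. The fact that $\vz$ has full support $\Lambda$ is crucial, since without it $\va$ could fail the hypothesis $[\va] = \Lambda$ of Lemma~\ref{lem-siphon-inputs-closed} and the closedness lemma for $X(\Omega)$ would not apply. No other machinery beyond Lemma~\ref{lem-siphon-inputs-closed} and the definitions appears to be needed.
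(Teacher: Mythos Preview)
Your proof is correct and follows essentially the same route as the paper's: take a convergent sequence in $\Tilde{\Sigma}_{(\vy,\vz)}(\Omega)$, fix $\lambda$ with $\lambda\vy < \vx$, pass the strict inequality to the tail of the sequence, and then use the closedness property of $X(\Omega)$ at the limit point $\vx - \lambda\vy + \lambda\vz$. If anything, you are more careful than the paper, which simply writes ``since $X(\Omega)$ is closed'' without explicitly checking the hypothesis $[\va]=\Lambda$ of Lemma~\ref{lem-siphon-inputs-closed}; your observation that $[\vz]=\Lambda$ forces $\va$ to have full support is exactly the point that makes that invocation legitimate.
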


\begin{proof}
Let $\vx$ be a state such that $[\vx] = \Sigma$ and let $\{\vx_i\}$ be a sequence such that $\vx = \lim_{i \to \infty} \vx_i$ and $\vx_i \in \Tilde{\Sigma}_{(\vy, \vz)}(\Omega)$. For any $\lambda > 0$ such that $\lambda\vy < \vx$, by throwing out finitely many terms in the sequence $\{\vx_i\}$, we can guarantee that $\lambda \vy < \vx_i$ for all $i$, too. Since $\vx_i \in \Tilde{\Sigma}_{(\vy, \vz)}(\Omega)$, we know that $\vx_i - \lambda \vy + \lambda\vz \in X(\Omega)$ for all $i$. Since $X(\Omega)$ is closed, we see that 
\[\vx -\lambda \vy + \lambda\vz = \left(\lim_{i \to \infty} \vx_i\right) -\lambda \vy + \lambda\vz = \lim_{i \to \infty} (\vx_i -\lambda \vy + \lambda\vz)\]
is also in $X(\Omega)$. Since this is true for every $\lambda$ such that $\lambda\vy < \vx$, we conclude that $\vx \in \Tilde{\Sigma}_{\vy, \vz}(\Omega)$. Since this is true for any $\vx$ in $\R_{>0}^\Sigma$, we conclude that $\Tilde{\Sigma}_{\vy, \vz}(\Omega)$ is closed relative to $\R_{>0}^\Sigma$.
\end{proof}

The following lemma is almost immediate from the definition. The only possible concern one might have is that an input state $\vx$ is contained in $\Tilde{\Sigma}_{(\vy, \vz)}(\Omega)$ ``vacuously"---in other words, that there simply does not exist a $\lambda > 0$ such that $\lambda\vy < \vx$.  

\begin{lem}
\label{lem-shifted-in-unshifted}
For any full input pair $(\vy, \vz)$ and any {stabilizing} siphon $\Omega$, $\Tilde{\Sigma}_{(\vy, \vz)}(\Omega) \subseteq \Sigma(\Omega)$.
\end{lem}

\begin{proof}
Let $\vx$ be in $\Tilde{\Sigma}_{(\vy, \vz)}$. By definition, $[\vy] = \Sigma$, so the following is a well-defined real number
\[\lambda_0 = \min_{S \in \Sigma}
\left\{\frac{\vx(S)}{\vy(S)}\right\}.\]
In other words, $\lambda_0$ is the number so that $\lambda\vy < \vx$ if and only if $\lambda < \lambda_0$. 
Because $[\vx] = \Sigma$ we know that $\lambda_0 > 0$. 
Also $(\lambda_0/2) \vy \segto (\lambda_0/2) \vz$ because $(\vy, \vz)$ is a full input pair.
Then by additivity of $\segto$,
$\vx \segto \vx - (\lambda_0/2) \vy + (\lambda_0/2) \vz \to^1 \vo$, where $\Omega$ is drained at $\vo$. 
Because $\segto$ is transitive (Corollary~\ref{cor:segto-is-transitive}) we conclude that $\vx \segto \vo$, so $\vx \in \Sigma(\Omega)$. 
\end{proof}

The above two lemmas show that $\tilde{\Sigma}_{(\vy, \vz)}(\Omega)$ is topologically better behaved than $\Sigma(\Omega)$, although possibly smaller. However, in order for these sets to be useful for analyzing the behavior of a CRC we need to show that they are not ``too small''---in particular the next lemma shows that under the assumption that a CRC stably computes a function then for any fixed choice of a full input pair $(\vy, \vz)$, the sets $\tilde{\Sigma}_{(\vy, \vz)}(\Omega)$ are still big enough to cover all of $\R_{> 0}^\Sigma$.

\begin{lem}
\label{lem-shifted-sets-cover}
If $\calC$ is a stably computing CRC then for any fixed full input pair $(\vy, \vz)$, as $\Omega$ varies among all of the {stabilizing} siphons, the sets $\Tilde{\Sigma}_{(\vy, \vz)}(\Omega)$ cover $\R_{>0}^\Sigma$. 
\end{lem}

\begin{proof}
Let $\vx$ be an input state such that $[\vx] = \Sigma$ and let $\lambda_0$ be as in the proof of \Cref{lem-shifted-in-unshifted}. We know that 
$\lambda_0\vy \segto \lambda_0 \vz$ because $(\vy, \vz)$ is a full input pair. 
By additivity of $\segto$,
\[\vx = (\vx - \lambda_0 \vy) + \lambda_0 \vy \segto (\vx - \lambda_0 \vy) + \lambda_0 \vz \]
so since $\calC$ is a stably-computing CRC there must be 
some output-stable state $\vo$ such that $\vx - \lambda_0 \vy + \lambda_0 \vz \segto \vo$. By Lemma~\ref{lem:siphon-stable} there is an {stabilizing} siphon $\Omega$ so that $\Omega$ is drained at $\vo$. 

For any $\lambda > 0$ such that $\lambda\vy < \vx$, we know that $\lambda < \lambda_0$, so 
\begin{align*}
&\vx - \lambda\vy + \lambda\vz = (\vx - \lambda_0\vy + \lambda\vz) + (\lambda_0 - \lambda)\vy \\
\segto &(\vx - \lambda_0\vy + \lambda\vz) + (\lambda_0 - \lambda)\vz = \vx - \lambda_0 \vy + \lambda_0 \vz
\end{align*}
Since $\vx - \lambda_0 \vy + \lambda_0 \vz \segto \vo$ we see that $\vx - \lambda \vy + \lambda \vz \segto \vo$, and since $[\vz] = \Lambda$, by Lemma~\ref{lem-all-producible-present} we conclude that $\vx - \lambda\vy + \lambda\vz \to^1 \vo$. Since this is true for any $\lambda > 0$ such that $\lambda\vy < \vx$, we conclude that $\vx \in \Tilde{\Sigma}_{(\vy, \vz)}(\Omega)$. This shows that every $\vx$ with $[\vx] = \Sigma$ is in $\Tilde{\Sigma}_{(\vy, \vz)}(\Omega)$ for some $\Omega$, as desired. 
\end{proof}

We now use the above technical machinery to prove the following result, which is \emph{almost} the full negative result for direct computation, but leaves out the constraint that $f$ is \emph{rational} linear.
Rationality is shown in \Cref{subsec:rationality} below. 
Recall the 
\emph{positive-continuous} functions  
from \Cref{def:positive-continuous}.

\begin{lem}\label{lem-nonnegative-computable-implies-positive-continuous}
  Let $f:\Rp^k \to \Rp$ be stably computed by a CRC.
  Then $f$ is positive-continuous and piecewise linear.
\end{lem}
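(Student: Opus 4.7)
The plan is to establish piecewise linearity and positive-continuity separately, each from a designated chunk of the machinery above.

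For \emph{piecewise linearity}, Lemma~\ref{lem:siphon-stable} supplies a finite collection $\mathcal{S}$ of output-stable siphons with the property that a state is output-stable iff it drains some $\Omega \in \mathcal{S}$. Stable computation of $f$ forces every input $\vx$ to reach such a state, so the finite family $\{\Sigma(\Omega)\}_{\Omega\in\mathcal S}$ covers $\Rp^k$; by Lemma~\ref{lem-siphon-implies-linear-nondual}, $f$ restricted to each $\Sigma(\Omega)$ is linear. This exhibits $f$ as piecewise linear in the sense of Definition~\ref{def:piecewise-rational-linear} (rationality being deferred to \cref{subsec:rationality}).

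For \emph{positive-continuity}, fix $U \subseteq \{1,\ldots,k\}$ and set $\Sigma_U = \{X_i : i \in U\}$, $\Lambda_U = \mathcal{P}(\Sigma_U)$; the goal is to show $f$ is continuous on $D_U$. The naive attempt---argue that each $\Sigma(\Omega) \cap D_U$ is closed in $D_U$ and then paste---fails in general, as flagged by the counterexample discussed after Lemma~\ref{lem-siphon-inputs-closed}. This is exactly the difficulty the shifted sets of Definition~\ref{defn:shifted_set} were invented to circumvent, and the plan is to use them in place of the naive pieces.

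By Lemma~\ref{lem:not-producible-is-siphon}, $\Lambda \setminus \Lambda_U$ is a siphon, so starting from $D_U$ the dynamics of $\calC$ coincide with those of the subnetwork obtained by restricting to species $\Lambda_U$ and reactions whose reactants lie entirely in $\Lambda_U$; viewed with input species $\Sigma_U$ this subnetwork stably computes $f \upharpoonright D_U$. The shifted-set lemmas~\ref{lem-shifted-sets-cover}, \ref{lem-shifted-in-unshifted}, and~\ref{lem-shifted-closed} use $\Sigma$ and $\Lambda$ only to designate which coordinates must be strictly positive in inputs and in the right endpoint of a good pair; re-reading their proofs with $\Sigma_U$ and $\Lambda_U$ substituted yields, after fixing any single good pair $(\vy,\vz)$ with $[\vy]=\Sigma_U$ and $[\vz]=\Lambda_U$, that the finitely many shifted sets $\tilde\Sigma_{(\vy,\vz)}(\Omega)$ (ranging over $\Omega\in\mathcal S$) cover $D_U$, are each closed in $D_U$, and each sits inside $\Sigma(\Omega)$ where $f$ is linear. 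A finite cover of $D_U$ by closed-in-$D_U$ sets on which $f$ coincides with linear (in particular continuous) functions, together with the fact that $f$ is single-valued, allows the standard closed-pasting principle to give continuity of $f$ on $D_U$. Since $U$ was arbitrary, $f$ is positive-continuous.

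The main obstacle I anticipate is precisely this stratum-by-stratum transfer: one has to verify that Lemmas~\ref{lem-shifted-sets-cover}--\ref{lem-shifted-closed} (and in particular the use of Lemma~\ref{lem-siphon-inputs-closed} for closedness of $X(\Omega)$) go through with $\Lambda_U$ replacing $\Lambda$ and $\Sigma_U$ replacing $\Sigma$. I expect this to be bookkeeping rather than a new idea, licensed by the siphon argument isolating the subnetwork on $\Lambda_U$ as a bona fide CRC in its own right.
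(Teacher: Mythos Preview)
Your proposal is correct and follows essentially the same route as the paper: reduce to the sub-CRC on $\Lambda_U$, fix a good pair, cover $D_U$ by the finitely many closed-in-$D_U$ shifted sets $\tilde\Sigma_{(\vy,\vz)}(\Omega)$, observe each sits inside some $\Sigma(\Omega)$ where $f$ is linear, and apply the closed pasting lemma. The only cosmetic difference is that you separate out piecewise linearity first via the raw cover $\{\Sigma(\Omega)\}$, whereas the paper reads it off the same shifted-set cover used for continuity; both are fine.
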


\begin{proof}
Piecewise linearity follows from \Cref{cor:picewise-linear}.
For positive continuity we proceed as follows.
Let $U \subseteq \{1,\ldots,k\}$, let $\vx \in D_U$
(where $D_U$ is as defined in \Cref{def:positive-continuous}), and let $\vx_1$, $\vx_2$, $\ldots \in D_U$ be an infinite sequence of points such that $\lim_{i\to\infty} \vx_i = \vx$. It suffices to show that $\lim_{i\to\infty} f(\vx_i) = f(\vx)$ --- i.e.~that $f$ is continuous on $D_U$. We take $\vx_i$ and $\vx$ equivalently to represent an initial state of the CRC giving the concentrations of species in $\Sigma=\{X_1,\ldots,X_k\}$.

In analyzing the behavior of the CRC on states in $D_U$, it will help us to consider the functionally equivalent CRC in which we remove species that are not producible from any state in $D_U$. For the purposes of this proof we consider this reduced CRC, and let $\Lambda$ be the corresponding reduced set of species.

Let $(\vy, \vz)$ be some full input pair. Then as $\Omega$ varies among the {stabilizing} siphons, $\Tilde{\Sigma}_{(\vy, \vz)}(\Omega)$ gives a finite collection of closed sets covering $\R_{>0}^\Sigma$ by Lemmas~\ref{lem-shifted-sets-cover} and~\ref{lem-shifted-closed}. Since $\Tilde{\Sigma}_{(\vy, \vz)}(\Omega) \subseteq \Sigma(\Omega)$ by Lemma~\ref{lem-shifted-in-unshifted} and since $f$ is linear (and therefore continuous) on $\Sigma(\Omega)$ by Lemma~\ref{lem-siphon-implies-linear-nondual}, we see that $f$ is continuous on each of the closed sets in this covering. By \cite{munkres2000topology}, if a topological space is a union of finitely many closed sets and $f_i$ are continuous function on each closed set that agree on overlaps, then they combine to give a continuous function. From this result, we conclude that $f$ is continuous on $D_U$, as desired. 
\end{proof}

\subsubsection{Rationality}
\label{subsec:rationality}

Recall Definition~\ref{def:rational-linear} defining rational linear functions and
Definition~\ref{def:piecewise-rational-linear} defining piecewise rational linear functions.

The main ideas of this section are as follows: to show that a function is piecewise rational linear, we need to show that it is rational linear on some finite set of domains that cover the input space.

A linear function $f: \R^n \to \R$ which sends $\Q^n$ to $\Q$ is necessarily rational linear. 
Since a linear function on $\R^n$ is completely determined by its behavior on any open ball, we can check this condition ``locally" on any domain that contains an open ball.
(Since $f$ is continuous and all of the points of domains that don't contain an open ball are limit points of the other domains,
we can ignore domains that don't contain open balls.)
The fact that the function sends $\Q^n$ to $\Q$ on such a domain is ultimately a consequence of the fact that the stoichiometry matrix of a CRN has only integer coefficients, so it preserves rationality.

Recall that $\prepathsset$, defined in Definition~\ref{def:prepaths}, is the space of all prepaths, and $\Gamma_\infty$, defined in Definition~\ref{def:paths}, is the space of all paths.

\begin{defn}
A path $\vgamma \in \Gamma_\infty$ is a \emph{rational path} if it has rational initial concentrations and all of its segments have rational fluxes. In other words, $\vx_0(\vgamma) \in \Q^\Lambda$ and $\vu_i(\vgamma) \in \Q^R$ for all $i$. 
\end{defn}

Note that since the stoichiometry matrix is an integer-valued matrix, it is automatically the case that $\vo(\vgamma)$ and every $\vx_i(\vgamma)$ is in $\Q^\Lambda$ for any rational path $\vgamma$.

\begin{defn}
We say that two prepaths $\vgamma, \vgamma' \in \prepathsset$ \emph{have the same sign} if for all species $S$, reactions $\alpha$, and $i \in \N$, it is the case that $\sgn \vx_i(\vgamma)_S = \sgn \vx_i(\vgamma')_S$ and $\sgn \vu_i(\vgamma)_\alpha = \sgn \vu_i(\vgamma')_\alpha$.  
\end{defn}

\begin{lem}\label{lem-rational-path}
Let $\vgamma \in \Gamma_\infty$ be a finite piecewise linear path. Then for any $\varepsilon > 0$, there is a rational path $\vgamma' \in \Gamma_\infty$ such that $\vgamma'$ has the same sign as $\vgamma$ and $||\vgamma' - \vgamma|| < \varepsilon$. If $\vgamma$ already has rational initial concentrations, then $\vgamma'$ can be chosen with the same initial concentrations.
\end{lem}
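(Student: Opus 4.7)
The plan is to realise the ``same-sign set'' of $\gamma$ as a relatively open subset of a rational linear (or affine) subspace of $V$, and then invoke density of rational points. Since $\gamma$ has only finitely many nonzero flux segments (say up to index $N$), all the constraints and the norm $\|\gamma - \gamma'\|$ live in a finite-dimensional subspace of $V$, where the usual topology applies.

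First I would record the sign pattern of $\gamma$ by splitting the finitely many relevant coordinates into the zero sets
\[
Z_S = \{(i,S) : \vx_i(\gamma)_S = 0\}, \qquad Z_R = \{(i,\alpha) : \vu_i(\gamma)_\alpha = 0\},
\]
and the complementary sets $P_S, P_R$ of strictly positive coordinates. Let $W \subseteq V$ be the linear subspace cut out by the homogeneous equations $\vx_i(\cdot)_S = 0$ for $(i,S) \in Z_S$ and $\vu_i(\cdot)_\alpha = 0$ for $(i,\alpha) \in Z_R$. Because each $\vx_i$ is a linear function of $\vx_0$ and the $\vu_j$ whose coefficients are \emph{integers} drawn from the stoichiometry matrix $M$, all defining equations have integer coefficients, so $W$ admits a rational basis, and rational points are dense in $W$. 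The strict positivity conditions $\vx_i(\cdot)_S > 0$ for $(i,S) \in P_S$ and $\vu_i(\cdot)_\alpha > 0$ for $(i,\alpha) \in P_R$ carve out a relatively open subset $U \subseteq W$ which contains $\gamma$.

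Given $\varepsilon > 0$, I would pick a rational $\gamma' \in W$ with $\|\gamma' - \gamma\| < \varepsilon$ small enough that $\gamma' \in U$; by construction $\gamma'$ has the same sign as $\gamma$. It then remains to verify $\gamma' \in \Gamma$: nonnegativity of $\vx_0(\gamma')$, of each $\vu_i(\gamma')$, and of each $\vx_i(\gamma')$ is immediate from the same-sign property, and applicability follows from it as well, since if $\vu_{i+1}(\gamma')_\alpha > 0$ then also $\vu_{i+1}(\gamma)_\alpha > 0$, forcing every reactant of $\alpha$ to be positive at $\vx_i(\gamma)$ and hence, by the same sign, at $\vx_i(\gamma')$.

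For the final clause, when $\vx_0(\gamma) \in \Q^\Lambda$ I would augment the defining equations of $W$ with the affine equations $\vx_0(\cdot)_S = \vx_0(\gamma)_S$. The resulting affine subspace is still cut out by a rational linear system and still contains a rational point (for example the constant prepath $(\vx_0(\gamma), 0, 0, \ldots)$), so rational points remain dense in it, and the preceding argument carries through while fixing $\vx_0(\gamma') = \vx_0(\gamma)$. I do not expect any substantial obstacle here; the only point requiring genuine care is that the openness of $U$ is in the subspace topology of $W$, not of $V$, which is exactly why the rational approximation must take place inside $W$ rather than freely in $V$.
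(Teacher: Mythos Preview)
Your approach is essentially the same as the paper's: set up the rational linear system forcing the zero coordinates to vanish, use density of rational solutions near the real solution $\gamma$, and take the perturbation small enough to preserve the strict inequalities. The paper packages the density step as a separate lemma (Lemma~\ref{lem-rational-sols}) rather than invoking a rational basis for $W$, but the content is identical.

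One slip in the final clause: the constant prepath $(\vx_0(\gamma),0,0,\ldots)$ is \emph{not} in general a point of your augmented affine subspace. For that prepath $\vx_i = \vx_0$ for every $i$, so it satisfies the constraint $\vx_i(\cdot)_S = 0$ only when $\vx_0(\gamma)_S = 0$; but a species can be present initially and drained to zero later along $\gamma$, giving $(i,S)\in Z_S$ with $\vx_0(\gamma)_S>0$. The fix is immediate: drop the example and simply observe that the affine subspace is defined by rational equations with rational right-hand sides and contains the real point $\gamma$, hence has dense rational points (this is exactly the inhomogeneous case of the paper's Lemma~\ref{lem-rational-sols}).
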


\begin{proof}
First, let $N$ be the largest natural number such that $\vu_N(\vgamma) \neq 0$.
(Such an $N$ exists since $\vgamma$ is finite.)
For any reaction $\alpha \in R$ and any $n \in \N_{>0}$ such that $\vec{u}_n(\vgamma)_\alpha = 0$, set $\vec{u}_n(\vgamma)_\alpha = 0$. Now for each species $S \in \Lambda$ such that $\vx_k(\vgamma)_S = 0$ and $k \le N$, consider the following linear equation:
\[\vx_0(\vgamma')_S + \sum_{\substack{1 \le n \le k \\ \alpha \in R \\ \vu_n(\vgamma)_\alpha \neq 0}} M_{S\alpha} \vu_n(\vgamma')_\alpha = 0.\]
Aggregating these equations for all $0 \le k \le N$ and all $S \in \Lambda$ such that $\vx_k(\vgamma)_S = 0$ gives a system of equations, linear in $\vx_0(\vgamma')_S$ and $\vu_n(\vgamma')_\alpha$, with rational coefficients. This equation has a real-valued solution, namely $\vx_0(\vgamma') = \vx_0(\vgamma)$ and $\vu_n(\vgamma') = \vu(\vgamma)$, so by Lemma~\ref{lem-rational-sols} (proven in \Cref{app:C}), it must have a solution with rational coefficients that is $\delta$-close for any $\delta > 0$. By taking $\delta$ small enough, we can of course make $\delta < \varepsilon$, but we can also guarantee that $\vx_0(\vgamma')_S$ is positive whenever $\vx_0(\vgamma)_S$ is positive and similarly for $\vu_n(\vgamma')_\alpha$. We have therefore specified a $\vgamma' \in \prepathsset$ with the same sign as $\vgamma$. Since $\vgamma'$ has the same sign as $\vgamma$, and since $\vgamma$ is a valid path, we conclude that $\vgamma'$ is also a valid path, so $\vgamma' \in \Gamma_\infty$. 

If $\vx_0(\vgamma)$ is already in $\Q^\Lambda$, then the same argument applies, with the modification that you fix $\vx_0(\vgamma') = \vx_0(\vgamma)$, and instead solve the inhomogeneous system of equations 
\[\sum_{\substack{1 \le n \le k \\ \alpha \in R \\ \vu_n(\vgamma)_\alpha \neq 0}} M_{S\alpha} \vu_n(\vgamma')_\alpha = -\vx_0(\vgamma)_S\]
when $\vx_k(\vgamma)_S = 0$.
\end{proof}

\begin{lem}\label{lem:open-ball-implies-rational}
Let $\Omega$ be 
a stabilizing siphon. If $\Sigma(\Omega)$ contains an open ball, then $f$ is rational linear when restricted to inputs in $\Sigma(\Omega)$.
\end{lem}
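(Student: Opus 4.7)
The plan is to combine the linearity of $f|_{\Sigma(\Omega)}$ already given by Lemma~\ref{lem-siphon-implies-linear-nondual} with a separate rationality argument: I will show that $f$ sends rational inputs in $\Sigma(\Omega)$ to rational outputs. Since a linear functional $L : \R^k \to \R$ that takes rational values on $k$ linearly independent rational points necessarily has rational coefficients, and since an open ball in $\R^k$ contains $k$ linearly independent rational vectors (by density of $\Q^k$ in $\R^k$), this will complete the proof.

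The key step is showing that $f(\vx) \in \Q$ for every $\vx \in \Sigma(\Omega) \cap \Q^\Sigma$. By definition of $\Sigma(\Omega)$, there is a path $\gamma \in \Gamma$ starting at $\vx$ whose endpoint $\vo = \vec{o}(\gamma)$ satisfies $\vo \upharpoonright \Omega = \vec{0}$. Since $\Omega$ is an output stable siphon, $\vo$ is output stable, and since $\calC$ stably computes $f$, we have $\vo(Y) = f(\vx)$. I then invoke Lemma~\ref{lem-rational-path}, using the fact that the initial state $\vx$ is already rational, to obtain a rational path $\gamma' \in \Gamma$ with the same initial state $\vx$ and the same sign as $\gamma$. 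Because the sign pattern is preserved coordinate-wise at every step, the endpoint $\vo' = \vec{o}(\gamma')$ still satisfies $\vo' \upharpoonright \Omega = \vec{0}$, so $\vo'$ is output stable and, by stable computation, $\vo'(Y) = f(\vx)$. But $\gamma'$ is rational and the stoichiometry matrix $\vM$ has integer entries, so $\vo' \in \Q^\Lambda$; in particular $f(\vx) = \vo'(Y) \in \Q$.

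To finish, let $B \subseteq \Sigma(\Omega)$ be the hypothesized open ball, viewed inside $\R^\Sigma \cong \R^k$. Pick $k$ linearly independent rational vectors $\vec{v}_1, \ldots, \vec{v}_k \in B \cap \Q^k$ (by perturbing any rational point of $B$ slightly to achieve linear independence). Lemma~\ref{lem-siphon-implies-linear-nondual} gives real coefficients $a_1, \ldots, a_k$ with $f(\vec{x}) = \sum_{i=1}^k a_i \vec{x}(i)$ on $\Sigma(\Omega)$. The equations $\sum_i a_i \vec{v}_j(i) = f(\vec{v}_j)$ for $j = 1, \ldots, k$ form a linear system in $(a_1, \ldots, a_k)$ whose coefficient matrix is a rational invertible $k \times k$ matrix and whose right-hand side is rational by the previous paragraph; hence the unique solution $(a_1, \ldots, a_k)$ lies in $\Q^k$. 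Therefore $f$ restricted to $\Sigma(\Omega)$ is rational linear.

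The main obstacle is the middle step, where the rational perturbation of $\gamma$ must still terminate at a state draining $\Omega$, so that its endpoint is forced by stable computation to carry the value $f(\vx)$ on the output species. The ``same sign'' clause of Lemma~\ref{lem-rational-path} is doing exactly this work: it preserves which coordinates are zero at every intermediate and terminal state, so the drained species at $\vo$ remain drained at $\vo'$. Once this is secured, rationality of $\vo'(Y)$ follows purely from integrality of $\vM$, and the final assembly is elementary linear algebra.
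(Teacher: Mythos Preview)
Your proposal is correct and follows essentially the same approach as the paper's own proof: both invoke Lemma~\ref{lem-rational-path} (with rational initial state) to replace the path to an $\Omega$-draining state by a rational path of the same sign, use the sign-preservation to guarantee the perturbed endpoint still drains $\Omega$ and is therefore output stable with $Y$-value $f(\vx)$, and then combine the resulting ``rational inputs go to rational outputs'' fact with the linearity from Lemma~\ref{lem-siphon-implies-linear-nondual} on an open ball to extract rational coefficients. The only cosmetic difference is that the paper phrases the last step as ``$f|_B$ maps $\Q^\Lambda$ to $\Q$, hence is rational linear,'' whereas you spell out the invertible rational linear system explicitly.
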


\begin{proof}
Let $B$ be the open ball contained in $\Sigma(\Omega)$ and let $\vx$ be in $\Q^\Lambda \cap B$. We know that there is a piecewise linear path $\vgamma$ starting at $\vx$ such that $f(\vx) = \vo(\vgamma)_Y$. 
By Theorem~\ref{thm-reachable-segment-bound} we may assume without loss of generality that $\vgamma$ is finite.
By Lemma~\ref{lem-rational-path}, there is a rational path $\vgamma'$ with the same sign as $\vgamma$ such that $\vx_0(\vgamma') = \vx_0(\vgamma) = \vx$. Because $\vo(\vgamma)$ is an output-stable state, some siphon $\Omega$ is drained at $\vo(\vgamma)$. Since $\vgamma'$ has the same sign as $\vgamma$, we know that $\Omega$ is also drained at $\vo(\vgamma')$, so $\vo(\vgamma')$ is also output stable. We must then have that $f(\vx) = \vo(\vgamma')_Y$, but by the construction of $\vgamma'$ we know that $\vo(\vgamma')_Y \in \Q$. Since $B \subseteq \Sigma(\Omega)$, we know that $f|_B$ is linear by Lemma~\ref{lem-siphon-implies-linear-nondual}. Since $B$ is an open ball we know that $\Q^\Lambda \cap B$ contains a basis for $\Q^\Lambda$, so $f|_B$ is a linear function that maps $\Q^\Lambda$ to $\Q$. Since every linear function $\R^\Lambda \to \R$ that sends $\Q^\Lambda$ to $\Q$ is rational linear, we are done.
\end{proof}

Recall that a \emph{closed domain} is the closure of an open set.

\begin{lem}
\label{lem:replace-domains-of-defn-with-closed-domains}
Let $f: X \to Y$ be a continuous function defined piecewise on closed sets, so $X$ is covered by finitely many closed sets $D_1 \ldots D_k$ and there are continuous functions $g_i: X\to Y$ such that $f|_{D_i} = g_i|_{D_i}$. Then there are (possibly empty) closed domains $E_1 \ldots E_k$ that cover $X$ such that $f|_{E_i} = g_i|_{E_i}$. 
\end{lem}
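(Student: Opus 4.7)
The plan is to set $E_i = \overline{\mathrm{int}_X(D_i)}$, the closure in $X$ of the interior of $D_i$ in $X$. By construction each $E_i$ is a closed domain (possibly empty, which is allowed by the statement).

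The identity $f|_{E_i} = g_i|_{E_i}$ follows immediately: since $D_i$ is closed in $X$, we have $E_i = \overline{\mathrm{int}_X(D_i)} \subseteq \overline{D_i} = D_i$, and $f$ and $g_i$ agree on all of $D_i$ by hypothesis. (Continuity is not even needed for this direction, because of the containment $E_i \subseteq D_i$.)

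The main task is proving $\bigcup_i E_i = X$. I argue by contradiction. If some $x \in X$ lay in no $E_i$, then for each $i$ there would be an open neighborhood $U_i \subseteq X$ of $x$ disjoint from $\mathrm{int}_X(D_i)$. Let $U = \bigcap_i U_i$, also an open neighborhood of $x$ (here we use that the intersection is finite). For each $i$, the set $D_i \cap U$ is closed in $U$ and has empty interior in $U$: any open $V \subseteq U$ with $V \subseteq D_i$ would be open in $X$ as well, hence contained in $\mathrm{int}_X(D_i) \cap U = \emptyset$. Since the $D_i$ cover $X$, we get $U = \bigcup_i (D_i \cap U)$, expressing the nonempty open set $U$ as a finite union of closed nowhere-dense subsets of itself, contradicting the Baire category theorem applied to $U$.

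The main obstacle is this last step, which requires $U$ to be a Baire space. In every application of the lemma in this paper, $X$ is a locally closed subset of $\R^k$ (for instance, one of the sets $D_U$ from Definition~\ref{def:positive-continuous}, or a convex subset of $\R^k$), so any open subset $U \subseteq X$ is locally compact Hausdorff and therefore Baire, and the argument goes through.
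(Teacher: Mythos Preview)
Your proposal is correct and takes a different route from the paper. The paper does not use $\overline{\mathrm{int}_X(D_i)}$; instead it processes the $D_i$ one at a time, replacing each $D_i$ by $E_i=\overline{X\setminus\bigcup_{j\neq i}D_j}$, where the union is over the \emph{current} collection (indices already processed having been replaced). Since the current collection covers $X$, the open set $X\setminus\bigcup_{j\neq i}D_j$ lies inside $D_i$, hence so does its closure; and the replacement preserves the covering property by construction. No Baire-type argument is needed, so the paper's proof establishes the lemma for arbitrary $X$ directly.

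One remark on your version: your appeal to Baire is actually unnecessary, so the caveat about $X$ being locally closed in $\R^k$ can be dropped. The statement you use---that a nonempty space cannot be written as a \emph{finite} union of closed nowhere-dense subsets---holds in every topological space: equivalently, a finite intersection of dense open sets is always dense, which is a two-line induction (if $V_1,V_2$ are dense open and $W$ is nonempty open, then $W\cap V_1$ is nonempty open, so $W\cap V_1\cap V_2\neq\emptyset$). With this observation your choice $E_i=\overline{\mathrm{int}_X(D_i)}$ proves the lemma in full generality. Your $E_i$'s are canonical (independent of any ordering) and in fact contain the paper's (each of the paper's $E_i$ is the closure of an open subset of $D_i$, hence lies in $\overline{\mathrm{int}_X(D_i)}$), which incidentally gives a second proof that your sets cover. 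The paper's construction, on the other hand, makes the covering property immediate without any auxiliary density argument.
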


\begin{proof}
We show how to convert each $D_i$ that is not a closed domain to a corresponding $E_i$ that is.
Let $D_i$ be some set that isn't a closed domain. Let 
\[D_+ = \bigcup_{j\neq i} D_j.\]
Note that $D_1, \ldots D_{i - 1}, X \setminus D_+, D_{i + 1}, \ldots D_k$ cover $X$.
Let $E_i$ be the closure of $X \setminus D_+$.
Clearly $E_i$ is a closed domain. Since our original sets $D_1 \ldots D_k$ cover $X$ we know that $X \setminus D_+ \subseteq D_i$, and since $D_i$ is closed this implies that $E_i \subseteq D_i$. Because of this we also have that $f|_{E_i} = g_i|_{E_i}$. Finally, because $X \setminus D_+ \subseteq E_i$, we know that the sets 
$D_1, \ldots D_{i - 1}, E_i, D_{i + 1}, \ldots D_k$ cover $X$. 
\end{proof}

The following is the main result of \Cref{subsec-negative-computable-implies-piecewise-linear},
showing a limitation on the computational power of CRCs stably computing functions in the direct sense.

\begin{lem}
\label{lem:positive-continuous-piecewise-rational-linear}
  Let $f:\Rp^k \to \Rp$ be stably computed by a CRC $\calC=(\Lambda,R,\Sigma,\{Y\})$. Then $f$ is positive-continuous and piecewise rational linear. 
\end{lem}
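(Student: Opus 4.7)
My plan is to build on Lemma~\ref{lem-nonnegative-computable-implies-positive-continuous}, which already establishes that $f$ is positive-continuous and piecewise linear. The only remaining task is to upgrade each linear piece to a \emph{rational} linear piece, which I would do by combining Lemma~\ref{lem:open-ball-implies-rational} with a Baire category argument carried out on each face $D_U$.

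Since $f$ is positive-continuous, $f|_{D_U}$ is continuous for each $U \subseteq \{1,\ldots,k\}$, and it suffices to produce a piecewise rational linear decomposition of each $f|_{D_U}$; the union over the finitely many $U$'s then gives a piecewise rational linear decomposition of $f$ on $\Rp^k$ (each local piece extending to a rational linear function on $\R^k$ by padding omitted coordinates with zero coefficients). The case $U = \emptyset$ is trivial since $f(\vec{0}) = 0$. For $U \neq \emptyset$, view $D_U$ as an open subset of the coordinate subspace $\R^U$. By Lemma~\ref{lem:siphon-stable} there are finitely many output-stable siphons $\Omega_1,\ldots,\Omega_m$, each carrying its own linear function $f_{\Omega_i}$ on $\Sigma(\Omega_i)$ via Lemma~\ref{lem-siphon-implies-linear-nondual}; from the proof of Lemma~\ref{lem-nonnegative-computable-implies-positive-continuous}, the shifted sets $\Tilde{\Sigma}_{(\vy\segto\vz)}(\Omega_i) \cap D_U$ give a finite closed cover of $D_U$ on which $f$ agrees with $f_{\Omega_i}$.

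Applying Baire category to the Baire space $D_U$, the ``fat'' siphons --- those whose shifted set has non-empty interior in $D_U$ --- cover a dense open subset of $D_U$, since the complement is a finite union of closed nowhere-dense sets. For each fat $\Omega_i$, the interior contains a full-dimensional (in $\R^U$) open ball $B$ contained in $\Sigma(\Omega_i)$. I then adapt the proof of Lemma~\ref{lem:open-ball-implies-rational} to this setting: for rational $\vx \in \Q^U \cap B$, a path $\gamma$ from $\vx$ drains $\Omega_i$, and by Lemma~\ref{lem-rational-path} (keeping $\vx_0(\gamma') = \vx$) we obtain a rational sign-preserving path $\gamma'$. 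Sign preservation ensures that $\Omega_i$ is still drained at $\vec{o}(\gamma')$, so $\vec{o}(\gamma')$ is output stable and hence $f(\vx) = \vec{o}(\gamma')(Y) \in \Q$. Since $f|_B$ is linear in $\R^U$ and rational on a rationally-spanning subset, it is rational linear on $\R^U$. Thus $f_{\Omega_i}$ is rational linear, and continuity on $D_U$ extends the formula to the closure of each fat piece; since these closures cover $D_U$, we obtain a finite piecewise rational linear decomposition of $f|_{D_U}$.

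The main obstacle is adapting Lemma~\ref{lem:open-ball-implies-rational}, whose statement presupposes a full-dimensional open ball in $\R^\Sigma$, to the lower-dimensional face $D_U$ where no such ball can exist. The crux is that the proof really only needs the ball to span the linear subspace on which the piece is defined: Lemma~\ref{lem-rational-path} is indifferent to ambient dimension (it only requires matching the sign pattern of $\gamma$ at the fixed initial state), and the rational linearity argument only needs enough rational points to span the subspace $\R^U$, which a $|U|$-dimensional open ball does provide.
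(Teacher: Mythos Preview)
Your proposal is correct and follows essentially the same route as the paper. The only differences are cosmetic: the paper invokes Lemma~\ref{lem:replace-domains-of-defn-with-closed-domains} to replace the closed cover by closed \emph{domains} (closures of open sets), whereas you reach the same conclusion via a Baire category argument; and the paper handles each face $D_U$ by passing to the reduced CRN (discarding species not producible from inputs in $U$) so that Lemma~\ref{lem:open-ball-implies-rational} applies verbatim, whereas you phrase the same move as ``adapting'' that lemma to balls in $\R^U$. Once you note that working in $\R^U$ \emph{is} working in the reduced CRN, your ``obstacle'' dissolves and the two arguments coincide.
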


\begin{proof}
By Lemma~\ref{lem-nonnegative-computable-implies-positive-continuous}, we know that $f$ is positive-continuous and piecewise linear. By a general topological argument one could show that any function with these properties has domains of definition that are closed relative to $\R_{>0}^\Sigma$, but since by Lemma~\ref{lem-shifted-closed} the domains we constructed earlier already have this property, we won't give the general proof here. By Lemma~\ref{lem:replace-domains-of-defn-with-closed-domains} we can replace the closed sets that give the domains of definition of $f$ by closed domains. If some of the domains produced by Lemma~\ref{lem:replace-domains-of-defn-with-closed-domains} are empty we can simply ignore them in what follows. Since all of the nonempty domains are the closures of nonempty open sets, they must each contain some open ball. By Lemma~\ref{lem:open-ball-implies-rational}, $f$ is a rational linear function when restricted to each of these closed domains, so $f$ is piecewise rational linear on $\R_{>0}^\Sigma$. 

For any proper subset $U$ of the input species, one can apply the above argument to the reduced CRN that  discards all species not producible from the given inputs to show that $f$ is continuous and piecewise rational linear on $D_U$. This shows that $f$ is a positive-continuous piecewise rational linear function on all of $\Rp^\Sigma$.
\end{proof}

\subsection{Negative Result: Dual-Rail Computable Functions are Continuous Piecewise Rational Linear}
\label{subsec:negative-dual-rail}

The following result,
a dual-rail analog of Lemma~\ref{lem:open-ball-implies-rational},
is not necessary for the proof of the main result of this section (Lemma~\ref{lem-computable-implies-piecewise-linear}), but it may be of independent interest. 

\begin{proposition}\label{prop-siphon-implies-linear}
  Let $f:\R^k \to \R$ be stably dual computed by a CRC.
  Let $\Omega$ be 
  a stabilizing siphon. 
  Then $f$ restricted to inputs that have a dual rail representation in $\Sigma(\Omega)$ is linear.
\end{proposition}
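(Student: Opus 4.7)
The plan is to adapt the proof of Lemma~\ref{lem-siphon-implies-linear-nondual} to the dual-rail setting by introducing a projection that extracts the semantically meaningful net input and output values, thereby reducing the claim to a graph-convexity argument in $\R^{k+1}$. Write $\Sigma^+ = \{X_1^+,\ldots,X_k^+\}$ and $\Sigma^- = \{X_1^-,\ldots,X_k^-\}$, so that $\Sigma = \Sigma^+ \cup \Sigma^-$.

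Let $\Gamma_{\mathrm{inp}}(\Omega) \subseteq \Gamma(\Omega)$ denote the subset of paths whose initial state is supported on the input species $\Sigma$. Since $[\vx_0(\gamma)] \subseteq \Sigma$ is a linear condition, $\Gamma_{\mathrm{inp}}(\Omega)$ remains convex. Define the linear map
\[
\pi(\gamma) = \bigl(\, \vx_0(\gamma){\upharpoonright}\Sigma^+ - \vx_0(\gamma){\upharpoonright}\Sigma^-,\ \vo(\gamma)(Y^+) - \vo(\gamma)(Y^-)\,\bigr) \in \R^k \times \R,
\]
and let $G = \pi(\Gamma_{\mathrm{inp}}(\Omega))$; then $G$ is convex as the image of a convex set under a linear map. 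Setting
\[
T(\Omega) = \bigl\{\vx \in \R^k : (\exists (\vx^+,\vx^-) \in \Sigma(\Omega))\ \vx^+ - \vx^- = \vx\bigr\},
\]
I claim $G$ is exactly the graph of $f|_{T(\Omega)}$: because $\Omega$ is an output-stable siphon, the endpoint $\vo(\gamma)$ of every $\gamma \in \Gamma_{\mathrm{inp}}(\Omega)$ is itself output-stable, and then stable dual-computation forces
$\vo(\gamma)(Y^+) - \vo(\gamma)(Y^-) = f\bigl(\vx_0(\gamma){\upharpoonright}\Sigma^+ - \vx_0(\gamma){\upharpoonright}\Sigma^-\bigr)$. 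In particular $G$ is single-valued.

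The remainder mirrors Lemma~\ref{lem-siphon-implies-linear-nondual} almost verbatim. If $G$ were not contained in any $k$-dimensional hyperplane of $\R^{k+1}$, its convex hull would acquire a $(k+1)$-dimensional subset, producing two distinct outputs for a common input and contradicting single-valuedness. Hence $G$ lies in some $k$-dimensional hyperplane $H$. The constant-zero prepath lies in $\Gamma_{\mathrm{inp}}(\Omega)$ (trivially satisfying all path axioms and draining $\Omega$ vacuously) and $\pi$-maps to the origin of $\R^{k+1}$, so $H$ passes through the origin and, as in the direct case, must be the graph of a linear function $\ell : \R^k \to \R$ with $\ell|_{T(\Omega)} = f|_{T(\Omega)}$.

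The main subtlety compared with the direct case is that two distinct dual-rail encodings of the same $\vx \in T(\Omega)$ may evolve to distinct output-stable pairs $(y_1^+, y_1^-) \neq (y_2^+, y_2^-)$; the projection $\pi$ collapses each such pair to the single point $(\vx, f(\vx))$, which is precisely what keeps $G$ single-valued and allows the convex-hull argument to proceed. I expect this collapsing step to be the main conceptual point to get right.
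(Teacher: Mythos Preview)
Your proof is correct but takes a different route from the paper's. The paper's argument is a two-line reduction: it views the dual-rail CRC as \emph{directly} computing two separate functions $\hat f^+,\hat f^-:\Rp^{2k}\to\Rp$ (with outputs $Y^+$ and $Y^-$ respectively), applies Lemma~\ref{lem-siphon-implies-linear-nondual} verbatim to each to get that both are linear on $\Sigma(\Omega)$, and finishes by noting that linearity is closed under subtraction. You instead re-run the convex-graph argument from scratch in the dual-rail setting, introducing the projection $\pi$ to collapse the $2k$-dimensional encoded input and the two output coordinates down to $\R^{k+1}$. The paper's route is shorter and more modular (Lemma~\ref{lem-siphon-implies-linear-nondual} as a black box); yours is more self-contained and makes explicit the one genuinely new ingredient---that distinct dual-rail encodings of the same $\vx$, and distinct output-stable endpoints they reach, all collapse under $\pi$ to the single point $(\vx,f(\vx))$, which is exactly what preserves single-valuedness of $G$. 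In the paper's decomposition this fact is hidden inside the well-definedness of $\hat f^+$ and $\hat f^-$ individually (which in turn relies on $Y^+,Y^-$ being produce-only, so that stability of $Y^+-Y^-$ forces stability of each).
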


\begin{proof}
A dual-rail computing CRC can be thought to \emph{directly} compute two separate functions $\hat{f}^+, \hat{f}^-: \Rp^{2k} \to \Rp$ such that $\hat{f} = \hat{f}^+ - \hat{f}^-$ where $\hat{f}$ is a dual rail representation of $f$.
By Lemma~\ref{lem-siphon-implies-linear-nondual} we know that $\hat{f}^+$ and  $\hat{f}^-$ are rational linear when restricted to $\Sigma(\Omega)$.
The proposition follows because linearity is closed under subtraction.
\end{proof}

The following is our main negative result for dual-rail CRC's, a dual-rail analog of \Cref{lem:positive-continuous-piecewise-rational-linear}.

\begin{lem}\label{lem-computable-implies-piecewise-linear}
  Let $f:\R^k \to \R$ be stably dual-computable by a CRC.
  Then $f$ is continuous and piecewise rational linear.
\end{lem}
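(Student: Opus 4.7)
The plan is to reduce to the already-proven direct computation case. Viewing the CRC as having output species $Y^+$ (respectively $Y^-$) in isolation, it directly stably computes two functions $\hat{f}^+, \hat{f}^-: \Rp^{2k} \to \Rp$ satisfying the dual-rail identity $f(\vx^+ - \vx^-) = \hat{f}^+(\vx^+, \vx^-) - \hat{f}^-(\vx^+, \vx^-)$ for all $\vx^+, \vx^- \in \Rp^k$. (Any fully output-stable state for the pair $(Y^+, Y^-)$ is certainly output-stable when regarded as a CRC with a single output species.) Lemma~\ref{lem:positive-continuous-piecewise-rational-linear} then tells us that both $\hat{f}^+$ and $\hat{f}^-$ are positive-continuous and piecewise rational linear on $\Rp^{2k}$. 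The strategy is then to derive continuity and piecewise rational linearity of $f$ by choosing appropriate splittings $\vx = \vx^+ - \vx^-$.

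For continuity of $f$ at a point $\vx \in \R^k$, I would pick a rational $M$ large enough that $\vx + M\vec{1}$ is strictly positive (where $\vec{1}$ denotes the all-ones vector in $\R^k$), and in a small neighborhood $N$ of $\vx$ use the continuous shifted splitting $\vy \mapsto (\vy + M\vec{1}, M\vec{1})$, which maps $N$ into the strictly positive orthant $\R^{2k}_{>0}$. Since $\R^{2k}_{>0}$ is exactly the region $D_U$ for $U = \{1,\ldots,2k\}$, positive-continuity yields continuity of $\hat{f}^\pm$ there, so $f(\vy) = \hat{f}^+(\vy + M\vec{1}, M\vec{1}) - \hat{f}^-(\vy + M\vec{1}, M\vec{1})$ is continuous on $N$.

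For piecewise rational linearity, I would switch to the \emph{max-based} splitting $\phi(\vx) = (\phi^+(\vx), \phi^-(\vx))$ defined by $\phi^+(\vx)(i) = \max(\vx(i), 0)$ and $\phi^-(\vx)(i) = \max(-\vx(i), 0)$. On each of the $2^k$ closed orthants $O_S = \{\vx \in \R^k : \vx(i) \geq 0 \iff i \in S\}$, $\phi$ agrees with a rational linear map into $\Rp^{2k}$ (each coordinate being one of $\vx(i)$, $-\vx(i)$, or $0$, with no constant term). Intersecting each $O_S$ with the preimages under $\phi$ of the closed polyhedral domains of the pieces of $\hat{f}^+$ and $\hat{f}^-$ gives a finite partition of $\R^k$ into closed polyhedra; on each such piece, $f$ equals $\hat{f}^+_j\circ\phi - \hat{f}^-_\ell\circ\phi$, a difference of compositions of rational linear maps, and hence is itself rational linear.

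The main subtlety I anticipate is ensuring the pieces are rational \emph{linear}, not merely rational \emph{affine}. The continuity-friendly shifted splitting would introduce nonzero additive constants and so would yield only piecewise affine pieces. The max-based splitting is tailored to avoid this, because $\phi|_{O_S}$ passes through the origin, so composing it with the rational linear (through the origin) pieces of $\hat{f}^\pm$ preserves both linearity and rationality. Once continuity and piecewise rational linearity are both in hand, the lemma follows immediately.
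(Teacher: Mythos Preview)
Your proposal is correct and follows essentially the same approach as the paper: reduce to the direct-computation case by viewing the CRC as stably computing $\hat{f}^+$ and $\hat{f}^-$ separately, invoke Lemma~\ref{lem:positive-continuous-piecewise-rational-linear} on each, and then transfer continuity and piecewise rational linearity back to $f$ via a suitable splitting. The paper's proof is terser---for continuity it simply notes that any input to $f$ has a strictly positive dual-rail representation, and for piecewise rational linearity it just says ``piecewise rational linear functions are closed under subtraction''---whereas you are more explicit, in particular correctly observing that the max-based splitting (rather than the shifted one) is needed so that the resulting pieces are rational \emph{linear} and not merely affine. One minor overstatement: you refer to ``closed polyhedral domains'' of the pieces of $\hat{f}^\pm$, which the paper's Definition~\ref{def:piecewise-rational-linear} does not assert; however, your argument does not actually need this, since pulling back arbitrary finite covers along the rational-linear map $\phi|_{O_S}$ already suffices.
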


\begin{proof}
  Let $\calC$ be the CRC stably computing a dual-rail representation $\hf$ of $f$, with input species
  $X_1^+$, $\ldots$, $X_k^+$, $X_1^-$, $\ldots$, $X_k^-$
  and output species $Y^+,Y^-$.

  Similarly to the proof of Proposition~\ref{prop-siphon-implies-linear}, a dual-rail computing CRC can be thought to \emph{directly} compute two separate functions $\hat{f}^+, \hat{f}^-: \Rp^{2k} \to \Rp$ such that $\hat{f} = \hat{f}^+ - \hat{f}^-$ where $\hat{f}$ is a dual rail representation of $f$. Since $\hat{f}^+$ and $\hat{f}^-$ are stably computed by a CRC, Lemma~\ref{lem:positive-continuous-piecewise-rational-linear} implies that they are both piecewise rational linear, and since piecewise rational linear functions are closed under subtraction, this implies that $f$ is also piecewise rational linear. It remains to show that $f$ is continuous.

  For \emph{any} input $\vx' \in \R^k$ to $f$, there is an initial state $\vx \in \R^\Sigma_{> 0}$ representing $\vx'$,
  with strictly positive concentrations of all input species.
  (For example, if $\vx'(1) = 5$, we can choose $\vx(X^+_1) = 6$ and $\vx(X^-_1) = 1$.)
  Let $\vx'_1,\vx'_2,\ldots \in \R^k$ be any sequence of inputs  to $f$  such that $\lim_{i\to\infty} \vx'_i = \vx'$.
  Let us represent each input $\vx'_i$ in the sequence by an initial state 
  $\vx_i \in \R^\Sigma_{> 0}$ such that $\lim_{i\to\infty} \vx_i = \vx$.
  Then the sequence of initial states
  has the property that all 
  of the $\vx_i$'s obey $[\vx_i]=\Sigma$.
  By Lemma~\ref{lem-nonnegative-computable-implies-positive-continuous}, $f$ is continuous on the domain in which all input species are positive, which includes the input represented by $\vx$ and the inputs represented by all 
  of the $\vx_i$'s.
  Therefore, $f(\vx') = \lim_{i\to\infty} f(\vx'_i)$, so $f$ is continuous. 
\end{proof}

\section{Extensions}

\subsection{A Game-Theoretic Formulation of Rate-Independent Computation}
\label{sec:game}
In this section, we use our notion of a valid rate schedule (\Cref{defn:valid-rate-schedule}) to propose a framework for studying rate-independent computation. Intuitively, we consider a model where the CRC experiences intermittent ``shocks" where the system behaves erratically and the user of the CRC loses some amount of control of the rates of its reactions. We then say that the CRC rate-independently computes a function if, despite these shocks, the CRC still converges to the correct output.  

In principle, one could consider different versions of the above model, depending on how much control the user of the CRC is expected to have over the kinetics of the system and how severe the shocks to the system are expected to be. Our goal in this section is to introduce a framework that is general enough to accommodate these various situations. 

We will formalize the situation presented above by describing it as an infinite game, played by two players, which we will call the Demon and the Chemist. Here the Chemist represents the user of the CRC, who is trying to perform some rate-independent computation, and the Demon represents (per tradition) the forces of nature, human failing, etc, which create the intermittent shocks to the system. In this game, the Demon and the Chemist take turns building a valid rate schedule. The Chemist wins the game if the amount of the output species converges to the correct value as time goes to infinity. If the Chemist has a winning strategy for the game associated with a given CRC, then we say that the CRC rate-independently computes the desired function.

In order to model the different levels of control that the Chemist has over the system, and the different levels of severity of the shocks that the Demon can induce in the system, we can consider games where the Demon and the Chemist are only allowed to play moves from some restricted class of valid rate schedules. In this paper, we have been interested in functions that are rate-independently computable in a very strong sense, i.e. where the shocks are arbitrarily severe---so the Demon is allowed to play any valid rate schedule (Strong Demon). 

Once we have developed the game-theoretic model of rate-independent computation more explicitly we will be able to use our main result \Cref{thm:computable-characterization} to deduce the following claim: the class of functions that one can compute rate-independently with a Strong Demon is effectively insensitive to the amount of control that the Chemist has over the system. In particular, the class of functions that are rate-independently computable by Strong Demon, Strong Chemist games, where both the Demon and the Chemist can play any valid rate law, is \emph{the same} as any Strong Demon complexity class, where the Chemist is restricted to playing only a subset of valid rate laws. 

To formalize the above discussion, we can use the notion of an infinite game~\cite{mycielski1992games}. 

\begin{defn}
An \emph{infinite game} is a pair of a set $X$ of possible \emph{moves} and a set $A \subseteq X^\N$ called the \emph{payoff set}.
\end{defn}

Intuitively one should imagine that two players, player I and player II, are taking turns playing moves from the set $X$. Together they form a sequence $(x_1, x_2, \ldots) \in X^\N$ and player I wins if this sequence is in $A$. In most games of interest, the set of moves that a player can make is restricted by the state of the game. The definition above captures this by using a payoff set such that a player will lose instantly if they play a move outside of some ``legal" set of moves.
For any previously played sequence we can thus let \emph{moveset} $M(x_1, \ldots, x_n)$ be the legal set of moves for the two players (depending on whether $n$ is odd or even) and assume that the payoff set $A$ is defined relative to these sets of allowed moves accordingly (i.e., the first time that an illegal move is made, the other party wins). 
An infinite sequence of moves is consistent with the moveset if every move by both players is legal.

\begin{defn}
For an infinite game $(X, A)$ a \emph{winning strategy for player II} is a collection of functions $f_{2k}: X^k \to X$ so that the sequence
\[\seq{x_1, f_2(x_1), x_3, f_4(x_1, x_3), x_5, f_6(x_1, x_3, x_5), \ldots}\]
is never in $A$ for every choice of $x_1, x_3, \ldots$. A \emph{winning strategy for player I} is defined similarly. 
\end{defn}

For our context, 
fix a CRC $\calC$ and an initial state $\vx \in \Rp^\Sigma$.
The set of possible moves $X$ is the set of all rate schedules that are zero at times $t > 1$.
Such moves (i.e., rate schedules) can be naturally concatenated into a longer rate schedule: the concatenation $\vf_1 \circ \ldots \circ \vf_n$ is the rate schedule which follows $\vf_i$ for time $i-1 \leq t < i$ and zero elsewhere.
Our game captures the idea that the Demon wins by making the constructed rate schedule not converge to the desired output, where the Demon is player I and the Chemist is player II.
We allow the Demon to play any valid rate schedule, but the Chemist may have varying power as we discuss below.

The most unrestricted moveset contains all possible valid rate schedules. 
We call this case the Strong Demon Strong Chemist game.
\begin{defn}   \label{def:strong-demon-strong-chemist}
For a CRC $\calC$ and function $f: \Rp^k \to \Rp$, an initial state $\vx \in \Rp^\Sigma$, the \emph{Strong Demon Strong Chemist game} is the game where 
the moveset $M(\vf_1, \ldots, \vf_n)$ includes all the moves  $\vf_{n+1}$ such that $\vf_1 \circ \ldots \circ \vf_{n+1}$ is a valid rate schedule starting at $\vx$,
and a sequence of moves $\seq{\vf_1, \vf_2, \ldots}$ consistent with the moveset is in the payoff set iff  $\lim_{t \to \infty} \vrho(t) \upharpoonright \set{Y} \neq f(\vx)$ for trajectory $\vrho$ corresponding to $\vf_1 \circ \vf_2 \circ \ldots$ starting at $\vx$.
\end{defn}

\begin{lem}\label{lem:SDSC-equals-stable}
The following are equivalent for a CRC $\calC$ and function $f: \Rp^k \to \Rp$:
\begin{enumerate}
    \item The Chemist has a winning strategy for the Strong Demon Strong Chemist game associated with $\calC$ and $f$ for every initial state $\vx \in \Rp^\Sigma$.
    \item $\calC$ stably computes $f$.
\end{enumerate}
\end{lem}

\begin{proof}
First suppose that $\calC$ stably computes. Then by \Cref{thm:valid-reachable-implies-segment-reachable} we know that for any first move $\vf_1$ the Demon plays, the state $\vrho(1)$ will be reachable from the initial state $\vx$. By the definition of stable computation there must be a state $\vz$ reachable from $\vx$ so that $\vz$ is output stable with the correct amount of output. By \Cref{lem:segment-reachable-implies-valid-rate-schedule} there is a valid rate schedule $\vf_2$ that goes from $\vx$ to $\vz$, and by ``rescaling" the rate schedule by increasing the rate of every reaction we can guarantee that $\vf_2$ is completed within a single unit of time. Since $\vz$ is output stable, every further move $\vf_4, \vf_6, \ldots$ that the Chemist makes can just be taken to be the zero rate schedule (where no reaction occurs). This specifies a winning strategy for the Chemist for the Strong Demon Strong Chemist game associated with $\calC$, $f$, and $\vx$.  

On the other hand, suppose that $\calC$ doesn't stably compute. Then let $\epsilon$, $\vx$, and $\vz$ be the ones given by \Cref{lem:not-stably-compute-implies-can-reach-far-from-correct}. The Demon can play a rate schedule which goes from $\vx$ to $\vz$. After this, every time it's the Demon's turn the CRC will be in a state $\vo$ which is reachable from $\vz$, so the Demon will always be able to play a valid rate schedule that takes the CRN to $\vo'$ as in \Cref{lem:not-stably-compute-implies-can-reach-far-from-correct} with $|\vo'(Y) - f(\vx)| > \epsilon$. This shows that the Demon has a winning strategy for the Strong Demon Strong Chemist game associated with $\calC$, $f$ and $\vx$, so certainly the Chemist doesn't have a winning strategy. 
\end{proof}

We can also consider games where the players are restricted to only playing a subset of the valid rate schedules.
We are particularly interested in games where the Demon is allowed to play any valid rate schedule as above, but the Chemist is restricted to only playing rate schedules from a particular restricted moveset $M$. 
It turns out that as long as the chemist has \emph{some fair} rate schedule to play, the computational power is the same as with a Strong Chemist.

We first define the validity and fairness restrictions on movesets. 
Then given a restricted moveset for the Chemist we define the corresponding game similar to \cref{def:strong-demon-strong-chemist}.

\begin{defn}  \label{def:valid-fair-movesets}
A moveset $M(\vf_1, \ldots, \vf_n)$ is valid if it includes only moves  $\vf_{n+1}$ such that $\vf_1 \circ \ldots \circ \vf_{n+1}$ is a valid rate schedule starting at $\vx$.
Further, we say that a valid moveset $M(\vf_1, \ldots, \vf_n)$ is Chemist-fair if there is some $\vH: \Rp^\Lambda \to \Rp^R$ (with  the same properties as in \Cref{defn:fair-rate-schedule}) such that for any odd $n$,
the moveset contains at least one $\vf_{n+1}$ such that the trajectory $\vrho$ corresponding to
$\vf = \vf_1 \circ \ldots \circ \vf_{n+1}$ starting at $\vx$
satisfies 
$\vf_\alpha(t) \ge \vH_\alpha(\vrho(t))$ for every reaction $\alpha$ and for all $t \in [n,n+1)$.
\end{defn}

\begin{defn}
Consider a CRC $\calC$, function $f: \Rp^k \to \Rp$, and an initial state $\vx \in \Rp^\Sigma$.
Let $M(\vf_1, \ldots, \vf_n)$ be any valid moveset.
We say the  \emph{Strong Demon $M$-Chemist} is the game where 
a sequence of moves $\seq{\vf_1, \vf_2, \ldots}$ consistent with the moveset is in the payoff set iff  $\lim_{t \to \infty} \vrho(t) \upharpoonright \set{Y} \neq f(\vx)$ for trajectory $\vrho$ corresponding to $\vf_1 \circ \vf_2 \circ \ldots$ starting at $\vx$.
\end{defn}

It may seem that depending on what moves the Chemist is allowed to play, different classes of functions are computable (i.e., the Chemist has a winning strategy). 
However, the following theorem shows that the class of functions is invariant to the class of movesets that the Chemist has available, as long as there is always at least one fair move.

\begin{definition}
A \emph{Strong Demon complexity class} $\mathsf{SD}$ is a set of functions $f: \Rp^k \to \Rp$ for all $k \in \N$, that is characterized in the following way: there is a (non-empty) set of Chemist-fair movesets $\mathcal{M}$ so that $f: \Rp^k \to \Rp$ is in $\mathsf{SD}$ if and only if there is a CRC $\calC$ such that for each $M \in \mathcal{M}$ and initial state $\vx$ the Chemist has a winning strategy for the associated Strong Demon $M$-Chemist game. 
\end{definition}

\begin{thm}
Every Strong Demon complexity class is the same as the Strong Demon Strong Chemist class. 
\end{thm}

\begin{proof}
If a Chemist has a winning strategy in a Strong Demon $M$-Chemist game, then clearly the Chemist has a winning strategy in the Strong Demon Strong Chemist game. 
Thus every Strong Demon complexity class is a subset of the Strong Demon Strong Chemist class.
For the other direction,
let $\mathsf{SD}$ be some Strong Demon complexity class, and let $\calM$ be the nonempty set of Chemist-fair movesets associated with $\mathsf{SD}$.
Suppose that $f: \Rp^k \to \Rp$ is in the Strong Demon Strong Chemist class. Then by \Cref{lem:SDSC-equals-stable} and \Cref{thm:computable-characterization} we know that $f$ is fairly computable by a CRC $\calC$. For any moveset $M \in \calM$, let us show that the Chemist has a winning strategy for the Strong Demon $M$-Chemist game associated with $\calC$, and $f$. Since $M$ is Chemist-fair we know that there is some function $\vH: \Rp^\Lambda \to \Rp^R$ as in \Cref{def:valid-fair-movesets}. 
On their turn, the Chemist simply plays any move in the moveset $M$ that satisfies the condition on $f_{n+1}$ in \cref{def:valid-fair-movesets}.
This strategy is winning for the Chemist because 
the total rate schedule $\vf$ obtained by concatenating the $\vf_i$'s is fair, since on the set $T = \bigcup_{n = 1}^\infty [2n-1, 2n)$ we know that $\vf_\alpha(t) \ge \vH_\alpha(\vrho(t))$. 
Since $\calC$ was chosen to fairly compute $f$ this proves that $\lim_{t \to \infty} (\vrho(t)\upharpoonright \set{Y}) = f(\vx)$. 
This shows that $f \in \mathsf{SD}$, so any Strong Demon $M$-Chemist complexity class is the same as the Strong Demon Strong Chemist complexity class. 
\end{proof}

\subsection{Non-zero initial context}
\label{sec:initial-context}
    Throughout this paper we have assumed 
    that the only species allowed to be present at the start of the computation are the input species. Instead, one could consider a model where certain non-input species $Z_1 \ldots Z_n$, called the \emph{initial context}~\cite{CheDotSolNaCo14}, 
    have a fixed, nonzero rational concentration at the start of the computation. In this setting, we can clearly compute more functions than in the setting without initial context: for instance, we can easily compute $f(x_1, \ldots, x_k) = C$ for some nonzero constant $C$, which is impossible without initial context because $f$ is affine  but not linear. 
    
    In fact, we can dual-rail compute any continuous piecewise rational affine function $f: \R^k \to \R$,
    i.e., any function that is a rational linear function plus a rational constant: $f(\vec{x}) = \vec{a} \cdot \vec{x} + b$. 
    To see this, first note that we can compute any rational affine function by using the initial context to offset the value at $f(\vec{0}) = b$.
    In fact, we can simply let output species $Y^+$ and $Y^-$ be the initial context, with $\vec{x}(Y^+) = b$ initially if $b>0$ and $\vec{x}(Y^-) = -b$ otherwise.
    Similar machinery to the proof of Lemma~\ref{lem-piecewise-linear-implies-computable} can be used to extend this to continuous \emph{piecewise} rational affine functions:
    By Theorem~\ref{thm-min-max-representation}, any continuous piecewise rational affine function can be represented in max-min form, and then our construction from Section~\ref{negative-piecewise-linear-implies-computable} shows that we can compute our given function $f$. In the direct computation setting, by a construction like the one in Section~\ref{sec:positive-continuous-piecewise-rational-linear-computable} we can compute any positive-continuous piecewise rational affine function $f: \R^k_{\ge 0} \to \R_{\ge 0}$. 
    
    It also turns out that, even with initial context, we can't compute any more functions than these. To see this, note that without loss of generality we can assume that there is only one initial context species $Z$ with initial concentration 1, since we can modify any CRC with initial context to include
    reactions
    that convert $Z$ into $Z_1,\ldots,Z_n$ with appropriate concentrations.\footnote{
    For example, to simulate initial context 
    $\{3/2\ Z_1, 1/2\ Z_2 \}$ from $\{1\ Z\}$, add the reactions
    $2Z \to Z'$
    and
    $Z' \to 3 Z_1 +  Z_2$.
    }
    Now let $g(x_1, \ldots, x_k, z)$ be the value that the CRC computes when the input species have initial values $x_1,\ldots, x_k$ and the initial context species has value $z$. 
    A priori, $g$ is only well-defined when $z = 1$, but because paths
    remain valid after scaling we know that 
    \[
    g(x_1, \ldots, x_k, z) = 
    z \cdot g(x_1/z, \ldots, x_k/z, 1) = 
    z \cdot f(x_1/z, \ldots, x_k/z)\]
    for any value of $z > 0$. This shows that $g$ is well-defined on 
    $D_U$ (recall Definition~\ref{def:positive-continuous}) for every $U$ that contains $Z$, so we can apply the results of \Cref{subsec-negative-computable-implies-piecewise-linear,subsec:negative-dual-rail} to characterize $g$ on these domains. Using the fact that $f(x_1, \ldots, x_k) = g(x_1, \ldots, x_k, 1)$ gives us the desired result. 
    
    Since every continuous function on a compact domain is uniformly continuous,
    it can be uniformly approximated by continuous piecewise rational affine functions. 
    This shows that we can use rate-independent CRNs to approximate continuous functions. Note that for the negative argument above to work, it was important that all of the initial concentrations of the initial context species were rational. For practical purposes, this assumption is not at all restrictive, but it might be of theoretical interest to know what other functions can be computed if the initial concentrations are allowed to be arbitrary real numbers.

\section{Conclusion and Open Questions}
\label{sec-conclusion}

We characterized the class of functions computable in a manner that is absolutely robust to reaction rates in the continuous model of chemical kinetics.
Such rate-independent computation must rely solely on reaction stoichiometry---which reactants, and how many of each, become which products, and how many of each?
We considered two methods of encoding inputs and outputs: direct and dual-rail.
The dual-rail encoding permits easier composition of modules and can represent negative values;
we characterized its computational power as continuous, piecewise rational linear. 
The direct encoding, however, allows computing functions that are discontinuous at the faces of the nonnegative orthant.
For both encodings, we 
showed matching 
negative results (showing that nothing more can be computed) and positive results (describing CRNs computing any function in the class).

Since rate-independent computation does not require  difficult-to-achieve tuning of parameters or reaction conditions, it may be significantly more ``engineerable'' than rate-dependent computation.
More generally, our work also helps uncover the multifaceted sources of chemical computational power by disentangling the control of stoichiometry from reaction rates.

We now describe some natural open questions.

\paragraph{Reaction complexity of stably computable functions}

    An interesting question regards the description complexity of functions stably computable by CRNs.
    Some piecewise linear functions have a number of pieces exponential in the number of inputs;
    for example, $f(x_1,\ldots,x_{2k}) = \min(x_1,x_2) + \min(x_3,x_4) + \ldots + \min(x_{2k-1}, x_{2k})$ has $2^k$ linear pieces.
    If we express this function in $\max \min g_i$ form of Theorem~\ref{thm-min-max-representation},
    we need $2^k$ different linear $g_i$,
    and thus the construction in the proof of Lemma~\ref{lem-piecewise-linear-implies-computable} would require exponentially many species and reactions.
    However, this particular $f$ has a more succinct CRN that stably computes it, namely the reactions 
    \begin{eqnarray*}
      X_1 + X_2 &\to& Y
      \\
      X_3 + X_4 &\to& Y
      \\
      \vdots
      \\
      X_{2k-1} + X_{2k} &\to& Y.
    \end{eqnarray*}
    Given a positive-continuous, piecewise rational linear function $f$, how can we tell whether it has a more compact CRN stably computing it than our construction? 
    If it does, how can we arrive at it?

\paragraph{Requiring ``always'' fair rate schedules}
\Cref{lem:stable-computation-implies-fair-computation-feedforward} shows that for a feedforward CRC that stably computes a function $f$, any fair rate schedule converges to the correct output of $f$, where fair essentially means that applicable reactions must have positive rate for an infinite subset of times.
In other words, the adversary is allowed for some times outside this subset for the rate schedule to be unfair:
to ``starve'' some applicable reactions by keeping their rates at 0 despite all reactants being present.
Since such rate schedules seem physically implausible, it is natural to consider a modified definition of computation, one that requires every rate schedule that is \emph{always-fair} (i.e., for \emph{all} time, applicable reactions must have positive rate) to converge to the correct output.
The following CRC shows that these two requirements can result in different behaviors for a given CRC:
\begin{eqnarray*}
X &\to& X+C
\\
X &\to& X'
\\
C+X' &\to& C+Y
\end{eqnarray*}
With initial concentration $x$ of $X$,
every always-fair rate schedule converges to concentration $x$ of $Y$, 
so it computes the identity function $f(x)=x$ under this modified definition.
(Note that always-fairness requires the first two reactions to have positive rate at time 0, so $C$ immediately becomes present, at which point it is inevitable to convert all $X$ to $X'$ and all $X'$ to $Y$.)
However, the CRC does not stably compute $f$:
an initially unfair adversary can execute the second reaction to completion,
starving the first reaction until it becomes inapplicable.
This schedule never produces any $C$, so $Y$ stays at 0,
yet the schedule is fair by the original definition since,
once all reactions become inapplicable,
subsequent rate 0 of all reactions for all time vacuously satisfies the definition of fair.
It is an interesting question is whether,
under the modified definition of always-fair,
some CRC can compute a function that is not stably computable.

\paragraph{Arbitrary but fixed rate constants}
    A related notion of rate-independence is one where the form of rate-law cannot vary,
    but the constant parameters can,
    e.g., mass-action rates, where an adversary picks the rate constants 
    (possibly depending on the initial input).
    For example, consider the following CRN with input species $A,B,C,X$ and output species $Y$.
    \begin{eqnarray*}
      A + X &\to& A + Y
      \\
      B + Y &\to& B + X
      \\
      A + B &\to& C
      \\
      C + Y &\to& C + X
      \\
      3C &\to& \emptyset.
    \end{eqnarray*}
    Let $a, b, c, x$ denote the initial concentrations of species $A$, $B$, $C$, $X$.
    This system does not stably compute any function in the model defined in this paper because on input $a=b$, 
    it can stabilize to any value of output $y$ between $0$ and $x$.

    In contrast, consider the above system under mass-action kinetics,
    where an adversary picks the rate constants, but they remain constant over time.
    Because $A$ and $B$ are required to produce $C$, and at least one of them goes to 0 by the third reaction, the concentration of $C$ approaches $0$ as time goes to infinity by the final reaction, no matter the rate constants.
    If $a>b$, then also the concentration of $B$ approaches $0$ but the concentration of $A$ remains bounded away from $0$. Therefore, the output $y$ converges to $x$, regardless of what the rate constants are. Similarly, if $b>a$, the output $y$ approaches $0$.
    When $a=b$, the concentrations of $A, B, C$ approach $0$ at different rates. The concentrations of $A, B$ are $\Theta(\frac{1}{t})$ at time $t$ (rate of bimolecular decay), 
    and the concentration of $C$ is $\Theta(\frac{1}{\sqrt{t}})$ (rate of trimolecular decay). As a result, the effective rate of conversion of $Y$ to $X$ via the channel $C + Y \to C + X$ is $\Theta(\frac{1}{\sqrt{t}})$. 
    Since this is $\omega(\frac{1}{t})$ the output $y$ always converges to $0$ regardless of the rate constants (in our particular case the concentration of $Y$ is $e^{-\Theta(\sqrt{t})}$). From the above, this CRN computes $f(a, b, c, x) = x$ when $a > b$ and $f(a, b, c, x) = 0$ when $a \leq b$, no matter what the rate constants are.
    This function is discontinuous at points where $a=b$,
    so it is not positive-continuous, 
    thus not stably computable by any CRN under our model of rate-independence.
    It remains open to classify what functions can be computed by
    mass-action CRNs in which rate constants are chosen adversarially.

\paragraph{Axiomatic derivation of reachability}
An important contribution of this paper is to develop segment-reachability as the ``correct'' notion of rate-independent reachability.
While we justify our definition of segment-reachability by its equivalence to valid rate schedules (\Cref{thm:valid-reachable-implies-segment-reachable}),
one could imagine taking an axiomatic approach:
Any notion of rate-independent reachability ought to satisfy certain constraints. 
For instance, it should be transitive.
Also, if $\vd$ is straight-line reachable from $\vc$, then it is evidently allowed by stoichiometry for the CRN to evolve from $\vc$ to $\vd$, so $\vd$ should be reachable from $\vc$. 
Of course, there are many relations that satisfy these two constraints: for instance the relation where every state is reachable from every other state satisfies these two constraints---but this is evidently too permissive. 
Indeed, our notion of segment-reachability is the transitive closure of the relation of straight-line reachability (Corollary~\ref{cor:segto-is-transitive}), so it is minimal among all of the relations with these two properties.
However, since our goal is to develop a notion of reachability that is as unrestricted as possible, 
a more compelling axiomatic derivation would follow from simple set of natural conditions for which segment-reachability is \emph{maximal}.

\paragraph{Absolute inhibition}
In biology it is not uncommon to have rate laws with explicit inhibitors; the higher the concentration of the inhibitor, the slower the reaction. 
Formally, general rate laws have been described in which reactants are partitioned into consumed species, species that increase the reaction rate (catalysts), and species that decrease the rate (inhibitors)~\cite{fages2015inferring}.
In our model, while inhibitors can be modeled mechanistically as sequestering reacting species in a non-reactive form (e.g., $A + B \to C$ is inhibited by $I$ via the reaction $A + I \revrxn AI$), such inhibition just  slows down the reaction rate but does not prevent the reaction entirely.
In contrast one can imagine a definition of reachability in which a reaction is applicable 
only if all of its reactants are present \emph{and} all of its inhibitors are absent.
It remains an open question whether the computational power of rate-independent computation changes if such absolute inhibition is allowed.
We note that this change drastically changes the notion of reachability because it is no longer additive: it might be that $\vx \segto \vy$, but $\vx + \vc \not\segto \vy + \vc$ if $\vc$ contains some inhibitors of reactions occurring in the first path.
Further, for discrete CRNs, absolute inhibition dramatically expands computational power of stable computation to 
Turing universality~\cite{CooSolWinBru09,LiptonVAS} (i.e., the ability to compute any function computable by any algorithm).
thus it seems reasonable to conjecture that it also expands the computational power in the continuous setting.

\newcommand{\yes}{\mathsf{Y}}
\newcommand{\no}{\mathsf{N}}

\paragraph{Decision problems}
In the discrete CRN model, particularly the subset of it known as population protocols~\cite{angluin2006passivelymobile},
a major focus of research is on \emph{decision} problems with a yes/no output, a.k.a. predicates.
The typical output convention partitions the set $\Lambda$ of species into two disjoint subsets $\Lambda = \Lambda_\yes \cup \Lambda_\no$, the ``yes voters'' and ``no voters''.
The goal of stable computation in this setting is to reach a configuration with a unanimous, correct vote
(e.g., if the correct answer is yes, only species in $\Lambda_\yes$ are present),
that is also stable in the sense that no incorrect voter is producible.
In this setting it has been shown that exactly the \emph{semilinear} predicates can be stably decided~\cite{AngluinAE2006semilinear, angluin2006passivelymobile} by discrete CRNs.

The concept can be similarly defined with continuous CRNs using our notion of segment-reachability.
Say that a CRN with voting species defined as above \emph{stably computes} a predicate $\phi: \Rp^k \to \{\yes,\no\}$ if,
for any initial configuration $\vx \in \Rp^k$,
for any configuration $\vc$ such that $\vx \segto \vc$,
there is a configuration $\vy$ such that $\vc \segto \vy$,
and $\vy$ is \emph{stably correct},
meaning that for all $\vy'$ such that $\vy \segto \vy'$,
we have $\emptyset \neq [\vy'] \subseteq \Lambda_{\phi(\vx)}$. 
In other words, stable computation leads to a nonempty configuration with only ``correct'' votes (according to $\phi$),
and this is also true of every configuration reachable from there.
We also say in this case that the CRN \emph{stably decides} the set $\phi^{-1}(\yes) \subseteq \Rp^k$ of inputs that map to output $\yes$.
What is the class of sets $S \subseteq \Rp^k$ that are stably decidable by this definition?

For discrete CRNs, the question of ``how long'' a system takes to stably compute a function has received much interest~\cite{LeaderElectionDIST, alistarh2017time,alistarh2018recent,gkasieniec2018fast}.
In general, asking questions of time-complexity of continuous computation seems more difficult than in the discrete setting. 
For general polynomial ODEs, the breakthrough work of Bournez, Graça, and Pouly~\cite{bournez2017odes} established a surprisingly tight connection between the length of the trajectory and the Turing machine computation time.
Closer to our domain of interest, prior work has studied asymptotic convergence speed for the composition of simple CRN motifs as a function of the number of (feedforward) layers~\cite{seelig2009time}.
Although not explicitly stated in terms of rate-independent computation, 
these modules compute in the rate-independent manner as studied here.
It is interesting to ask whether these techniques could be adopted to our constructions to articulate and help resolve questions of computation time.

\section{Acknowledgements}
We thank Manoj Gopalkrishnan, Elisa Franco, Damien Woods, and the organizers and participants of the American Mathematical Institute workshop on Mathematical Problems Arising from Biochemical Reaction Networks for insightful discussions.
We are grateful to anonymous reviewers for insightful comments and suggestions that have greatly improved this paper. 
DD was supported by NSF grants 2211793, 1900931, and 1844976. 
DS was supported by NSF grants CCF-1901025, CCF-1652824, and a Sloan Foundation Research Fellowship.
WR was supported by the National Science Foundation Graduate Research Fellowship under Grant No. DGE1745303. 
HC was supported by MOST (Taiwan) grants 107-2221-E-002-031-MY3 and 110-2223-E-002-006-MY3.

\bibliographystyle{abbrv}
\bibliography{tam}

\newpage
\appendix
\section{Forward-Invariance of Absent Siphons in Mass-Action Systems}
\label{app:siphons-absence-forward-invariant}

This section gives an alternate proof of the following result used in \Cref{sec:mass-action-reachability}, originally due to Angeli, De Leenheer, and Sontag~\cite{angeli2007petri}.

\begin{siphonsLem}[\cite{angeli2007petri}, Proposition 5.5]
\siphonsLemText
\end{siphonsLem}

\begin{proof}
Let $\calC = (\Lambda,R)$ be the CRN,
with positive mass-action rate constants assigned.

To see the forward direction,
let $\Omega$ be a siphon,
and let $\vc$ be a state such that $\Omega \cap [\vc] = \emptyset$.
Consider the reduced CRN $\calC_\reduce = (\Lambda_\reduce, R_\reduce)$ where we remove all species in $\Omega$ (i.e., $\Lambda_\reduce = \Lambda \setminus \Omega$) and all reactions referencing them
(i.e., $R_\reduce = \{ \langle \vr,\vp \rangle \in R \mid  [\vr] \cap \Omega = \emptyset \text { and } [\vp] \cap \Omega = \emptyset\}$).
Let $\vc_\reduce = \vc \upharpoonright \Lambda_\reduce$.
Let $\vrho':\Rp \to \Rp^{\Lambda_\reduce}$ be the mass-action trajectory of $\calC_\reduce$ starting at $\vc_\reduce$.
Define the trajectory $\vrho: \Rp \to \Rp^\Lambda$ of $\calC$ by $\vrho_S(t) = \vrho'_S(t)$ if $S \in \Lambda_\reduce$ and $\vrho_S(t) = 0$ otherwise, i.e.,
$\vrho$ keeps all of $\Omega$ at 0 and otherwise follows $\vrho'$.

We claim that $\vrho$ is a solution to the mass-action ODEs of $\calC$.
Since all polynomials,
such as those defining mass-action ODEs,
are locally Lipschitz 
(have bounded derivatives in some open set around every point),
the Picard-Lindel\"of Theorem implies that the mass-action ODEs have a unique solution.
Hence $\vrho$ is the \emph{only} solution to the mass-action ODEs of $\calC$ starting at $\vc$.
Since $\vrho_S(t) = 0$ for all $S \in \Omega$ and $t \geq 0$,
this implies that any $\vd$ mass-action reachable from $\vc$ obeys $\Omega \cap [\vd] = \emptyset$.

To show that $\vrho$ is a valid solution to the mass-action ODEs of $\calC$, we need to check that
\begin{equation}
\label{eq:odesfull}
\frac{d\vrho}{dt} = \vM \cdot \vA(\vrho(t))
\end{equation}
for all times $t \ge 0$,
where 
$\vA(\vrho(t))$ is the vector of reaction rates at time $t \ge 0$
(\Cref{sec:mass-action-reachability})
and
$\vM$ is the stoichiometry matrix 
(\Cref{subsec-reachability}) converting reaction rates to species derivatives.
First, let us show that $\vA_\alpha(\vrho(t)) = 0$ for all $t \geq 0$ and $\alpha \not\in R_\reduce$. 
If $\alpha \notin R_\reduce$, then there is some species in $\Omega$ that is either a reactant or product of $\alpha$, and by the fact that $\Omega$ is a siphon we know that there must necessarily be a reactant $S$ of $\alpha$ that is in $\Omega$. 
Because $\vrho_S(t) = 0$ for all $t \geq 0$ by definition and because for mass action ODEs $\vA_\alpha(\vrho(t)) = 0$ if $\vrho_S(t) = 0$ for any reactant $S$ of $\alpha$, this shows that $\vA_\alpha(\vrho(t)) = 0$ for all $t \geq 0$ and $\alpha \not\in R_\reduce$. 
Thus the right hand side of \eqref{eq:odesfull} consists entirely of the contributions of reactions in $\calC_\reduce$,  
\[\vM \cdot \vA(\vrho(t)) = \sum_{\alpha \in R_\reduce} \vM_\alpha \vA_\alpha(\vrho(t)).\]
From this and the fact that $\vrho'$ is a solution of the mass-action ODEs of $\calC_\reduce$ it follows that $\vrho$ is a solution of the mass-action ODEs for $\calC$.

To see the reverse direction,
let $\alpha = \langle \vr,\vp \rangle$ be a reaction with a product $S \in \Omega$;
it suffices to show that $\alpha$ has a reactant in $\Omega$.
If $S$ itself is a reactant in $\alpha$, we are done,
so assume otherwise;
then $\alpha$ produces $S$.

Let $\vc$ be a state with $[\vc] = \Lambda_\reduce$, i.e., exactly species not in $\Omega$ are present.
In particular $\vc(S) = 0$.
We claim that $\vA_\alpha(\vc) = 0$, i.e. $\alpha$ has rate 0 in $\vc$.
To see why, for the sake of contradiction suppose $\vA_\alpha(\vc) > 0$.
To have $\vd(S) = 0$ for all $\vd$ mass-action reachable from $\vc$,
$S$ as a function of time is the constant $0$, 
so $dS/dt = 0$ in $\vc$.
Then in order to balance $\alpha$'s production of $S$ in $\vc$ to maintain $dS/dt = 0$,
there must be some other reaction $\beta$ with $\vM(S, \beta) < 0$ (so $\beta$ consumes $S$) and $\vA_\beta(\vc) > 0$.
Since $\beta$ consumes $S$ we know that $S$ is a reactant in $\beta$.
But since $\vc(S) = 0$,
the rate of any reaction consuming $S$ is 0 in $\vc$, a contradiction.
Thus $\vA_\alpha(\vc) = 0$.

Since $[\vc] = \Lambda_\reduce$
(all species outside of $\Omega$ are present),
to have mass-action rate 0 in $\vc$,
$\alpha$ must have at least one reactant in $\Omega$ (recall reaction rate constants in a mass-action system are strictly positive).
So $\Omega$ is a siphon.
\end{proof}

The definition of a valid rate schedule (\Cref{defn:valid-rate-schedule}, part \eqref{defn:valid-rate-schedule:reactants-present}) requires that if some reactant is 0, then the reaction rate is 0.
However, the converse implication
(if a reaction rate is 0, then some reactant must be 0) 
holds for mass-action but not more general rate schedules such as segment-reachability,
which are allowed to ``starve'' applicable reactions by holding their rates at 0.
The reverse direction of the proof of \Cref{lem-mass-action-forward-invariant-siphon} uses this converse implication,
but the forward direction uses only the more general implication of \Cref{defn:valid-rate-schedule}, part \eqref{defn:valid-rate-schedule:reactants-present}.
For the forward direction,
the key property used from mass-action is its determinism: 
it has unique solutions,
so to show that the siphon remains absent in all possible reachable states
it suffices to show that there is just one solution where the siphon remains absent.

\section{Max-min representation of continuous piecewise linear functions}
\label{app:A}

Here we prove a slight generalization of Ovchinnikov's theorem~\cite{ovchinnikov2002max}. 
In Ovchinnikov's original paper, he only considers piecewise affine functions 
(in Ovchinnikov's terminology, piecewise ``linear'' functions) that are defined on closed domains (that is, closures of open subsets of $\R^n$). 
However, the key proof techniques of~\cite{ovchinnikov2002max} did not crucially use this fact.
In fact, we apply \cref{thm-min-max-representation} on non-closed domains such as the sets $D_U$ in the proof of Lemma~\ref{lem-nonnegative-piecewise-linear-implies-computable}.
For completeness we prove the variant of the theorem not requiring $D$ to be closed.

\begin{minMaxThm}
[\cite{ovchinnikov2002max}, Theorem 2.1]
\minMaxThmText
\end{minMaxThm}

In order to prove the theorem, we  first prove three lemmas.
The first technical lemma is implicit in~\cite{ovchinnikov2002max}.
The second and third lemmas correspond to Lemmas 2.1 and 2.2 of~\cite{ovchinnikov2002max}. The proofs we give of the second and third lemmas are almost identical in content to the proofs of the corresponding lemmas in~\cite{ovchinnikov2002max}, except for the fact that we consider piecewise affine functions defined over more general subsets of $\R^n$. The same is true for our proof of \Cref{thm-min-max-representation}, which is again almost identical to the proof of Theorem 2.1 in~\cite{ovchinnikov2002max}. 

\begin{lemma}\label[lemma]{lem:piecewise-linear-functions-are-nice}
If $f: [a,b] \to \R$ is a continuous piecewise affine function with components $\set{g_1, \ldots, g_n}$, then there are finitely many numbers $a = x_0 < x_1 < \ldots < x_m = b$ such that $f$ is affine on $[x_k, x_{k + 1}]$ for all $k$.
\end{lemma}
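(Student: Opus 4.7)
The plan is to partition $[a,b]$ by the finitely many points at which the affine extensions of two different components can coincide; on the open subintervals in between, I will argue that $f$ must coincide with a single component.

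First, I would replace each component $g_j$ by its natural affine extension $\tilde g_j:\R\to\R$, and assume without loss of generality that the $\tilde g_j$ are pairwise distinct (merging duplicates). Since any two distinct affine functions on $\R$ coincide in at most one point, the ``crossing set''
\[
E \;=\; \{x\in[a,b] : \exists i\neq j,\ \tilde g_i(x)=\tilde g_j(x)\}
\]
is finite. Enumerate $E\cup\{a,b\}$ in increasing order as $a=x_0<x_1<\cdots<x_m=b$.

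Next, I would fix any open subinterval $(x_k,x_{k+1})$ and show that $f$ equals one of the $\tilde g_j$'s throughout. For each $j$, define $F_j=\{x\in(x_k,x_{k+1}) : f(x)=\tilde g_j(x)\}$; each $F_j$ is closed in $(x_k,x_{k+1})$ by continuity of $f-\tilde g_j$. By Definition~\ref{def:piecewise-rational-linear}, the $F_j$ cover $(x_k,x_{k+1})$, and because $(x_k,x_{k+1})\cap E=\emptyset$ we have $\tilde g_j(x)\neq \tilde g_{j'}(x)$ for $j\neq j'$ throughout the subinterval, so the $F_j$ are pairwise disjoint. Since $(x_k,x_{k+1})$ is connected and has been written as a disjoint union of finitely many closed subsets, all but one $F_{j^*}$ must be empty; hence $f\equiv \tilde g_{j^*}$ on $(x_k,x_{k+1})$. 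By continuity, $f=\tilde g_{j^*}$ on the closure $[x_k,x_{k+1}]$, so $f$ is affine on this closed subinterval, as required.

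The main obstacle is the core observation that on a region where the affine extensions do not cross, a continuous piecewise-affine function must globally equal one of them, rather than switching between several. This rests on the standard fact that a connected space cannot be written as a disjoint union of two or more nonempty closed subsets, applied here to the cover by the $F_j$'s; everything else amounts to identifying the finite crossing set and using continuity at its points to pass from the open subintervals to their closures.
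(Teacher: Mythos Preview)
Your proposal is correct and follows essentially the same approach as the paper's proof. The only cosmetic differences are that the paper partitions by the set $S$ of points where $f$ agrees with two distinct components (a subset of your crossing set $E$), and phrases the connectedness step as ``$D_l\cap I$ is clopen in $I$, hence equals $I$'' rather than ``a connected interval is not a disjoint union of two or more nonempty closed sets''; these are equivalent formulations of the same argument.
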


\begin{proof}
Without loss of generality we can assume that all of the component functions $g_i$ are distinct affine functions. For each $i$ between $1$ and $n$, let $D_i$ be the set of $x \in [a,b]$ such that $g_i(x) = f(x)$. For each $i$, both $g_i$ and $f$ are continuous, so $D_i$ is closed. Let $S$ be the subset of $[a,b]$ consisting of points $x \in [a,b]$ where $x$ is a member of more than one $D_i$. For each pair $i \neq j$, we know that $g_i$ and $g_j$ are distinct affine functions, so there can be at most one $x \in [a,b]$ such that $g_i(x) = g_j(x)$. This implies that $D_i$ and $D_j$ can intersect in at most one point, so $S$ must be a finite set. Let $x_0\ldots x_m$ be the elements of $S \cup \set{a,b}$. 

Now for a given $k$ write $I$ for the interval $(x_k, x_{k + 1})$ and consider the restriction of $f$ to $I$. Pick a random point $c \in I$ and suppose $f(c) = g_l(c)$. Then clearly $D_l \cap I$ is nonempty. Because $D_l$ is closed in $[a,b]$, by definition $D_l \cap I$ is closed relative to $I$. Because $S \cap I = \emptyset$ we also know that 
\[D_l \cap I = I \setminus \left(\bigcup_{i \neq l} D_i \right) \]
so $D_l \cap I$ is open relative to $I$. But the only subset of an interval that is both open and closed is the whole interval, so $I \subseteq D_l$. Therefore $f = g_l$ when restricted to $I$, and by continuity we see that $f$ is affine on the closure of $I$ as well.
\end{proof}

Note that we define piecewise affine functions to have only finitely many components---without this assumption the above lemma is false.

\begin{lem}\label[lemma]{lem:piecewise-linear-on-interval}
Let $f: [a,b] \to \R$ be a continuous piecewise affine function. Let $\set{g_1, \ldots, g_n}$ be its set of components. Then there is some $k$ such that
\[g_k(a) \le f(a)\ \text{ and }\ g_k(b) \ge f(b).\]
\end{lem}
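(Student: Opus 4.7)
My plan is to exhibit an explicit index $k$ that works, by picking the piece of $f$ with the largest slope. First I would apply \Cref{lem:piecewise-linear-functions-are-nice} to obtain breakpoints $a = x_0 < x_1 < \cdots < x_m = b$ and indices $l_1, \ldots, l_m \in \{1,\ldots,n\}$ such that $f = g_{l_i}$ on $[x_{i-1}, x_i]$ for each $i$. Let $i^{*}$ be an index maximizing the slope of $g_{l_{i^*}}$ among $i \in \{1,\ldots,m\}$, and set $k = l_{i^*}$. Note $g_k(x_{i^*-1}) = f(x_{i^*-1})$ and $g_k(x_{i^*}) = f(x_{i^*})$ since $f = g_k$ on $[x_{i^*-1}, x_{i^*}]$.

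The heart of the argument is then a two-sided induction showing that $g_k$ stays below $f$ to the left of $x_{i^*-1}$ and above $f$ to the right of $x_{i^*}$. For the right side, I would prove by induction on $j \ge 0$ that $g_k(x) \ge f(x)$ on $[x_{i^*}, x_{i^*+j}]$. The base case $j = 0$ is trivial. For the inductive step, on the piece $[x_{i^*+j}, x_{i^*+j+1}]$ we have $f = g_{l_{i^*+j+1}}$, whose slope is at most the slope of $g_k$ by choice of $i^*$. Hence the affine function $g_k - f$ has non-negative slope on this piece, and since it is non-negative at the left endpoint $x_{i^*+j}$ by the induction hypothesis, it is non-negative on all of $[x_{i^*+j}, x_{i^*+j+1}]$. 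In particular $g_k(b) \ge f(b)$.

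A symmetric induction going leftward shows $g_k(x) \le f(x)$ on $[a, x_{i^*-1}]$: on $[x_{i^*-j-1}, x_{i^*-j}]$ the slope of $g_k - f$ is again non-negative, and the value at the right endpoint is non-positive, so the value at the left endpoint is also non-positive. Iterating yields $g_k(a) \le f(a)$, which together with $g_k(b) \ge f(b)$ completes the proof.

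The only place that requires any care is making sure the slope comparison $g_k - f$ has the right sign uniformly across all pieces---this is immediate from the maximality of the slope of $g_k$ among the active components, so I do not anticipate a serious obstacle. The subtlety worth flagging is that the argument uses only slopes of components that actually appear as some $g_{l_i}$ (i.e., that are active on some piece of $f$), which is automatic given the paper's definition of ``components.''
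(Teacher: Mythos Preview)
Your argument is correct and takes a genuinely different route from the paper's. The paper first reduces to the case $f(a)=f(b)=0$ by subtracting the affine interpolant through $(a,f(a))$ and $(b,f(b))$, then lets $c$ be the smallest zero of $f$ greater than $a$ and looks at the first and last pieces of $f$ on $[a,c]$; an intermediate-value argument shows that at least one of these two pieces must have non-negative slope, and that piece does the job. Your approach avoids both the affine reduction and the IVT: by picking the active piece of maximum slope and propagating the sign of $g_k - f$ across breakpoints via a one-line monotonicity step in each direction, you obtain the conclusion directly. Your proof is also explicitly constructive (it names the winning component as the max-slope piece), whereas the paper only argues for existence among two candidates. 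The paper's version has the minor virtue of tracking Ovchinnikov's original presentation, but your argument is at least as clean and arguably more transparent.
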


\begin{proof}
We'll first prove the result for $f(a) = f(b) = 0$ and then show how this implies the general case. Given this assumption, if one of the $g_i$ is the zero function, then we're done. If not, since all of the component functions $g_i$ are affine, each $g_i$ can have at most one zero. Since $f$ has finitely many components, this implies that $f$ has finitely many zeros. Let $c$ be the smallest number such that $c > a$ and $f(c) = 0$. 

By \Cref{lem:piecewise-linear-functions-are-nice} we know that there are some $x$ and $y$ with $a < x < y < c$ and component functions $g_k$ and $g_l$ such that $f = g_k$ on $[a, x]$ and $f = g_l$ on $[y, c]$. If the slope of either $g_k$ or $g_l$ is non-negative, then we are done:
\begin{align*}
     g_k(a) &\le g_k(c) = 0 = f(a)\\
     g_k(b) &\ge g_k(c) = 0 = f(b) 
\end{align*}
and similarly for $g_l$. 
But $g_k$ and $g_l$ can't both have negative slope, for then $f(x) = g_k(x) < 0$ and $f(y) = g_l(y) > 0$, so by the intermediate value theorem there would be some $z$ between $x$ and $y$ such that $f(z) = 0$. This contradicts our assumption that $c$ was the smallest number with $c > a$ and $f(c) = 0$. 
This concludes the proof assuming that $f(a)=f(b)=0.$

To deduce the result for a general continuous piecewise affine function from this special case, subtract the affine function
\[\ell(x) = f(a) + f(b)\frac{x - a}{b - a}\]
from $f$ and all of its components. 
\end{proof}

\begin{lem}\label[lemma]{lem:piecewise-linear-on-convex-domain}
Let $D$ be a convex subset of $\R^n$ and let $f: D \to \R$ be a continuous piecewise affine function. If the components of $f$ are $\set{g_1 \ldots g_n}$, then for every pair of vectors $\va$ and $\vb$ in $D$, there is some $k$ such that
\[g_k(\va) \le f(\va)\ \text{ and }\ g_k(\vb) \ge f(\vb)\]
\end{lem}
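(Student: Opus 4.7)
The plan is to reduce this multidimensional statement to the one-dimensional case handled by \Cref{lem:piecewise-linear-on-interval} by restricting attention to the line segment joining $\va$ and $\vb$. Since $D$ is convex, the segment $L = \setl{\va + t(\vb - \va)}{t \in [0,1]}$ lies entirely in $D$, so we may define $\tilde{f}: [0,1] \to \R$ by $\tilde{f}(t) = f(\va + t(\vb - \va))$ and similarly $\tilde{g}_i(t) = g_i(\va + t(\vb - \va))$ for each component $g_i$ of $f$.

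The first thing I would verify is that $\tilde{f}$ is a continuous piecewise affine function on $[0,1]$ with components drawn from $\{\tilde{g}_1, \ldots, \tilde{g}_n\}$. Continuity of $\tilde f$ is immediate from continuity of $f$ and of the parametrization. Each $\tilde{g}_i$ is the restriction of an affine function on $\R^n$ to a line, hence is itself affine (defined on all of $[0,1]$). For any $t \in [0,1]$, the point $\va + t(\vb - \va)$ lies in $\dom g_i$ for some $i$, and $\tilde{f}(t) = g_i(\va + t(\vb - \va)) = \tilde{g}_i(t)$; thus the domains of the $\tilde{g}_i$ cover $[0,1]$ and agree with $\tilde f$ pointwise, so $\tilde f$ meets the definition of a continuous piecewise affine function in the sense of \Cref{def:piecewise-rational-linear} (specialized to affine).

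Now I would apply \Cref{lem:piecewise-linear-on-interval} to $\tilde f$ on $[0,1]$: it yields an index $k$ such that
\[
\tilde{g}_k(0) \le \tilde{f}(0) \quad \text{and} \quad \tilde{g}_k(1) \ge \tilde{f}(1).
\]
Unwinding the definitions, $\tilde{g}_k(0) = g_k(\va)$, $\tilde{f}(0) = f(\va)$, $\tilde{g}_k(1) = g_k(\vb)$, and $\tilde{f}(1) = f(\vb)$, which gives exactly the desired inequalities $g_k(\va) \le f(\va)$ and $g_k(\vb) \ge f(\vb)$.

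The only mildly technical step is verifying that $\tilde f$ is indeed a piecewise affine function with the $\tilde g_i$'s as its components, but this is essentially bookkeeping since each $g_i$ is globally affine on $\R^n$. There is no genuine obstacle beyond handling definitions carefully; the real content of the result lives in the one-dimensional \Cref{lem:piecewise-linear-on-interval}, which is already proved. Thus the proof is short and reduces entirely to that lemma together with the convexity of $D$.
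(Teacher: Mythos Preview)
Your proposal is correct and follows exactly the same approach as the paper: restrict $f$ to the segment from $\va$ to $\vb$ (using convexity of $D$) and apply \Cref{lem:piecewise-linear-on-interval}. The paper's proof is a two-sentence version of what you wrote; your additional verification that $\tilde f$ is continuous piecewise affine with components among the $\tilde g_i$ is the bookkeeping the paper leaves implicit.
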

\begin{proof}
Because $D$ is convex, the straight-line interval between $\va$ and $\vb$ is contained in $D$. Apply \Cref{lem:piecewise-linear-on-interval} to the restriction of $f$ to this interval. 
\end{proof}

Finally, we are ready to prove \Cref{thm-min-max-representation}.

\begin{proof}[Proof of \Cref{thm-min-max-representation}]
For each $\vb \in D$, define the set $S_{\vx} \subseteq \set{1 \ldots p}$ as
\[S_\vb = \set{i\ |\ g_i(\vb) \ge f(\vb)}.\]
Let 
\[F_\vb(\vx) = \min_{i \in S_\vb} g_i(\vx).\]
Because there is always some component function $g_j$ with $g_j(\vb) = f(\vb)$, we see that $F_\vb(\vb) = f(\vb)$ for every $\vb \in D$. Also, by \Cref{lem:piecewise-linear-on-convex-domain}, we know that for every $\va \in D$, there is some component function $g_k \in S_\vb$ with $g_k(\va) \le f(\va)$, so $F_\vb(\va) \le f(\va)$ for every pair $\vb, \va \in D$. This implies that
\begin{align}\label{eq:max-min}
f(\vx) = \max_{\vb \in D}F_\vb(\vx) = \max_{\vb \in D}\min_{i \in S_\vb}g_i(\vx)    
\end{align}
Since $\set{1 \ldots p}$ is a finite set, it has only finitely many subsets, so each $S_\vb$ is equal to one of finitely many sets $S_j$. We can therefore replace the maximum over all $\vb \in D$ in \Cref{eq:max-min} with a maximum over finitely many functions. 
\end{proof}

\section{Finding Rational Solutions to Systems of Linear Equations}
\label{app:C}

It is well-known that a system of linear equations with rational coefficients has a rational solution if and only if it has a real solution.
The following result 
shows the slightly generalized claim that
rational solutions exist arbitrarily close to all real solutions
(i.e., the rational solutions are dense in the real solutions).

\begin{lem}\label{lem-rational-sols}
Let $A\vec{x} = \vec{b}$ be a system of linear equations, where $A$ is a matrix with rational coefficients and $\vec{b}$ is a vector with rational coefficients. If the equation has a solution $\vec{x}$ with real coefficients, then for any $\varepsilon > 0$, it has a solution with $\vec{x}'$ rational coefficients such that $||\vec{x}' - \vec{x}|| < \varepsilon$. 
\end{lem}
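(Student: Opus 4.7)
The plan is to reduce the claim to the two standard facts that (i) a consistent rational linear system has a rational solution, and (ii) the kernel of a rational matrix admits a rational basis. Given these, the solution set is a rational affine subspace, in which rational points are dense.

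First I would invoke the standard fact that the linear system $A\vec{x} = \vec{b}$, being consistent over $\mathbb{R}$, is also consistent over $\mathbb{Q}$: performing Gaussian elimination entirely within $\mathbb{Q}$ yields a reduced echelon form with rational entries, and the existence of a real solution guarantees that no inconsistent row $0 = q$ with $q \neq 0$ appears, so back-substitution produces a rational particular solution $\vec{x}_0$. Similarly, Gaussian elimination on $A$ yields a rational basis $\vec{v}_1, \ldots, \vec{v}_r$ for $\ker A$.

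Next, since $A(\vec{x} - \vec{x}_0) = \vec{b} - \vec{b} = 0$, the difference $\vec{x} - \vec{x}_0$ lies in $\ker A$, so there exist real scalars $c_1, \ldots, c_r$ with
\[
\vec{x} - \vec{x}_0 = \sum_{i=1}^r c_i \vec{v}_i.
\]
By the density of $\mathbb{Q}$ in $\mathbb{R}$, for any $\delta > 0$ I can pick rationals $c_i' \in \mathbb{Q}$ with $|c_i' - c_i| < \delta$ for each $i$. Setting $\vec{x}' = \vec{x}_0 + \sum_{i=1}^r c_i' \vec{v}_i$, the vector $\vec{x}'$ has rational coordinates (being a rational combination of rational vectors) and satisfies $A\vec{x}' = A\vec{x}_0 + \sum_i c_i' A\vec{v}_i = \vec{b}$. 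Moreover,
\[
\|\vec{x}' - \vec{x}\| = \left\|\sum_{i=1}^r (c_i' - c_i) \vec{v}_i\right\| \le \delta \sum_{i=1}^r \|\vec{v}_i\|,
\]
which can be made smaller than any prescribed $\varepsilon > 0$ by choosing $\delta < \varepsilon / (1 + \sum_i \|\vec{v}_i\|)$.

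There is no real obstacle here; the only mild subtlety is to be explicit that both the particular solution and the kernel basis can be chosen rationally, which is precisely what Gaussian elimination (carried out over $\mathbb{Q}$ since all pivots and coefficients remain in $\mathbb{Q}$) provides. Everything else is elementary linear algebra combined with density of $\mathbb{Q}$ in $\mathbb{R}$.
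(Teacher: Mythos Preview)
Your proof is correct and rests on the same underlying idea as the paper's: the solution set of a rational linear system is a rational affine subspace, and rational points are dense in any such subspace because the free parameters can be approximated by rationals. The packaging differs slightly. The paper factors $A = PNQ$ with $P,Q$ invertible rational and $N$ a rank-normal form, transforms to coordinates $\vec{y} = Q\vec{x}$ in which the first $r$ entries are forced to be rational and the remaining ones are free, perturbs those free entries to nearby rationals, and pulls back via $Q^{-1}$ with an operator-norm bound. You instead pick a rational particular solution and a rational kernel basis directly from Gaussian elimination, write $\vec{x}$ in those coordinates, and perturb the kernel coefficients. Your version is a bit more elementary in that it avoids the explicit $PNQ$ decomposition and the operator norm; the paper's version makes the change of variables and the error bound slightly more mechanical. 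Either way the content is the same.
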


\begin{proof}
Let $n$ be the number of rows of $A$ and the length of $\vec{b}$. Let $m$ be the number of columns of $A$ and the length of $\vec{x}$. Because $A$ has rational entries, using elementary row and column operations it can be decomposed as $PNQ$ where $P$ is an $m \times m$ invertible rational matrix, $Q$ is an $n \times n$ invertible rational matrix, and 
\[N_{ij} = \begin{cases}
1 & i = j \text{ and } i \le r \\
0 & \text{otherwise}
\end{cases}\]
where $r$ is the rank of $M$. Let $\vec{y} = Q\vec{x}$ and let $\vec{c} = P^{-1} \vec{b}$, so that $N\vec{y} = \vec{c}$. If $r = n$, then all of the entries of $\vec{y}$ must be rational, since $\vec{y}_i = \vec{c}_i$ for all $i$ and all of the entries of $\vec{c}_i$ are rational. Then the entries of $\vec{x}$ must also all be rational, since $\vec{x} = Q^{-1}\vec{y}$ and $Q$ has all rational entries. As a result, if $r = n$, we can just take $\vec{x}' = \vec{x}$. 

On the other hand, if $r < n$, then let 
\[\delta = \frac{\varepsilon}{\sqrt{n - r} ||Q^{-1}||}\]
where 
\[||Q^{-1}|| = \sup_{\vec{v} \neq 0} \frac{||Q^{-1}\vec{v}||}{||\vec{v}||}\]
is the operator norm of $||Q^{-1}||$. Now let $\vec{y}'$ be a vector such that $\vec{y}'_i = \vec{y}_i$ for $i \le r$ and $\vec{y}'_i$ is a rational number such that $|\vec{y}'_i - \vec{y}_i| < \delta$ for $i > r$. All of the components of $\vec{y}'$ are rational: $\vec{y}'_i$ is rational by construction for $i > r$, and $\vec{y}'_i = \vec{y}_i = \vec{c}_i$ is rational for $i \le r$. Moreover, the fact that $\vec{y}'_i = \vec{c}_i$ for $i \le r$ shows that $N\vec{y}' = \vec{c}$. 

If we take $\vec{x}' = Q^{-1}\vec{y}'$, then all of the components of $\vec{x}'$ are rational, and $M\vec{x}' = \vec{b}$, since
\begin{align*}
M\vec{x}' = PNQ\vec{x}' = PN\vec{y}' = P\vec{c} = \vec{b}.
\end{align*}
Finally, we know that $||\vec{x}' - \vec{x}|| < \varepsilon$, since
\begin{align*}
||\vec{x}' - \vec{x}|| &= ||Q^{-1}(\vec{y}' - \vec{y})|| \\
&\le ||Q^{-1}|| \cdot ||\vec{y}' - \vec{y}|| \\
&= ||Q^{-1}|| \sqrt{\sum_{i = 1}^n (\vec{y}'_i - \vec{y}_i)^2} \\
&< ||Q^{-1}|| \sqrt{(n - r)\delta^2} \\
&= \varepsilon.
\end{align*}
This shows that $\vec{x}'$ is our desired solution. 
\end{proof}

\section{Bounding reaction fluxes in straight-line reachability}
\label{app:D}

This section is devoted to proving Lemma~\ref{lem:bound-reaction-fluxes-straight-line-reachability}.
Intuitively,
it shows that if a CRN can reach from state $\vc$ to state $\vd$ by a straight line (of length $\| \vd - \vc \|$),
then the reaction fluxes required can be bounded by $O(\| \vd - \vc \|)$.
This is nontrivial since one can have reactions that cancel, e.g.,
$X \to Y$ and $Y \to X$.
The same straight line from $\vc$ to $\vd$ could result from arbitrarily large but equal fluxes of each reaction (plus some other reactions).
Lemma~\ref{lem:bound-reaction-fluxes-straight-line-reachability} states that we never \emph{need} arbitrarily large fluxes to get from $\vc$ to $\vd$.

\begin{definition}
A \emph{convex polyhedral cone} $C$ is a subset of a vector space $V$ such that there exist vectors $v_1, \ldots, v_k \in V$ so that $C$ is exactly the set of $x \in V$ that can be written as $x = \sum_i \lambda_iv_i$ with all $\lambda_i \ge 0$. Such a set $S = \set{v_1 \ldots v_k}$ is called a \emph{spanning set} for $C$ and we say that $C$ is \emph{spanned by} $S$. Given a set $S = \set{v_1 \ldots v_k}$ of vectors in $V$ we write $C_S$ for the convex polyhedral cone spanned by $S$. 
\end{definition}

\begin{lemma}\label{lem:cones-spanned-by-linearly-independent-frames}
Let $C$ be a convex cone with spanning set $S = \set{v_1 \ldots v_k}$. Let $\mathcal{I}$ be the collection of all linearly independent subsets of $S$. Then
\[C = \bigcup_{S' \in \mathcal{I}} C_{S'}\]
\end{lemma}

\begin{proof}
Suppose to the contrary that there was some $x \in C$ not contained in $C_{S'}$ for any $S' \in \mathcal{I}$. Let $T$ be a minimal subset of $S$ such that $x \in C_T$. Then since $x \in C_T$ we can write $x$ as 
\[x = \sum_{v_i \in T} \lambda_i v_i.\]
To produce a contradiction, let us show that we can express $x$ as 
\[x = \sum_{v_i \in T} \lambda_i' v_i\]
with some $\lambda'_i = 0$. This will imply that $x \in C_{T \setminus \set{v_i}}$, contradicting the minimality of $T$. By assumption, since $x \in T$ we know that $T \notin \mathcal{I}$, so there is some nontrivial linear relationship
\[\sum_{v_i \in T}\mu_i v_i = 0\]
among the $v_i \in T$. By negating all of the $\mu_i$ if needed, we can assume that at least one $\mu_i < 0$. Let $C$ be the constant 
\[C = \min_{\mu_i < 0} \frac{-\lambda_i}{\mu_i}.\]
Let us show that $\lambda_j + C\mu_j \ge 0$ for all $j$. Because $\lambda_i > 0$ for all $i$ we know that $C > 0$. As a result, if $\mu_j \ge 0$ then necessarily $\lambda_j + C\mu_j \ge 0$. On the other hand, if $\mu_j < 0$, then $C \le -\lambda_j/\mu_j$, so 
\[\lambda_j + C\mu_j \ge \lambda_j + \left(\frac{\lambda_j}{\mu_j}\right)\mu_j = 0\]
Moreover, for some $\mu_{i_0} < 0$ we know that $\lambda_{i_0} + C\mu_{i_0} = 0$. As a result,
\[x = \left(\sum_{v_i \in T} \lambda_i v_i\right) + C\left(\sum_{v_i \in T}\mu_i v_i\right) = \sum_{v_i \in T} (\lambda_i + C\mu_i) v_i = \sum_{v_i \in T} \lambda'_i v_i\]
where $\lambda'_{i_0} = 0$. Thus $x \in C_{T\setminus v_{i_0}}$, contradicting the minimality of $T$ and therefore the existence of $x$. 
\end{proof}

\begin{lemma}\label{lem:smallness-lemma-cones}
Let $C$ be a convex cone with spanning set $S = \set{v_1\ldots v_k}$. Then there is a constant $K$ depending only on $S$ so that for any $x \in C$, there is some representation of $x$ as 
\[x = \sum_i \lambda_i v_i\]
with $0 \le \lambda_i \le K||x||$. Additionally, if there is some subset $T \subseteq S$ such that $x \in C_T$, then the above represenation can be chosen with $\lambda_i = 0$ for any $v_i \notin T$. 
\end{lemma}

\begin{proof}
Let $\mathcal{I}$ be the collection of linearly independent subsets of $S$. If $S' \in \mathcal{I}$ and $x \in C_{S'}$ then because the vectors $v_1 \ldots v_k$ in $S'$ are linearly independent, there is a unique way to write  
\[x = \sum_{v_i \in S'} \lambda_i v_i\]
and by \cite[Lemma 2.4-1]{kreyszig1991introductory}, we know there is some constant $K_{S'}$ so that $0 \le \lambda_i \le K_{S'} \|x\|$. Take $K = \max_{S' \in \mathcal{I}} K_{S'}$. Applying Lemma~\ref{lem:cones-spanned-by-linearly-independent-frames} to $C_T$, we can find some linearly independent collection of vectors $T'$ such that $T' \subseteq T$ and $x \in C_{T'}$. Then
\[x  = \sum_{v_i \in T'}\lambda_iv_i\]
with $0 \le \lambda_i \le K_{T'}\| x\|$. Because $T' \subseteq T \subseteq S$ we know that $T' \in \mathcal{I}$, so $K_{T'} \le K$, and since $T' \subseteq T$ we know that above sum only ranges over vectors $v_i \in T$. 
\end{proof}

Finally, we have the main result of this section.

\begin{lemma}
\label{lem:bound-reaction-fluxes-straight-line-reachability}
Fix a CRN $\calC$ and suppose a flux vector $\vu \in \Rp^R$ is applicable at a state $\vc \in \Rp^\Lambda$. Let $\ell = ||M\vu||$ be the length of the straight-line segment in $\R^\Lambda$ given by $M\vu$. Then there exists a constant $K$, depending only on $\calC$ and independent of $\vu$ and $\vc$, such that there exists a flux vector $\vu'$ where $\vu'$ is also applicable at $\vc$, the length of $\vu'$ is bounded as $||\vu'|| \le K\ell$, and $M\vu = M\vu'$.
\end{lemma}

\begin{proof}

Let $\set{v_1\ldots v_m}$ be the standard basis vectors of $\R^R$. Because $\vu = \sum_i \lambda_i v_i$ with all $\lambda_i \ge 0$, we know that $M\vu$ is contained in the convex polyhedral cone spanned by $\set{Mv_1 \ldots Mv_m}$. By Lemma~\ref{lem:smallness-lemma-cones} we know that there are $\lambda'_i$ such that $M\vu = \sum_i \lambda'_i Mv_i$ and $0 \le \lambda'_i \le K||M\vu||$, and moreover $\lambda_i' = 0$ whenever $\lambda_i = 0$. Let $\vu' = \sum_i \lambda'_iv_i$. Then $\vu'$ is still applicable at $\vc$ because $\lambda'_i > 0$ implies $\lambda_i > 0$ and we assumed that $\vu$ was applicable at $\vc$. Also, note that
\[M\vu'= \sum_i \lambda'_i Mv_i = M\vu.\]
Finally, we know that 
\[||\vu'|| \le \sum_i \lambda'_i \le \sum_i K||M\vu|| \le mK\ell,\]
so $\vu'$ is our desired vector in $\Rp^R$. 
\end{proof}

\section{Partial States and Reachability}
\label{sec:partial-reachability}

In this section we define a notion of ``partial'' states and reachability,
which are used in the proof of \cref{lem:not-stably-compute-implies-can-reach-far-from-correct}.
If $\Delta \subsetneq \Lambda$,
we say $\vp \in \Rp^\Delta$ is a \emph{partial state}.
Given a state $\vc \in \Rp^\Lambda$,
recall that $\vc \upharpoonright \Delta$ is $\vc$ restricted to $\Delta$,
i.e., the partial state $\vp = \vc  \upharpoonright \Delta \in \Rp^\Delta$ such that $\vp(S) = \vc(S)$ for all $S \in \Delta$.

Let $k \in \N \cup \{\infty\}.$
Given a state $\vc \in \Rp^\Lambda$ and a partial state  $\vp \in \Rp^\Delta$,
we write  $\vc \segto^k \vp$ 
if there is a sequence of states $\vb_0, \dots, \vb_{k} \in \Rp^\Lambda$ such that $\vc  = \vb_0 \slto \vb_1 \slto \vb_2 \slto \dots  \slto \vb_{k}$,
with $\vp = \vb_k \upharpoonright \Delta$ 
if $k \in \N$,
or $\vp = \lim\limits_{i \to \infty} (\vb_i \upharpoonright \Delta)$ if $k = \infty$.
We write \emph{$\vc \segto \vp$ via $\seq{\vb_i}_{i=1}^k$} if $\vc \segto^k \vp$ for some $k \in \N \cup \{\infty\}$ with intermediate states $\vb_0,\vb_1,\dots$ as above,
or simply $\vc \segto \vp$ when the intermediate states $\vb_0,\vb_1,\dots$ are implicit.
We write \emph{$\vc \segtoio \vp$ via $\seq{\vb_i}_{i=1}^k$} if $\vc = \vb'_0 \to^1 \vb'_1 \to^1 \dots$,
where for some subsequence $\vb_1,\vb_1,\dots$ of $\vb'_1,\vb'_1,\dots$,
we have $\vp = \lim\limits_{i\to\infty} (\vb_i \upharpoonright \Delta)$,
i.e., an infinite subsequence of states converges on concentrations in $\Delta$.

Note that if there is a state $\vd$ such that $\vc \segto \vd$ and $\vd \upharpoonright \Delta = \vp$, then $\vc \segto \vp$, but it is not apparent from the definition that this is the only way for a partial state to be reachable if the number of line segments is infinite.
In particular, it could be that the sequence $\vb_0, \vb_1, \ldots$ does not converge to any state (even though the partial states $\vb_0 \upharpoonright \Delta, \vb_1 \upharpoonright \Delta, \ldots$ converge to $\vp$), 
if some concentration values outside of $\Delta$ do not converge (for instance they may oscillate or go to infinity).

Our goal now is to show that in fact, if a partial state $\vp$ is reachable (or even merely $\segtoio$ reachable), then there is a particular state reachable whose restriction to $\Delta$ is $\vp$.

\begin{thm}
\label{thm:partial-state-reachable}
Let $\Delta \subsetneq \Lambda$,
let $\vc \in \Rp^\Lambda$ be a state, and $\vp \in \Rp^\Delta$ be a partial state.
If $\vc \segtoio \vp$, then 
there is a state $\vd$ such that $\vc \segto \vd$ and $\vp=\vd \upharpoonright \Delta$.

Furthermore,
if $\vc \segtoio \vp$ via $\seq{\vb_i}_{i=1}^k$ for $k \in \N \cup \{\infty\}$,
then there is a partition of $\Lambda$ into $\Lambda_\bdd$ and $\Lambda_\unbdd$, 
with $\Delta \subseteq \Lambda_\bdd$, 
and a subsequence $\seq{\vr_i}_i$ of $\seq{\vb_i}_i$ so that 
$\lim_{i\to\infty} \vr_i(S) = \infty$ for all $S \in \Lambda_\unbdd$, 
and $\lim_{i\to\infty}\vr_i(S) = \vd(S)$ for all $S \in \Lambda_\bdd$.
\end{thm}

\begin{proof}
The finite case is immediate from the definition of $\segto$ for partial states (choose $\vd = \vb_k$ and $\Lambda_\unbdd = \emptyset$),
so assume $\vc \segtoio \vp$.
Then there is an infinite sequence of states 
(the converging subsequence in the definition of $\segtoio$)
$\vb_0, \vb_1, \dots \in \Rp^\Lambda$ 
such that $\vc = \vb_0$,
each 
$\vb_i \segto^{k_i} \vb_{i+1}$ for some $k_i \in \N$,
and $\vp = \lim_{i \to \infty} \vb_i \upharpoonright \Delta$.

Let $\Gamma = \Lambda \setminus \Delta$ be the species outside of $\Delta$,
which may not be converging in the subsequence $(\vb_i)_i$.
Intuitively, our goal will be to make some species in $\Gamma$ converge, and all others simultaneously diverge to infinity.
Then, by driving the concentrations of the diverging species sufficiently large,
then removing them from the system,
we obtain a reduced CRN where all species converge to a single state.
We apply \Cref{thm-reachable-segment-bound} to this reduced CRN to find a finite path of $m+1$ segments reaching this state.
Finally we argue that this path is applicable even in the original CRN,
because the species that were removed first had their concentrations driven large enough that the finite path does not have sufficient flux to send any of them to $0$.
Thus at the end of this finite path, all species converge to some concentration.

Formally, define the set $\Gamma_\unbdd$ (the ``unbounded'' species in $\Gamma$) iteratively as follows.
If there is any $S_1 \in \Gamma$ such that $\limsup_{i\to\infty} \vb_i(S_1) = \infty$, 
then put $S_1$ in $\Gamma_\unbdd$, otherwise $\Gamma_\unbdd$ is defined to be $\emptyset$.
Then pick a subsequence $\vb'_0, \vb'_1,\ldots$ of $\vb_0,\vb_1,\ldots$ such that, for all $i\in\N$, $\vb'_i(S_1) \geq i$;
such a subsequence exists since $\limsup_{i\to\infty} \vb_i(S_1) = \infty$.
Now, from \emph{that} subsequence,
if any species $S_2$ obeys $\limsup_{i\to\infty} \vb'_i(S_2) = \infty$, then place $S_2$ in $\Gamma_\unbdd$ and choose a subsequence $\vb''_0,\vb''_1,\ldots$ of $\vb'_0,\vb'_1,\ldots$ where $\vb''_i(S_2) \geq i$ for each $i \in \N$;
note that because $\vb''_i = \vb'_{i'}$ for $i' \geq i$,
we also have $\vb''_i(S_1) \geq i$.

Repeat this alternation of choosing a species $S_j$ to put in $\Gamma_\unbdd$ and picking a subsequence, until in the final subsequence $\vb'''_0,\vb'''_1,\ldots$ every remaining species $S$ obeys $\limsup_{i\to\infty} \vb'''_i(S) < \infty$.
Define the set $\Gamma_\bdd = \Gamma \setminus \Gamma_\unbdd$;
by construction each species in $\Gamma_\bdd$ has bounded concentrations in $\vb'''_0,\vb'''_1,\dots$
(Though some may have had unbounded concentrations in the original sequence $\vb_0,\vb_1,\dots$)

Now, since concentrations of species in $\Gamma_\bdd$ are in a closed, bounded (i.e., compact) set,
there is a subsequence $\vr_0,\vr_1,\dots$ of $\vb'''_0,\vb'''_1,\dots$ that converges on concentrations for species in $\Gamma_\bdd$.
Furthermore, all subsequences of $\vb_0,\vb_1,\dots$ we have taken so far can be assumed without loss of generality to contain $\vb_0$.
Thus the subsequence $\vr_0,\vr_1,\dots$ obeys
\begin{enumerate}
    \item
    \label{first-state-in-subsequence}
    $\vr_0 = \vb_0\ (= \vc)$,
    
    \item
    \label{gamma-inf-increases}
    for all $S \in \Gamma_\unbdd$ and $i\in \N$,
    $\vr_i(S) \geq i$
    (concentrations in $\Gamma_\unbdd$ increase to infinity simultaneously), 
    and

    \item
    \label{gamma-bdd-converge}
    for all $S \in \Gamma_\bdd \cup \Delta$,
    $\lim_{i \to \infty} \vr_i(S)$ exists and is finite
    (all other concentrations converge).
\end{enumerate}

\newcommand{\quo}{\mathrm{q}}

Consider the ``quotient'' CRN $\calC_\quo = (\Lambda_\quo, R_\quo)$,
where $\Lambda_\quo = \Lambda \setminus \Gamma_\unbdd = \Delta \cup \Gamma_\bdd$, 
and $R_\quo$ is defined by taking each reaction from $R$ and removing any species from it in $\Gamma_\unbdd$.
For example, if $A,B \in \Gamma_\unbdd$ and $C,D,E \not\in \Gamma_\unbdd$,
then the reaction 
$A + B + 2C \to 2A + D + E$ becomes $2C \to D+E$.
Then for each state $\vr \in \Rp^\Lambda$ of the original CRN $\calC$, 
its partial state $\vr \upharpoonright \Lambda_\quo$ is a (normal) state of $\calC_\quo.$

For each $i\in\N$, let $\vq_i = \vr_i \upharpoonright \Lambda_\quo$ be the state of $\calC_\quo$ corresponding to $\vr_i$.
Note that $\vq_0 = \vc \upharpoonright \Lambda_\quo$.
By the definition of $\Lambda_\quo = \Delta \cup \Gamma_\bdd$ and the convergence of concentrations in both $\Delta$ (by the definition of $\vc \segto \vp$) and $\Gamma_\bdd$ (shown as part~\eqref{gamma-bdd-converge} above),
the sequence $\vq_0, \vq_1,\dots$ converges to some state $\vq \in \Rp^{\Lambda_\quo}$.
This implies that $\vq_i \segtoio \vq$ for each $i \in \N$.
Note that $\vq \upharpoonright \Delta = \vp$.

Since the $\vq_i$'s converge to $\vq$,
there is some $i_0$ such that,
for all $i \geq i_0$,
$\| \vq - \vq_i \| \leq 1$.
Since each $\vq_i \segtoio \vq$, 
by \Cref{thm:io-reachable-segment-flux-bound},
$\vq_i \segto^{m+1} \vq$,
where $m = \min\{ |\Lambda_\quo|, |R_\quo| \}$,
and for some constant $K$ depending only on $\calC_\quo$,
the total reaction fluxes do not exceed $K$ along the entire path of $m+1$ segments.

Let $C = \max \{ \vr'(S) \mid S \in \Gamma_\unbdd, \langle \vr',\vp' \rangle \in R \}$ be the maximum reactant coefficient of any reactant in $\Gamma_\unbdd$ for any reaction in $R$.
Choose $i_1 = \max\{ C \cdot K, i_0 \}$.
Then $\vq_{i_1} \segto \vq$ with total reaction flux at most $K$,
since $i_1 \geq i_0$.

Consider running the same reaction fluxes on the original CRN, 
from the state $\vb_{i'}$,
choosing $i'$ such that $\vb_{i'} \upharpoonright \Lambda_\quo = \vq_{i_1}$.
By property~\eqref{gamma-inf-increases} above,
there is sufficient concentration $C \cdot K$ of each reactant in $\Gamma_\unbdd$ in state $\vb_{i'}$ for these reactions to remain applicable along the entire path.
Let $\vd \in \Rp^\Lambda$ be the state reached at the end of this path, then $\vb_{i'} \segto \vd$.
Note that $\vd \upharpoonright \Lambda_\quo = \vq$, 
and recall $\vq \upharpoonright \Delta = \vp$,
so $\vd \upharpoonright \Delta = \vp$.
Since $\vc \segto \vb_{i'}$,
by transitivity of $\segto$ we have $\vc \segto \vd$, 
proving the theorem.
\end{proof}

Note that by \Cref{thm-reachable-segment-bound},
if $\vc \segto \vp$ for a partial state $\vp$, 
then $\vc \segto^{m+1} \vp$, 
where $m = \min\{|\Lambda|,|R|\}$.

\end{document}